\documentclass[12pt,letterpaper]{article}

\usepackage[margin=1.25in]{geometry}
\usepackage{enumitem}
\usepackage{setspace}
\usepackage[utf8]{inputenc}
\usepackage[T1]{fontenc}
\usepackage{lmodern}
\usepackage{amsmath, amssymb, amsthm}
\usepackage{graphicx}
\usepackage{booktabs}
\usepackage{tabularx}
\usepackage{multirow}
\usepackage{bbm}
\usepackage{ifthen}
\usepackage{natbib}
\usepackage{hyperref}
\usepackage{capt-of}
\usepackage{pgfplots}
\usepackage{tikz}
\usepackage{graphicx, multicol, latexsym}
\usepackage{subfigure}
\pgfplotsset{compat=1.18}
\usetikzlibrary{arrows.meta, decorations.pathreplacing, patterns}
\usepackage{floatrow}
\usepackage{subcaption}

\hypersetup{
    colorlinks=true,
    linkcolor=black,
    citecolor=black,
    urlcolor=black
}

\DeclareMathOperator{\supp}{supp}

\newcommand{\BR}{\mathrm{BR}}

\newcommand{\mBox}[6]{
  \draw #1 node {#3};
  \ifthenelse{#4=1}{\draw #1+(#2,-0.05) node {$\Longrightarrow$};}{}
  \ifthenelse{#5=1}{\draw #1+(0,0.8) node {\rotatebox{-90}{$\Longrightarrow$}};}{}
  \ifthenelse{#6=1}{\draw #1+(2,0.8) node {\rotatebox{-90}{$\Longrightarrow$}};}{} }

\newtheorem{theorem}{Theorem}
\newtheorem{corollary}{Corollary}
\newtheorem{lemma}{Lemma}
\newtheorem{proposition}{Proposition}
\newtheorem{definition}{Definition}

\newtheorem{assumption}{Assumption}
\newtheorem{remark}{Remark}

\begin{document}

\title{Moral Hazard in Delegated Bayesian Persuasion}
\author{Wilfried Youmbi Fotso%
\thanks{Corresponding author. Department of Economics, Wilfrid Laurier
University (E-mail: wyoumbifotso@wlu.ca).}%
\quad Xun Chen%
\thanks{Department of Economics, University of Western Ontario
(E-mail: xche225@uwo.ca).}%
}
\date{\today}
\maketitle

\begin{abstract}
\singlespacing
We study delegated Bayesian persuasion: a principal incentivizes an intermediary to design information via outcome-contingent transfers, while the intermediary privately chooses the experiment subject to convex costs. We characterize first-best implementability through a pair of alignment conditions on the principal's and intermediary's payoff indices. A local condition on the support of the target experiment is necessary; a global
affine alignment condition is sufficient. We show that the gap between them is non-empty and provide a partial characterization of the intermediate region. When the first-best is unattainable, the principal's problem admits a virtual Bayesian persuasion representation: the second-best experiment maximizes the same concavified objective as the first-best, with the principal's payoff index distorted by a single scalar shadow price that summarizes the entire agency friction. Under entropy costs, moral hazard compresses posterior dispersion whenever the intermediary's utility differs across the actions it recommends. Explicit closed-form solutions for posteriors, mixing weights, and the optimal transfer schedule are derived for binary environments.
\end{abstract}

\bigskip
\noindent\textbf{Keywords:} Bayesian persuasion; information design;
delegation; moral hazard; costly information acquisition; virtual surplus;
contract theory.\\
\textbf{JEL Classifications:} D82, D83, D86, C72

\bigskip

\section{Introduction}

Bayesian persuasion studies how a sender can influence a receiver's action by
committing ex ante to an information structure \citep{Kamenica2011}. In the
canonical framework, the sender directly controls the design of signals. In
many economically relevant settings, however, information acquisition and
disclosure are delegated to an intermediary whose actions are costly and
privately chosen. Examples include regulators outsourcing analysis to
consultants, firms delegating market research to external agencies, rating
agencies producing credit assessments, or scientific funding bodies relying on
peer reviewers. In such environments, the principal cannot observe the
intermediary's choice of experiment and must provide incentives through
transfers contingent on the receiver's action. This introduces a moral-hazard
problem in persuasion: the intermediary selects the information structure after
observing the contract, and incentive constraints reshape optimal disclosure.

This paper studies Bayesian persuasion with moral hazard when information
design is outsourced. The principal values the receiver's induced action but
cannot directly choose the experiment; instead, she commits to a transfer
scheme contingent on the receiver's action, while the intermediary privately
selects an experiment subject to a convex information cost. The central
question is how this delegation friction affects the implementability of
first-best persuasion and the structure of optimal information provision when
first-best outcomes are unattainable.

Our first result characterizes first-best implementability. We show that
implementing the first-best persuasion outcome requires a precise structural
restriction on the transfer schedule: the gap between the mediator's and
principal's reduced-form payoff indices at each implemented posterior must
equal the transfer at the recommended action plus a single constant
(Proposition~1). This is a necessary condition. A stronger global affine
alignment condition between the two payoff indices everywhere on the simplex
is sufficient (Theorem~1). Together, these results bound the implementability
region from two sides. We provide a partial characterization of the
intermediate region, showing that it is non-empty and that whether first-best
is implemented within it depends on the off-support geometry of the cost
function; a complete characterization is left for future work. In generic
environments where the mediator's utility is not proportional to the
principal's gross payoff, moral hazard typically prevents first-best
implementation.

When first-best implementability fails, we characterize the second-best
outcome. We show that the principal's problem admits a \emph{virtual Bayesian
persuasion} representation (Theorem~2): the principal behaves as if she were
directly designing information, but with an objective function distorted by a
shadow cost proportional to the intermediary's valuation of posterior beliefs.
Four features make this representation non-trivial and useful. First, the
distortion operates entirely through the objective: the feasible set
$\mathcal{K}$ of Bayes-plausible experiments is unchanged, so the
constrained problem can be solved by the same concavification tools as an
unconstrained persuasion problem applied to the virtual index
$V^{\mathrm{dist}}$. Second, the shadow price $\gamma^{SB}$ is a scalar that
fully summarizes the agency friction, a consequence of the binding constraint
being the single participation inequality. Third, the representation is exact,
not an approximation: it holds for any convex lower semi-continuous cost,
any finite state space, and any transfer rule. Fourth, it identifies a
precise sense in which our moral-hazard problem is the hidden-action
counterpart of Myerson's virtual-surplus construction.

The virtual-persuasion formulation yields sharp implications for information
distortion. Under entropy-based information costs, moral hazard alters the
geometry of the persuasion problem by flattening the effective objective,
thereby reducing the gains from spreading posterior beliefs. In binary-state
environments with a binary-action receiver, this distortion takes a transparent
form: the optimal second-best experiment is supported on two posteriors and
exhibits strictly less posterior dispersion than the first-best benchmark
whenever the intermediary's participation constraint binds and the
action-dependent utility differential $D_v$ is positive
(Proposition~\ref{prop:two_state_distortion}). We provide a fully explicit
characterization of the second-best experiment, including closed-form
expressions for posterior endpoints and mixing probabilities
(Proposition~\ref{prop:two_state_explicit} and
Corollary~\ref{cor:shannon_closed_form}). The optimal transfer schedule
$t^{SB}$ is also characterized in closed form through a triangular system
(Corollary~\ref{cor:optimal_transfers}): the transfer gap is pinned down by
the incentive condition, the level by the binding participation constraint,
and the shadow price $\gamma^{SB}$ is recovered last. A numerical example in
Section~\ref{subsubsec:example} illustrates these results concretely.
Compression requires the intermediary to have an action-dependent utility
differential $D_v > 0$, in which case the distorted kink magnitude is
$\Delta^{SB} = \Delta^{FB} - \gamma^{SB} D_v$. Under the baseline
parameterization the posterior spread falls by approximately 19 percent
relative to first-best.

The paper clarifies how incentive constraints shape optimal information design when persuasion is delegated. Moral hazard does not restrict the set of feasible experiments, but instead distorts the objective function governing persuasion, leading to systematically coarser information. The framework provides a tractable theory of delegated persuasion applicable to regulation, consulting, credit rating, scientific peer review, and other settings in which information production is outsourced to incentivize intermediaries.

Our analysis is connected to, but differs from, existing work on persuasion
with costly or delegated information acquisition. We study an environment in
which the choice of the information structure itself is delegated to an
intermediary whose actions are privately chosen and must be incentivized. We
abstract from adverse selection about the intermediary's type and focus
instead on moral hazard in information design. This allows us to isolate how
hidden action, rather than private information, distorts persuasion and to
derive a virtual-persuasion representation of the second-best problem. The
representation is the hidden-action counterpart of Myerson's virtual-surplus
construction: where Myerson's distortion is derived from an envelope identity
that integrates information rents across types, ours is derived from Lagrangian
duality on a single participation constraint, producing a scalar shadow price
rather than a type-indexed distortion function.

The remainder of the paper is organized as follows.
Section~\ref{sec:lit} reviews the related literature and positions the paper
within the Bayesian persuasion, rational inattention, and contract theory
traditions.
Section~\ref{sec:model_delegated} introduces the model, states the maintained
assumptions, and defines the two first-best benchmarks.
Section~\ref{subsec:fb_impl} characterizes first-best implementability,
establishes the gap between the necessary and sufficient alignment conditions,
and illustrates the structure of the intermediate region through an explicit
example showing the role of the off-support geometry.
Section~\ref{subsec:virtual_persuasion} states and proves the main virtual
Bayesian persuasion representation (Theorem~\ref{thm:secondbest_distorted}),
establishes existence of the second-best optimum (Lemma~\ref{lem:existence_sb}),
and derives the general distortion result (Propositions~\ref{prop:general_distortion}
and~\ref{prop:piecewise_distortion}).
Section~\ref{subsec:entropy_distortion} specializes to entropy costs and the
binary-state binary-action environment, derives the closed-form posterior
compression result (Propositions~\ref{prop:two_state_distortion}
and \ref{prop:two_state_explicit}, and Corollaries~\ref{cor:shannon_closed_form}
and \ref{cor:optimal_transfers}), and presents the numerical example.
Section~\ref{subsec:discussion} collects the three main takeaways.
Section~\ref{sec:conclusion} concludes.
All proofs are gathered in the Appendix.

\section{Related Literature}
\label{sec:lit}

This paper contributes to the literature on Bayesian persuasion initiated by
\citet{Kamenica2011}, which shows that a sender can influence a receiver's
action by committing ex ante to an information structure. A central insight of
this literature is that persuasion problems can be reformulated as optimization
problems over distributions of posterior beliefs satisfying Bayes plausibility.
Subsequent work has extended this framework to environments with multiple
senders, strategic interaction, and constraints on information design
\citep{BergemannMorris2016}.

The current paper is closely related to recent studies that introduce incentive
problems into persuasion environments
\citep[see, e.g.,][]{whitmeyer2022buying,Bharadwaj2025contracting,
boleslavsky2018bayesian,Yoder2022,Deb2023}. We discuss the closest paper,
\citet{whitmeyer2022buying}, at the end of this section. \citet{Bharadwaj2025contracting}
studies contracting an intermediary who can affect a receiver's binary actions
via information disclosure, showing that the optimal incentive scheme is a
linear contract. In our setting, the receiver's action space is multi-valued
and the intermediary's problem is defined over posterior beliefs rather than
a binary decision. \citet{boleslavsky2018bayesian} analyze a setting in which
the sender designs the signal directly but must incentivize an agent whose
costly effort affects the state. In contrast, we study moral hazard in
information design itself: the intermediary privately chooses the information
structure subject to a convex cost, while the principal influences persuasion
only through transfers. As a result, the incentive problem operates directly
over posterior beliefs rather than over state-contingent actions. This
distinction underlies our characterization of first-best implementability and
our virtual-persuasion formulation of the second-best problem.

\citet{Yoder2022} studies delegation of information acquisition when a
principal contracts with a researcher who privately knows investigation costs.
\citet{Deb2023} analyzes a joint screening-and-persuasion problem in which a
privately informed agent endogenously determines which information the principal
can credibly disclose. In contrast to these two studies, we abstract from
private information about the intermediary's type and focus instead on moral
hazard in information design, allowing us to isolate how hidden action, rather
than adverse selection, distorts persuasion.

Our analysis also relates to work on mediated or indirect persuasion without
monetary transfers. \citet{Kosenko2020} studies persuasion with a mediator who
jointly chooses policies with the sender, while \citet{Arieli2022} examine
sequential mediation and information garbling. Unlike these papers, we allow
for explicit contracting and monetary transfers, and we focus on how incentive
constraints distort the intermediary's choice of information structure rather
than on strategic communication per se.

A related strand studies persuasion when information acquisition or disclosure
is costly. \citet{GentzkowKamenica2014} introduce costly Bayesian persuasion
and characterize optimal information provision under exogenous convex costs.
\citet{GandShmaya2019} study a model of costly persuasion via commitment to
information structures and show that the sender's optimal policy can be
characterized by a concavification over a modified value function; their
analysis is conducted without a delegation layer and without participation
constraints on the information producer. \citet{DworczakMartini2019} establish
a virtual-type representation of optimal persuasion when the receiver has
participation constraints, obtaining a distorted persuasion problem in which
the receiver's outside option acts as a virtual deduction from the sender's
payoff. Our virtual Bayesian persuasion representation is complementary: our
distortion operates on the information-production side (the intermediary's
participation constraint) rather than the receiver side, producing a scalar
shadow price rather than a receiver-type-indexed distortion. \citet{Kolotilin2018}
studies optimal persuasion with a heterogeneous receiver population, showing
that the sender's optimal experiment is characterized by a cutoff rule on the
receiver's private type; unlike our setting, there is no intermediary and no
information cost. \citet{LeTreustTomala2019} and \citet{DovalSkreta2018} analyze
persuasion under feasibility and capacity constraints, emphasizing how
restrictions on the set of attainable information structures shape optimal
disclosure. Our analysis shares with this literature the presence of convex
information costs, but differs in that the choice of information structure is
endogenously delegated to an intermediary whose actions are privately chosen
and must be incentivized.

The literature on rational inattention \citep[see, e.g.,][]{Sims2003,
CaplinDean2013, MatejkaMcKay2015, CaplinDeanMartin2022} provides
micro-foundations for the entropy-based information costs used in our
analysis. In that literature, agents optimally choose costly information
structures given posterior-separable cost functions. We import this cost
specification into a principal-agent framework, combining rational inattention
with delegated persuasion. This connection distinguishes our approach from
classical persuasion models with exogenous signal structures.

Methodologically, the paper connects Bayesian persuasion to contract theory.
When the intermediary's choice of experiment is unobservable, the principal
faces a moral-hazard problem in which information acquisition plays the role
of effort. Unlike classical principal-agent models with scalar effort choices
\citep{Holmstrom1991multitask}, the agent's action here is a choice over
distributions of posterior beliefs, which gives rise to a tractable geometric
characterization of incentive distortions in information design.

\paragraph{Direct comparison with \citet{whitmeyer2022buying}.}
\citet{whitmeyer2022buying} is the closest paper to ours and warrants a
detailed technical comparison. Whitmeyer and Zhang study a principal who
compensates an agent for acquiring information, where the agent chooses a
distribution over posteriors and is compensated via action-contingent transfers
subject to limited liability: $t(a) \ge 0$ for all $a$. Our model replaces
limited liability with a convex entropy cost $c(\tau)$. This difference in
cost specification produces qualitatively distinct solution structures.

In Whitmeyer and Zhang's model, the solution is driven by the
limited-liability constraint $t(a) \ge 0$, which may bind at the optimum,
particularly for the lowest-value action. The optimal transfer schedule is
determined by which liability constraints bind, and the solution may be
corner-based: the agent receives zero for some actions and positive transfers
for others, with the exact pattern depending on the ranking of the principal's
payoffs across actions. Changes in the principal's payoff function therefore
produce discrete jumps in the optimal experiment rather than smooth comparative
statics.

In our model, the binding constraint is the participation inequality
$\int V_M\,d\tau - c(\tau) \ge \bar{U}$, a single smooth inequality on the
integrated payoff net of a strictly convex cost. The Lagrangian relaxation of
this constraint produces a shadow price $\gamma^{SB}$ that enters the
distorted index $V^{\mathrm{dist}} = V_P - \gamma^{SB} V_M$ continuously.
Because $c$ is strictly convex and differentiable under entropy costs, the
first-order conditions for the mediator's problem are smooth in the transfer
schedule $t$, and the optimal experiment varies continuously with the
primitives. This is what makes the closed-form results in
Corollaries~\ref{cor:shannon_closed_form} and \ref{cor:blackwell_binary}
possible. The economic implication is also different: in our model, full
disclosure is typically not optimal under entropy costs when $\gamma^{SB} > 0$,
because the strictly convex cost penalizes extreme posteriors and the
participation constraint forces the principal to internalize this penalty
through the distorted index, always producing an interior compression. This is
the sense in which entropy costs and limited liability, while both generating
information compression, produce fundamentally different solution structures:
ours is smooth, interior, and characterized by a single scalar, while
Whitmeyer and Zhang's is corner-based, discrete, and characterized by a
pattern of binding constraints.

\section{Model}
\label{sec:model_delegated}

\subsection{Setup and Notation}
\label{subsec:setup}

We consider an economy with three players: a principal, a mediator, and a
receiver. The state space $\Omega$ is finite with common prior $\mu_0 \in
\Delta(\Omega)$. After information disclosure, a receiver chooses an action
$a \in A$. Following \citet{Kamenica2011}, we restrict attention to
straightforward recommendation schemes: each signal induces a posterior $\mu$
that recommends an action $a^*(\mu)$, and the receiver obeys in equilibrium.

Any information structure is equivalent to a Bayes-plausible distribution of
posteriors $\tau \in \Delta(\Delta(\Omega))$ satisfying
\[
\int \mu \, d\tau(\mu) = \mu_0.
\]
An experiment is summarized by such a $\tau$. Let $\mathcal{K}$ denote the set
of all Bayes-plausible distributions of posteriors.

\paragraph{The Mediator.}
A mediator privately chooses $\tau$ at cost $c(\tau)$, where
$c: \Delta(\Delta(\Omega)) \to \mathbb{R}_+$ is convex, lower semicontinuous, and normalized with $c(\delta_{\mu_0}) = 0$. A canonical example is the entropy-based cost:
\[
c(\tau) = H(\mu_0) - \int H(\mu) \, d\tau(\mu),
\]
where $H$ is a strictly concave uncertainty index such as Shannon entropy.
This specification, standard in rational inattention models
\citep{Sims2003,CaplinDean2013}, captures the idea that more dispersed
posteriors require more costly information acquisition.

\paragraph{The Principal.}
The principal commits ex ante to a transfer rule $t: A \to \mathbb{R}$. The
timing is: (1) principal commits to $t$; (2) mediator chooses $\tau$ and
pays $c(\tau)$; (3) signal realizes posterior $\mu$; (4) receiver plays
$a^*(\mu) \in \arg\max_a \mathbb{E}_\mu[u(a,\omega)]$; (5) transfer
$t(a^*(\mu))$ is paid.

The reduced-form payoff indices at posterior $\mu$ are:
\begin{align}
V_P(\mu;t) &:= \mathbb{E}_\mu[\pi_P(a^*(\mu),\omega)] - t(a^*(\mu)),
\label{eq:principal_payoff}\\
V_M(\mu;t) &:= t(a^*(\mu)) + \mathbb{E}_\mu[\tilde{v}(a^*(\mu),\omega)],
\label{eq:mediator_payoff}
\end{align}
where $\pi_P(a,\omega)$ is the principal's gross payoff and
$\tilde{v}(a,\omega)$ is the mediator's utility, which may depend on both
the recommended action and the realized state
(Assumption~\ref{ass:mediator_ql}). In the baseline case where $\tilde{v}$
is state-dependent only, $\tilde{v}(a,\omega)=\tilde{v}(\omega)$ and
$V_M$ simplifies to $t(a^*(\mu))+\mathbb{E}_\mu[\tilde{v}(\omega)]$.

The model is governed by three maintained assumptions.

\begin{assumption}[Receiver]
\label{ass:receiver}
$A$ is finite, $\mu \mapsto \mathbb{E}_\mu[u(a,\omega)]$ is continuous for
each $a$, and $a^*(\mu)$ exists for all $\mu$. If multi-valued, ties occur
on a null set or a measurable selection is fixed.
\end{assumption}

\begin{assumption}[Extended quasi-linear mediator]
\label{ass:mediator_ql}
The mediator is quasi-linear in transfers: $v(a,t,\omega) = t +
\tilde{v}(a,\omega)$ for some function $\tilde{v}: A\times\Omega \to
\mathbb{R}$. Thus
\[
V_M(\mu;t) = t(a^*(\mu)) + \mathbb{E}_\mu[\tilde{v}(a^*(\mu),\omega)].
\]
The function $\tilde{v}(\cdot,\cdot)$ is common knowledge. The baseline
case $\tilde{v}(a,\omega)=\tilde{v}(\omega)$ (state-dependent only, the
assumption of the original Kamenica-Gentzkow framework) delivers the core
results of Sections~\ref{subsec:fb_impl}--\ref{subsec:virtual_persuasion}.
The action-dependent case, in which the mediator's intrinsic utility depends
on both the action recommended and the state realized, is introduced in
Section~\ref{subsec:entropy_distortion} and is explicitly flagged where
applicable. The action-dependent preference differential is
\[
D_v := \bigl[\tilde{v}(a_1,1)-\tilde{v}(a_1,0)\bigr]
      -\bigl[\tilde{v}(a_0,1)-\tilde{v}(a_0,0)\bigr],
\]
which equals zero in the baseline state-dependent case and is positive when
the mediator has a stronger intrinsic preference for recommending $a_1$ in
state $1$ than in state $0$, relative to $a_0$.
\end{assumption}

\begin{assumption}[Subgradient representation]
\label{ass:subgradient}
For any optimal $\tau \in \mathcal{K}$, the subdifferential
$\partial c(\tau)$ contains a function $\psi: \Delta(\Omega) \to \mathbb{R}$
that is: (i) continuous and measurable with respect to the $\sigma$-algebra
generated by $a^*(\cdot)$ on $\supp(\tau)$; and (ii) integrable with respect
to all $\tau' \in \mathcal{K}$.
\end{assumption}

\subsection{Timing and Mediator's Problem}

With the environment in place, we describe the extensive form. The timing
of the game is:
\begin{enumerate}
    \item The principal commits to a transfer rule $t: A \to \mathbb{R}$.
    \item The mediator privately chooses an experiment $\tau \in \mathcal{K}$
    and incurs cost $c(\tau)$.
    \item A signal realizes, inducing posterior belief $\mu$.
    \item The receiver chooses $a^*(\mu) \in \arg\max_a
    \mathbb{E}_\mu[u(a,\omega)]$.
    \item The transfer $t(a^*(\mu))$ is paid to the mediator.
\end{enumerate}

Given $t$, the mediator solves
\begin{equation}
\label{eq:mediator_problem}
\max_{\tau \in \mathcal{K}} \ \int V_M(\mu;t) \, d\tau(\mu) - c(\tau)
\end{equation}
subject to the participation constraint
\begin{equation}
\label{eq:participation}
\int V_M(\mu;t) \, d\tau(\mu) - c(\tau) \ge \bar{U},
\end{equation}
where $\bar{U} \ge 0$ is the outside option. The principal chooses $t$ to
maximize her payoff subject to the mediator's best response $\tau \in \BR(t)$
and constraint~\eqref{eq:participation}.

\paragraph{Relation to the literature.}
This setup bridges two strands of work. \citet{bizzotto2020information} study
information design with agency where an agent chooses among signal structures
with moral hazard over compliance. We instead focus on costly information
acquisition with convex, belief-based costs, closer to the rational inattention
literature \citep{Sims2003, CaplinDean2013}. \citet{whitmeyer2022buying}
analyze a principal who purchases opinions from an agent who acquires
information; their agent faces risk neutrality and limited liability. Our
mediator faces convex information costs and is incentivized via
action-contingent transfers. The key distinction is that we embed agency in
the \citet{Kamenica2011} framework, which allows us to characterize the
second-best as a distorted persuasion problem with closed-form solutions under
entropy costs.

\subsection{First-Best Benchmarks}

Before characterizing implementability, we introduce two benchmark concepts
that serve as reference points throughout.

\begin{definition}[First-best with direct information choice]
\label{def:fb_direct}
The first-best experiment $\tau^{FB}$ solves
\begin{equation}
\label{eq:fb_direct}
\max_{\tau \in \mathcal{K}} \ \int V_P(\mu;0) \, d\tau(\mu) - c(\tau),
\end{equation}
where $V_P(\mu;0)$ is the principal's payoff index at zero transfers.
\end{definition}

This benchmark abstracts from agency costs and corresponds to a principal who
can directly choose the information structure without employing a mediator.

\begin{definition}[First-best with contractible information]
\label{def:fb_contractible}
If the experiment $\tau$ is verifiable and contractible, the principal can
implement any $\tau$ by offering $t(a) = T$ for all $a$ with $T$ sufficiently
large to satisfy participation, then selecting her preferred $\tau$ directly.
\end{definition}

When information choice is non-contractible, as in our main case, the
principal must design $t: A \to \mathbb{R}$ to align incentives.

\subsection{First-Best Implementability under Delegation}
\label{subsec:fb_impl}

We now address the central question of whether, and when, the first-best
experiment $\tau^{FB}$ can be implemented through a suitably designed
transfer schedule. This requires finding transfers $\tilde{t}$ such that
$\tau^{FB}$ solves the mediator's problem \eqref{eq:mediator_problem} and
yields the principal her first-best value.

\begin{theorem}[Sufficient conditions for first-best implementability]
\label{thm:fb_suff_support_alignment}
Maintain Assumptions~\ref{ass:receiver}--\ref{ass:mediator_ql}. Let $\Omega$
be finite, $c$ convex and lower semicontinuous  with $c(\delta_{\mu_0})=0$. Let $\tau^{FB}$
solve~\eqref{eq:fb_direct} with finite support
$\supp(\tau^{FB}) = \{\mu^1,\dots,\mu^m\}$, $a^k := a^*(\mu^k)$, and
$A^{FB} := \{a^1,\dots,a^m\}$. Suppose there exist constants $\alpha > 0$
and $\beta \in \mathbb{R}$ such that
\begin{equation}
\label{eq:global_alignment}
V_M(\mu;0) = \alpha\, V_P(\mu;0) + \beta \qquad \text{for all }
\mu \in \Delta(\Omega).
\end{equation}
Then there exists $\tilde{t}: A \to \mathbb{R}$ such that $\tau^{FB}$ is
optimal for the mediator under $\tilde{t}$, and an action-independent shift
of $\tilde{t}$ makes the participation constraint bind without affecting the
mediator's choice.
\end{theorem}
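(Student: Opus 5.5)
The plan is to reduce the mediator's problem under a suitably chosen $\tilde t$ to a positive affine transformation of the principal's first-best problem \eqref{eq:fb_direct}. Then $\tau^{FB}$, which solves the latter, automatically solves the former. Under Assumption~\ref{ass:mediator_ql} the mediator's objective under any $t$ is
\[
\int V_M(\mu;t)\,d\tau(\mu)-c(\tau)=\int V_M(\mu;0)\,d\tau(\mu)+\int t(a^*(\mu))\,d\tau(\mu)-c(\tau).
\]
By \eqref{eq:global_alignment} this equals
\[
\alpha\int V_P(\mu;0)\,d\tau+\beta+\int t(a^*(\mu))\,d\tau-c(\tau).
\]
In the case $\alpha=1$ the construction is immediate. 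Take $\tilde t\equiv T$, a constant. The mediator's objective is then $\bigl[\int V_P(\mu;0)\,d\tau-c(\tau)\bigr]+\beta+T$, so $\arg\max$ over $\mathcal K$ coincides with that of \eqref{eq:fb_direct} and contains $\tau^{FB}$.

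For general $\alpha>0$ I would argue through a first-order (supporting-hyperplane) characterization rather than an identity of objectives. The reason is that $\alpha$ rescales the benefit term but not the cost term, and a transfer depending only on the action cannot rescale $c$.

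\emph{First-order condition for the principal.} Since $\tau^{FB}$ maximizes the concave functional $\tau\mapsto\int V_P(\cdot;0)\,d\tau-c(\tau)$ over the convex set $\mathcal K$, convex duality gives the following. There exist $\psi\in\partial c(\tau^{FB})$, as in Assumption~\ref{ass:subgradient}, and an affine function $\ell$ on $\Delta(\Omega)$ such that $V_P(\mu;0)-\psi(\mu)\le\ell(\mu)$ for all $\mu$, with equality on $\supp(\tau^{FB})$.

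\emph{Sufficient condition for the mediator.} Conversely, $\tau^{FB}$ is a best response under $t$ if there is an affine $\ell_M$ with
\[
\alpha V_P(\mu;0)+\beta+t(a^*(\mu))-\psi(\mu)\le\ell_M(\mu)\quad\text{for all }\mu,
\]
with equality on the support. To see this, integrate against any $\tau\in\mathcal K$, use Bayes plausibility so that $\int\ell_M\,d\tau=\ell_M(\mu_0)$, and apply the subgradient inequality $c(\tau)\ge c(\tau^{FB})+\int\psi\,d(\tau-\tau^{FB})$.

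\emph{Choice of transfers.} I would try $\tilde t(a^k):=(1-\alpha)\bigl[\ell(\mu^k)-\psi(\mu^k)\bigr]$-type corrections on $A^{FB}$. Here Assumption~\ref{ass:subgradient}(i), which makes $\psi$ measurable with respect to $a^*$ on the support, is what lets $\psi$ be absorbed into an action-indexed transfer. Off $A^{FB}$ I would set $\tilde t(a)$ very negative, so those actions never bind.

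The main obstacle is verifying the inequality off the support, on posteriors $\mu\notin\supp(\tau^{FB})$ that still recommend actions in $A^{FB}$. There the residual $(1-\alpha)\bigl[V_P(\mu;0)-\psi(\mu)\bigr]$ is not constant within an action region. My first attempt would be to use global alignment on all of $\Delta(\Omega)$ together with $V_P-\psi\le\ell$. When $\alpha\le 1$, the residual is dominated by the affine term $(1-\alpha)\ell$, and I would choose $\ell_M:=\ell+\beta+(\text{affine adjustment})$. When $\alpha>1$, I would try the same route but expect that the sign has to be handled separately.

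The participation statement is routine. Shifting $\tilde t$ by an action-independent constant $K$ adds $K$ to the mediator's objective for every $\tau\in\mathcal K$, so it leaves $\BR(\tilde t)$ unchanged. Choosing
\[
K=\bar U-\Bigl[\int V_M(\mu;\tilde t)\,d\tau^{FB}-c(\tau^{FB})\Bigr]
\]
makes \eqref{eq:participation} hold with equality.
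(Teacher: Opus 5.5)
Your $\alpha=1$ argument is correct and complete. With a constant transfer, the mediator's objective equals the first-best objective plus a constant on all of $\mathcal K$. The action-independent shift then handles participation exactly as you say.

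The gap is the case $\alpha\neq1$. There you never pin down the transfers, and you leave the off-support inequality as an open ``obstacle.'' That obstacle cannot be removed. Transfers are constant on each action region, so they shift $V_M(\cdot;t)-\psi$ by a constant there but never change its slope. Replacing $V_P$ by $\alpha V_P$, by contrast, changes the slope on the region of $a$ by $(\alpha-1)\nabla_\mu\mathbb E_\mu[\pi_P(a,\cdot)]$. Your choice $\tilde t(a^k)=(1-\alpha)V_P(\mu^k;0)$ does give equality with $\ell+\beta$ on the support, which is more than the paper's $t(a^k)=V_P(\mu^k;0)$ achieves. Next to $\mu^k$, however, the residual $(\alpha-1)\bigl[V_P(\mu;0)-V_P(\mu^k;0)\bigr]$ is first order, while $V_P-\psi-\ell\le 0$ has no first-order term at $\mu^k$ when $\psi$ is differentiable there. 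So the inequality fails on one side of $\mu^k$ whatever the sign of $\alpha-1$, unless $\pi_P(a^k,\cdot)$ is constant. No other affine $\ell_M$ helps either: touching at two interior points of differentiability forces $(\alpha-1)\bigl(\pi_P(a^k,\cdot)-\pi_P(a^j,\cdot)\bigr)$ to be constant in $\omega$.

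Here is a concrete counterexample. Take $\Omega=\{0,1\}$, $\mu_0=\tfrac12$, and a receiver who plays $a_1$ iff $\mu\ge\tfrac12$. Let $\pi_P(a_0,\cdot)=0$, $\pi_P(a_1,0)=-1$, $\pi_P(a_1,1)=1$, and $\tilde v=2\pi_P$, so \eqref{eq:global_alignment} holds with $\alpha=2$, $\beta=0$; use Shannon cost. Then $V_P(\mu;0)=\max\{0,2\mu-1\}$ is $|\mu-\tfrac12|$ plus an affine term. The first-best therefore maximizes $\int(|\mu-\tfrac12|+H)\,d\tau$, and its unique solution is $\tfrac12\delta_{\mu_L}+\tfrac12\delta_{\mu_H}$ with $\mu_L=1/(1+e)$ and $\mu_H=e/(1+e)$; indeed $H'(\mu_L)=1=2+H'(\mu_H)$. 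Now fix any $t$ and consider the mean-preserving move to $\tfrac12\delta_{\mu_L+\varepsilon}+\tfrac12\delta_{\mu_H-\varepsilon}$. It leaves recommended actions, hence transfers, unchanged. It changes the mediator's payoff by $\tfrac{\varepsilon}{2}\bigl[H'(\mu_L)-4-H'(\mu_H)\bigr]+o(\varepsilon)=-\varepsilon+o(\varepsilon)$. So a small $\varepsilon<0$ is a strict improvement, and no transfer rule implements $\tau^{FB}$.

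Thus, with the action-dependent $\tilde v$ allowed by Assumption~\ref{ass:mediator_ql}, the statement is false for $\alpha\neq1$, and the paper's proof does not avoid this. Its Step~2 claims that global alignment makes $\eta+\mathbb E_{\mu^k}[\tilde v(a^k,\omega)]$ constant on the support. That quantity equals $\eta+\alpha V_P(\mu^k;0)+\beta$, which varies with $k$. Its Step~3 never checks off-support posteriors with $a^*(\mu)\in A^{FB}$, which is exactly your obstacle.

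What is provable is your $\alpha=1$ case, and more generally any case in which some $t$ makes $V_M(\cdot;t)-V_P(\cdot;0)$ affine. In the baseline state-dependent case, \eqref{eq:global_alignment} forces $V_P(\cdot;0)$ to be affine, so your constant-transfer argument already covers every $\alpha>0$.
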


The global alignment condition~\eqref{eq:global_alignment} requires that the
principal and the intermediary value posterior beliefs proportionally
everywhere on the simplex. This is a knife-edge requirement that holds when
the intermediary's state-dependent payoff $\tilde{v}(\omega)$ is proportional
to the principal's gross payoff $\pi_P(a^*(\mu),\omega)$, but fails
generically when their objectives diverge. The following result establishes
that some form of alignment is not merely sufficient but necessary.

\begin{proposition}[Necessity of local alignment]
\label{prop:necessity_alignment_support}
Maintain Assumptions~\ref{ass:receiver}--\ref{ass:mediator_ql}. Let $\tau^{FB}$
have finite support $\{\mu^1,\dots,\mu^m\}$. If $\tau^{FB}$ is implementable
under some $\tilde{t}$ with recommended action $a^k = a^*(\mu^k)$ on each
support point and $\tilde{t}$ constant on $A^{FB}$, then there exist
$\alpha = 1$ and $\beta \in \mathbb{R}$ such that
\begin{equation}
\label{eq:affine_alignment}
V_M(\mu^k;\tilde{t}) = V_P(\mu^k;\tilde{t}) + \beta
\qquad \forall k=1,\dots,m.
\end{equation}
More generally, without the constant-transfer restriction, implementability
requires
\begin{equation}
\label{eq:affine_alignment_general}
V_M(\mu^k;\tilde{t}) - V_P(\mu^k;\tilde{t}) = \tilde{t}(a^k) + \beta'
\qquad \forall k=1,\dots,m,
\end{equation}
for some constant $\beta' \in \mathbb{R}$.
\end{proposition}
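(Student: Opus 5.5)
The plan is to compare the first-order (supporting-hyperplane) characterizations of two problems at the same experiment $\tau^{FB}$. One is the principal's direct problem~\eqref{eq:fb_direct}. The other is the mediator's problem~\eqref{eq:mediator_problem} under $\tilde t$. Implementability means that $\tau^{FB}$ solves both, so their optimality conditions must hold simultaneously on $\supp(\tau^{FB})$. Subtracting them should then pin down the gap $V_M-V_P$ on the support.

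\emph{Step 1 (optimality conditions).} Both problems maximize a linear functional of $\tau$ minus the convex, lower semicontinuous $c$ over the convex set $\mathcal{K}$. I would invoke the standard Lagrangian/subgradient characterization, with the Bayes-plausibility constraint $\int\mu\,d\tau=\mu_0$ carrying a multiplier $\lambda\in\mathbb{R}^{\Omega}$ and total mass a multiplier $\kappa$. This gives a subgradient $\psi\in\partial c(\tau^{FB})$, taken of the form provided by Assumption~\ref{ass:subgradient} where available, such that
\[
V_P(\mu;0)-\psi(\mu)\le \lambda\cdot\mu+\kappa \ \ \forall\mu,\qquad \text{with equality at each }\mu^k .
\]
The analogous statement holds for the mediator, with $V_M(\cdot;\tilde t)$, some $\psi'\in\partial c(\tau^{FB})$, and multipliers $(\lambda',\kappa')$.

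\emph{Step 2 (subtract on the support).} I would choose the same subgradient $\psi=\psi'$ in both conditions. This is possible because both conditions are evaluated at the same $\tau^{FB}$, and it is immediate if $c$ is differentiable there. Subtracting the two equalities at $\mu^k$ gives
\[
V_M(\mu^k;\tilde t)-V_P(\mu^k;0)=(\lambda'-\lambda)\cdot\mu^k+(\kappa'-\kappa).
\]
Using the identity $V_P(\mu;\tilde t)=V_P(\mu;0)-\tilde t(a^*(\mu))$, the left side equals $V_M(\mu^k;\tilde t)-V_P(\mu^k;\tilde t)-\tilde t(a^k)$. This is exactly the form of~\eqref{eq:affine_alignment_general}, except that the constant $\beta'$ is still accompanied by a linear term in $\mu^k$. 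In the constant-transfer case $\tilde t\equiv T$ on $A^{FB}$, the term $\tilde t(a^k)=T$ is absorbed into the constant. This yields~\eqref{eq:affine_alignment} with $\alpha=1$.

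\emph{Step 3 (main obstacle: the linear term).} The hard part is removing $(\lambda'-\lambda)\cdot\mu^k$. My first attempt is to observe that the multipliers on Bayes plausibility are not unique. Any linear function $\ell\cdot\mu$ integrates to the same value $\ell\cdot\mu_0$ under every $\tau\in\mathcal{K}$. Hence adding $\ell\cdot\mu$ to $V_M$ changes neither the mediator's ranking of experiments nor the set of optima, and it can be shifted into the mediator's multiplier, choosing $\ell=\lambda-\lambda'$. I would therefore state and prove the alignment for $V_M$ normalized modulo linear functions of $\mu$, which leaves only the constant $\beta'$.

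If a literal pointwise statement is wanted, I would try a second route. I would exploit the mean-preserving redistributions of mass among the $\mu^k$, and among the $\mu^k$ and nearby off-support posteriors, that keep $\tau$ in $\mathcal{K}$. Optimality of $\tau^{FB}$ for both parties forces the directional derivatives of the two objectives to vanish together along these directions. Whenever such directions span the affine hull of the support, the linear term is forced to be constant on $\{\mu^k\}$. I expect this spanning argument to be the delicate point, since it can fail when the support points are affinely independent.
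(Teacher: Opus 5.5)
Your Steps~1--2 follow the paper's argument, done more carefully, and they stop exactly where a valid argument has to stop. The paper writes the two optimality conditions as $V_M(\mu^k;\tilde t)=\phi(\mu^k)+\eta_M$ and $V_P(\mu^k;0)=\phi(\mu^k)+\eta_P$. It uses a single $\phi\in\partial c(\tau^{FB})$ and \emph{scalar} constants, asserts without argument that ``the same subgradient'' serves both problems, and subtracts. That assertion is precisely what sets your $\lambda'=\lambda$. Any subgradient of $c$ relative to $\mathcal{K}$ stays a subgradient after adding an affine function of $\mu$, and the principal's and mediator's problems may need different shifts. So the genuine gap in your proposal is Step~3, and it cannot be closed: your own invariance observation refutes the statement as written. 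Route~2 fails for every support, not only affinely independent ones. Any perturbation $\nu$ of $\tau^{FB}$ within $\mathcal{K}$ has zero mass and zero mean, so $\int \ell\cdot\mu\,d\nu(\mu)=0$, and no directional derivative can detect the linear term. Route~1 proves a different (true) statement, alignment modulo linear functions, not the one asked.

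Concretely, keep the environment of Remark~\ref{rem:intermediate_example}, whose first-best support $\{\mu_L^{FB},\mu_H^{FB}\}$ has two distinct points. Let $\tilde v(a,\omega)=\pi_P(a,\omega)+g(\omega)$ with $g(1)\neq g(0)$, and $\tilde t\equiv T$. Then $V_M(\mu;T)=T+V_P(\mu;0)+\mathbb{E}_\mu[g]$, so on $\mathcal{K}$ the mediator's objective is the first-best objective plus the constant $T+\mathbb{E}_{\mu_0}[g]$. Hence the mediator's best responses are exactly the first-best experiments, and $T$ can be set so that participation binds. Adding a constant to $g$ even makes the principal's payoff equal her first-best value. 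Yet $V_M(\mu^k;T)-V_P(\mu^k;T)=2T+\mathbb{E}_{\mu^k}[g]$ differs across the two support points, violating both displayed conditions. In the state-dependent baseline the general condition fails the same way. Take $\pi_P(a,\omega)=\pi_P(a)$, $\tilde t(a)=\pi_P(a)$ and $\tilde v=g$: the mediator's objective is again the first-best one plus a constant, while the gap equals $\tilde t(a^k)+\mathbb{E}_{\mu^k}[g]$ whenever $\tau^{FB}$ is non-degenerate.

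What can be proved is your Step~2 conclusion, $V_M(\mu^k;\tilde t)-V_P(\mu^k;\tilde t)=\tilde t(a^k)+\beta'+\ell\cdot\mu^k$ for some vector $\ell$. Even that needs the common-subgradient step. Your justification, that ``both conditions are evaluated at the same $\tau^{FB}$'', does not deliver it when $\partial c(\tau^{FB})$ is not a singleton modulo affine functions of $\mu$. The step does hold for posterior-separable costs such as entropy: any two subgradients then agree on $\supp(\tau^{FB})$ up to an affine function of $\mu$, which your multipliers absorb.
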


Proposition~\ref{prop:necessity_alignment_support} shows that first-best
implementability imposes a precise structural restriction on the implemented
transfer schedule: the difference between the mediator's and principal's payoff
indices at each implemented posterior is pinned down by the transfer at the
recommended action and a single constant $\beta'$. This leaves the principal
no residual freedom to adjust relative payoffs across posteriors once the
transfer schedule is set.

Together, Theorem~\ref{thm:fb_suff_support_alignment} and
Proposition~\ref{prop:necessity_alignment_support} bound the implementability
region from two sides: global affine alignment is sufficient, while the
structural restriction~\eqref{eq:affine_alignment_general} is necessary.
The following result shows that the gap between these two conditions is
non-empty and characterizes the intermediate region.

\begin{proposition}[The intermediate region is non-empty]
\label{prop:intermediate_region}
Let $\Omega=\{0,1\}$, $A=\{a_0,a_1\}$, and suppose $\tilde{v}$ is
non-constant across states (so $\tilde{v}(a,1)\ne\tilde{v}(a,0)$ for some
$a$), giving $V_M(\mu;0)$ an affine component in $\mu$, while $V_P(\mu;0)$
has a kink at $\bar\mu\in(0,1)$. Then:
\begin{enumerate}
  \item [(i)] The global alignment condition~\eqref{eq:global_alignment} fails for
  any $\alpha>0$ and $\beta\in\mathbb{R}$, because no affine function can
  equal a piecewise-linear function with a non-trivial kink everywhere.
  \item [(ii)] Nevertheless, for any first-best support $\{\mu_L^{FB},\mu_H^{FB}\}$
  with $\mu_L^{FB}<\bar\mu<\mu_H^{FB}$, there exists a unique pair of
  transfers $(t_0^*,t_1^*)$ satisfying the necessary
  condition~\eqref{eq:affine_alignment_general} at both support points for
  any choice of $\beta'\in\mathbb{R}$.
  \item [(iii)] Whether $(t_0^*,t_1^*)$ implements $\tau^{FB}$ as the mediator's
  unique best response depends on off-support geometry: it holds when the
  mediator's payoff under any deviation experiment is dominated by the
  subgradient inequality globally. This off-support condition may or may not
  hold, so the intermediate region is populated by environments in which the
  necessary condition holds on the support while global alignment fails.
\end{enumerate}
\end{proposition}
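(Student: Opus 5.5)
The plan is to prove the three parts in order. Parts (i) and (ii) are direct computations in the binary environment. Part (iii) is handled with a subgradient verification argument, plus an instance on each side of the off-support condition. Throughout, I identify $\mu\in\Delta(\{0,1\})$ with $\mu=\Pr(\omega=1)\in[0,1]$. For each action $a$, the map $\mu\mapsto \mathbb{E}_\mu[f(a,\omega)]$ is then affine, equal to $f(a,0)+\mu\,[f(a,1)-f(a,0)]$. I take the kink of $V_P(\cdot;0)$ at $\bar\mu$ to be the receiver's switching point, so that $a^*(\mu)=a_0$ for $\mu<\bar\mu$ and $a^*(\mu)=a_1$ for $\mu>\bar\mu$.

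For (i), I compare one-sided slopes at $\bar\mu$. Condition~\eqref{eq:global_alignment} forces
\[
\partial^+V_M(\bar\mu;0)-\partial^-V_M(\bar\mu;0)=\alpha\bigl[\partial^+V_P(\bar\mu;0)-\partial^-V_P(\bar\mu;0)\bigr].
\]
The right-hand side is nonzero because $\alpha>0$ and the kink is non-trivial. In the baseline case $\tilde v(a,\omega)=\tilde v(\omega)$, $V_M(\cdot;0)=\mathbb{E}_\mu[\tilde v(\omega)]$ is affine, so the left-hand side vanishes, a contradiction. In the action-dependent case the left-hand side is the slope jump $D_v$. I will state explicitly that (i) then requires $D_v\neq\alpha\times$(kink of $V_P$). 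This is the reading of ``affine component'' in the statement, and I will flag it as the maintained hypothesis for (i).

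For (ii), I expand \eqref{eq:affine_alignment_general}. Using \eqref{eq:principal_payoff}--\eqref{eq:mediator_payoff}, the condition reads
\[
V_M(\mu^k;\tilde t)-V_P(\mu^k;\tilde t)=2\tilde t(a^k)+g_k,\qquad g_k:=\mathbb{E}_{\mu^k}\bigl[\tilde v(a^k,\omega)-\pi_P(a^k,\omega)\bigr].
\]
Setting this equal to $\tilde t(a^k)+\beta'$ gives $\tilde t(a^k)=\beta'-g_k$. Since $\mu_L^{FB}<\bar\mu<\mu_H^{FB}$, the two support points recommend the distinct actions $a_0$ and $a_1$. The system is therefore diagonal in $(t_0,t_1)$, with unique solution $t_0^*=\beta'-g_L$ and $t_1^*=\beta'-g_H$ for every $\beta'$.

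For (iii), I pick $\psi\in\partial c(\tau^{FB})$ as in Assumption~\ref{ass:subgradient}. Convexity of $c$ gives, for any $\tau'\in\mathcal{K}$,
\[
\int V_M\,d\tau'-c(\tau')\le \int V_M\,d\tau^{FB}-c(\tau^{FB})+\int (V_M-\psi)\,d(\tau'-\tau^{FB}).
\]
So $\tau^{FB}$ is a best response under $t^*$ whenever there is an affine $\lambda$ with $V_M(\mu;t^*)-\psi(\mu)\le\lambda(\mu)$ on $[0,1]$ and equality at $\mu_L^{FB},\mu_H^{FB}$. Bayes plausibility makes $\int\lambda\,d(\tau'-\tau^{FB})=0$. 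Uniqueness follows if the inequality is strict off the support and $c$ is strictly convex.

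The principal's first-order condition at $\tau^{FB}$ supplies an affine $\lambda_P\ge V_P(\cdot;0)-\psi$, tight on the support. I then decompose
\[
V_M(\mu;t^*)-\psi(\mu)=\bigl[V_P(\mu;0)-\psi(\mu)\bigr]+G(\mu),\qquad G(\mu)=t^*(a^*(\mu))+\mathbb{E}_\mu\bigl[\tilde v(a^*(\mu),\omega)-\pi_P(a^*(\mu),\omega)\bigr].
\]
By (ii), $G$ is affine on each action region and equals $\beta'$ at the support point in that region. The off-support condition is thus exactly that $\lambda_P+G$ is dominated by a single affine function touching at both support points, i.e., that the two affine pieces of $G$ remain below a common chord. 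This reduces (iii) to a geometric inequality comparing the slopes of $G$ with the curvature of $\psi$ near the support.

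To show both possibilities occur, I will use Shannon entropy costs, where $\psi=-H$ up to an affine term. One instance takes $\tilde v$ with small state-slopes, so $G$ is nearly flat and the inequality holds. The other takes a large slope of $\tilde v(a_1,\cdot)$, under which the mediator profits from pushing $\mu_H$ further out and the inequality fails. The main obstacle is this last step. The support points themselves must also satisfy the mediator's first-order (slope-matching) condition, so a single choice of $\beta'$ and primitives need not make $\tau^{FB}$ even locally optimal. I would first try a symmetric parameterization ($\mu_0=\bar\mu=1/2$) in which the support and slopes can be computed explicitly, and check tangency and global domination numerically-in-closed-form.
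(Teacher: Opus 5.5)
\textbf{Parts (i) and (ii), and where the gap is.} Your parts (i) and (ii) follow the paper's route: compare one-sided slopes at $\bar\mu$, and solve \eqref{eq:affine_alignment_general} action by action. They are, if anything, more careful than the paper's. The paper's proof of (i) asserts that $V_M(\cdot;0)$ is affine, which holds only in the baseline case you single out. Its formula in (ii) also has the sign of $\beta'$ flipped, whereas your $t_0^*=\beta'-g_L$, $t_1^*=\beta'-g_H$ is right. For (iii), the paper stops at the global subgradient inequality and merely asserts that it may or may not hold. You are right that explicit instances are needed, and your affine-$\lambda$ verification is the right tool. The gap is in your positive instance.

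\textbf{Why the positive instance fails.} On the region where $a$ is recommended, $G$ has slope $s_a:=[\tilde v(a,1)-\tilde v(a,0)]-[\pi_P(a,1)-\pi_P(a,0)]$. This gives $s_{a_1}-s_{a_0}=D_v-\Delta_P$, where $\Delta_P:=[\pi_P(a_1,1)-\pi_P(a_1,0)]-[\pi_P(a_0,1)-\pi_P(a_0,0)]$ is the kink of $V_P(\cdot;0)$. So small state-slopes of $\tilde v$ leave $G$ far from flat on at least one region.

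Moreover, ``nearly flat'' would not suffice anyway. Because $G(\mu_L^{FB})=G(\mu_H^{FB})=\beta'$, the only affine function through both contact points is $\lambda_P+\beta'$. The off-support condition therefore reads $G(\mu)-\beta'\le\lambda_P(\mu)-[V_P(\mu;0)-\psi(\mu)]$ for all $\mu$. Under Shannon cost, each $\mu^k$ lies in the open interior of its action region, where $V_P(\cdot;0)-\psi$ is differentiable and tangent to $\lambda_P$. So the right side vanishes to second order at $\mu^k$, while the left side equals $s_a(\mu-\mu^k)$ nearby.

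Hence $t^*$ implements $\tau^{FB}$ if and only if $s_{a_0}=s_{a_1}=0$ exactly. Curvature of $\psi$ can never absorb a first-order slope, so the reduction to ``slopes of $G$ versus curvature of $\psi$'' is incorrect. The slope-matching issue you flag at the end is decisive, not a technicality.

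\textbf{Conflict with your hypothesis for (i), and a working instance.} In the baseline case, $s_{a_1}-s_{a_0}=-\Delta_P\neq0$. Tangency of the mediator's objective at both interior posteriors forces $s_{a_0}=s_{a_1}$ for any transfer rule. So under Shannon cost, no baseline environment has an implementable first best at all.

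A correct positive instance must be action-dependent, e.g.\ $\tilde v(a,\omega)=\pi_P(a,\omega)+c_a$ with $c_{a_0}\neq c_{a_1}$. Then $g_L=c_{a_0}$ and $g_H=c_{a_1}$, so $t^*(a)=\beta'-c_a$ and $V_M(\cdot;t^*)=V_P(\cdot;0)+\beta'$ identically. The mediator's problem is the first-best problem up to a constant, so $\tau^{FB}$ is a best response, and the unique one when the first best is unique. Entropy cost is affine in $\tau$, so uniqueness comes from strict concavity of $H$ off the support, not from strict convexity of $c$.

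Global alignment fails in this instance only through the level mismatch $c_{a_0}\neq c_{a_1}$. The environment has $D_v=\Delta_P$, which is exactly what your condition for (i) rules out at $\alpha=1$. So you must prove (i) there by comparing levels on the two action regions, not slope jumps. If you want to stay in the baseline case, you need a cost whose potential is kinked at $\mu_L^{FB},\mu_H^{FB}$, so that its subdifferential can absorb nonzero $s_a$.

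\textbf{A caution about the necessary condition.} A common nonzero slope $s_{a_0}=s_{a_1}$ is harmless under other transfers. For example, $\tilde v(a,\omega)=\pi_P(a,\omega)+\omega$ is implemented by $t\equiv0$, though by no $t^*$ from (ii). This shows that \eqref{eq:affine_alignment_general} with a constant $\beta'$ rests on using the same subgradient for the principal's and the mediator's problems. Those subgradients can in fact differ by an affine function of $\mu$, so do not lean on \eqref{eq:affine_alignment_general} as a genuinely necessary condition.
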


The proof is in the Appendix. Part~(i) follows from the fact that no affine
function can equal a piecewise-linear function with a non-trivial kink
everywhere. Part~(ii) establishes the closed-form transfer formulas by direct
expansion of the necessary condition. Part~(iii) identifies the off-support
subgradient inequality as the additional condition that determines whether the
intermediate region is populated.

\begin{remark}[Illustration of the intermediate region]
\label{rem:intermediate_example}
To fix ideas, consider $\pi_P(a_1,1)=1$ with all other gross payoffs zero,
$\bar\mu=\tfrac{1}{2}$, $\mu_0=\tfrac{1}{2}$, $\tilde{v}\equiv 0$, and
Shannon entropy cost. Global alignment fails because $V_M(\mu;0)=0$
identically while $V_P(\mu;0)$ has a kink at $\tfrac{1}{2}$: no
$\alpha>0,\beta$ satisfies $0=\alpha V_P(\mu;0)+\beta$ for all $\mu$.
The first-best posteriors are $\mu_H^{FB}\approx 0.622$ and
$\mu_L^{FB}\approx 0.378$, and the part~(ii) transfers with $\beta'=0$
are $t_0^*=0$ and $t_1^*\approx 0.622$. Under these transfers the
mediator's objective has kink magnitude $t_1^*-t_0^*\approx 0.622$, so
the mediator's optimal experiment has posteriors $\mu_H^M\approx 0.569$
and $\mu_L^M\approx 0.431$, which differs from $\tau^{FB}$. This
illustrates the force of part~(iii): the necessary condition on the support
is satisfied, but the off-support check fails for this particular choice
of $\beta'$, so this environment lies outside the intermediate region. The
intermediate region is populated by environments where a $\beta'$ exists
for which the off-support check additionally succeeds.
\end{remark}

Proposition~\ref{prop:intermediate_region} shows that the necessary condition
of Proposition~\ref{prop:necessity_alignment_support} is strictly weaker than
global alignment: environments exist in which the necessary condition holds on
the support and first-best is implemented, yet global alignment fails
everywhere on the simplex. When neither condition is met, moral hazard
prevents first-best implementation and we turn to the second-best problem.

\subsection{Second-Best as a Distorted Persuasion Problem}
\label{subsec:virtual_persuasion}

When first-best is not implementable, the principal chooses
$t: A \to \mathbb{R}$ to maximize her payoff subject to the mediator's
incentive constraint $\tau \in \BR(t)$ and
participation~\eqref{eq:participation}. By Lagrangian duality on the
participation constraint, the second-best experiment solves a standard
persuasion problem with a distorted payoff index. Strong duality holds because
the no-information experiment $\delta_{\mu_0}$ satisfies the participation
constraint strictly for sufficiently high transfers (Slater's condition).

\begin{definition}[Distorted payoff index]
\label{def:distorted_index}
For transfer rule $t$ and scalar $\gamma \ge 0$, the \emph{distorted payoff
index} is
\begin{equation}
\label{eq:distorted_index}
V^{\mathrm{dist}}(\mu;t,\gamma) := V_P(\mu;t) - \gamma V_M(\mu;t).
\end{equation}
\end{definition}

The distorted index subtracts a scaled version of the mediator's posterior
payoff from the principal's, with shadow price $\gamma \ge 0$ capturing the
marginal cost of satisfying the mediator's participation constraint.

\paragraph{Structural analogy with Myerson's virtual surplus.}
The analogy with \citet{Myerson1981} is structurally precise in a specific
sense, though the two results differ in their source and depth. In both
settings, the designer's objective is distorted by subtracting a scaled version
of the agent's payoff from the designer's gross objective, without restricting
the feasible set. The resulting virtual objective has the same concavification
structure as the undistorted problem, and the distortion collapses to a scalar
that summarizes the agency cost. The analogy holds at the level of functional
form and feasibility preservation.

The key differences are equally important to state. Myerson's distortion arises
from an envelope identity under adverse selection: the designer must respect
incentive-compatibility constraints across a continuum of types, and the
distortion function $\psi(\theta) = \theta - (1-F(\theta))/f(\theta)$ is
type-indexed and requires knowledge of the type distribution. Our distortion
arises from Lagrangian duality under moral hazard: the designer faces a single
participation constraint, and the distortion collapses to a single scalar
shadow price $\gamma^{SB}$ that requires no type distribution information.
This simplification occurs precisely because moral hazard does not require
screening across types. The analogy therefore holds at the level of functional
form and feasibility preservation, but differs in proof technique, informational
requirements, and the dimensionality of the distortion. \cite{DworczakMartini2019}  develop a related virtual-type construction for persuasion problems with
receiver participation constraints; our representation is complementary and
operates on the information-production side rather than the receiver side.

\begin{remark}
Assumption~\ref{ass:subgradient} conditions the subgradient
$\psi \in \partial c(\tau)$ in two ways. The measurability condition in (i)
ensures that $\psi(\mu)$ is well-defined as a function of the recommended
action $a^*(\mu)$, required to implement the marginal cost of belief
perturbations via an action-contingent transfer schedule. Within the current
paper, where $\Omega$ is finite and optimal experiments have finite support,
measurability is automatically satisfied; the condition becomes binding in
extensions to continuous state spaces. The integrability condition in (ii)
ensures that $\int [V^{\mathrm{dist}}(\mu;t,\gamma) + \psi(\mu)]\,d\tau'(\mu)$
is finite for every $\tau' \in \mathcal{K}$, underpinning the distorted
persuasion representation in Theorem~\ref{thm:secondbest_distorted}. Both
conditions hold for entropy costs with $\psi(\mu) = -H(\mu)$ and more
generally for costs with smooth potential functions.
\end{remark}

We can now state the main characterization of the second-best problem. We
first establish existence of the second-best optimum, which requires care
because the principal's problem is bilevel.

\begin{lemma}[Existence of the second-best optimum]
\label{lem:existence_sb}
Maintain Assumptions~\ref{ass:receiver}--\ref{ass:subgradient}. Let $\Omega$
be finite and $c$ convex and lower semicontinuous  with $c(\delta_{\mu_0})=0$. Suppose
additionally that: (i) $\pi_P$ is bounded; (ii) there exists a transfer rule
$t^0$ and $\tau^0\in\mathcal{K}$ with $\int V_M(\mu;t^0)\,d\tau^0-c(\tau^0)
>\bar{U}$ (Slater feasibility); (iii) the transfer rule is restricted to a
compact set $\mathcal{T}\subset\mathbb{R}^{|A|}$. Then the principal's
problem admits an optimum $(t^{SB},\tau^{SB})$.
\end{lemma}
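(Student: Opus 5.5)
The plan is to treat the principal's bilevel problem as maximizing a value function over the compact set $\mathcal{T}$, and to apply a Weierstrass-type argument. The two ingredients are semicontinuity of the objective and closedness of the feasible graph.

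\textbf{Step 1 (compactness of the experiment space).} Since $\Omega$ is finite, $\Delta(\Omega)$ is a compact simplex. Hence $\Delta(\Delta(\Omega))$ is weak-$*$ compact by Prokhorov's theorem. Bayes plausibility, $\int\mu\,d\tau=\mu_0$, is preserved under weak-$*$ limits because $\mu\mapsto\mu$ is continuous and bounded. So $\mathcal{K}$ is compact, and $\mathcal{K}\times\mathcal{T}$ is compact in the product topology.

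\textbf{Step 2 (mediator's best response is nonempty and has closed graph).} For fixed $t$, the mediator's objective is
\[
\Phi(t,\tau)=\int V_M(\mu;t)\,d\tau-c(\tau).
\]
The map $\mu\mapsto V_M(\mu;t)$ is piecewise affine, since it is affine on each action region of $a^*$. It may jump on tie sets, so I will first pass to an upper semicontinuous version. The natural choice is the one that breaks ties in favour of the mediator (sender-preferred ties), which is standard in persuasion. With it, $\tau\mapsto\int V_M\,d\tau$ is upper semicontinuous. Adding $-c$, which is upper semicontinuous because $c$ is lower semicontinuous, makes $\Phi(t,\cdot)$ upper semicontinuous, so $\BR(t)\neq\emptyset$ on the compact set $\mathcal{K}$. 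The dependence on $t\in\mathbb{R}^{|A|}$ is linear, $\int V_M\,d\tau=\sum_a t(a)\,\tau(\{a^*=a\})+\dots$, and uniformly bounded on $\mathcal{T}$. The maximum-theorem argument (Berge) then gives a closed graph for $\BR$. Concretely: if $(t_n,\tau_n)\to(t,\tau)$ with $\tau_n\in\BR(t_n)$, then for any $\tau'$ we have $\Phi(t,\tau)\ge\limsup\Phi(t_n,\tau_n)\ge\lim\Phi(t_n,\tau')=\Phi(t,\tau')$. The last equality needs continuity of $\Phi(\cdot,\tau')$ in $t$, which holds by linearity.

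The participation set $\{(t,\tau):\Phi(t,\tau)\ge\bar U\}$ is closed by the same joint upper semicontinuity. Slater condition (ii) guarantees that the feasible set $\mathcal{F}=\{(t,\tau)\in\mathcal{T}\times\mathcal{K}:\tau\in\BR(t),\ \Phi\ge\bar U\}$ is nonempty, provided $t^0\in\mathcal{T}$, which I will take as part of (ii). $\mathcal{F}$ is then a nonempty closed subset of a compact set, hence compact.

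\textbf{Step 3 (principal's objective).} The principal maximizes
\[
\int\bigl[\mathbb{E}_\mu\pi_P(a^*(\mu),\omega)-t(a^*(\mu))\bigr]d\tau
\]
over $\mathcal{F}$. Here $\pi_P$ is bounded and $t$ ranges over a compact set, so the objective is bounded. If it is upper semicontinuous on $\mathcal{F}$, Weierstrass yields $(t^{SB},\tau^{SB})$.

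\textbf{Main obstacle.} The hardest step is reconciling the tie-breaking rules. Upper semicontinuity of the mediator's payoff wants ties broken for the mediator, while upper semicontinuity of the principal's payoff wants ties broken for the principal. The discontinuity of $a^*$ on indifference hyperplanes can make both fail at limits where posterior mass accumulates on a tie set.

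I would first handle this with Assumption~\ref{ass:receiver}: ties occur on a null set, so along the sequence I will show the limit $\tau$ puts no mass on tie sets. If this cannot be guaranteed, I would use the standard device of letting the tie-breaking selection be part of the equilibrium. That is, pass to the closure of the graph of $a^*$ and choose the selection at the limit in the principal's favour subject to the mediator remaining optimal. The mediator's indifference at ties then lets the principal-preferred selection be chosen without breaking $\BR$. The argument would close with a short subsequence extraction $(t_n,\tau_n)\to(t^{SB},\tau^{SB})$ attaining the supremum.
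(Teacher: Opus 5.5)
\textbf{The gap is at receiver ties.} Your architecture is the paper's: Prokhorov compactness of $\mathcal{K}$, a closed graph for $\BR$ via the chain $\Phi(t,\tau)\ge\limsup_n\Phi(t_n,\tau_n)\ge\Phi(t,\tau')$, and Weierstrass at the outer level. You handle Slater more carefully than the paper, which claims a nonempty constrained $\BR(t)$ for every $t\in\mathcal{T}$. You are also right that everything hinges on ties. The paper gets past them only by asserting that $V_M(\cdot;t)$ is continuous ``by Assumption~\ref{ass:receiver}''. That is false: the assumption gives continuity of $\mu\mapsto\mathbb{E}_\mu[u(a,\omega)]$, whereas $V_M(\cdot;t)$ jumps at $\bar\mu$ by $t(a_1)-t(a_0)+\mathbb{E}_{\bar\mu}[\tilde v(a_1,\cdot)-\tilde v(a_0,\cdot)]$. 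But your proposal leaves exactly this step open, and your first repair cannot work, because nothing stops experiments, or their limits, from putting an atom on the tie set.

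Take $|\Omega|=|A|=2$, Shannon cost, $\tilde v\equiv 0$, $\mu_0<\bar\mu$, and a large enough gap $t(a_1)-t(a_0)$. The mediator concavifies $t(a^*(\mu))+H(\mu)$, whose two pieces differ by a constant. A double interior tangency would then require $H'(\mu_L)=H'(\mu_H)$, and Shannon's infinite slopes at $0$ and $1$ rule out the endpoints. So the envelope at $\mu_0$ is attained only by splitting onto some $\mu_L<\mu_0$ and exactly $\bar\mu$. Best responses therefore put an atom on the tie, and if the fixed selection sends $\bar\mu$ to $a_0$, then $\BR(t)=\emptyset$. Hence the null-set clause buys nothing, and with an arbitrary fixed selection even the nonemptiness claim in your Step~2 fails.

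\textbf{The fallback is right in spirit but must be set up from the start.} Your justification for it is wrong. At a receiver tie the mediator is generally \emph{not} indifferent, since his payoffs differ by the jump above. Your pointwise mediator-preferred selection is also $t$-dependent, and it is precisely what breaks upper semicontinuity of the principal's payoff. The missing idea is to lift the mediator's choice to measures $\sigma$ on the closed graph $G=\{(\mu,a):a\in\arg\max_{a'}\mathbb{E}_\mu[u(a',\omega)]\}$ with Bayes-plausible $\mu$-marginal, which is a weak-$*$ compact set.
\begin{enumerate}
\item Each slice $G\cap(\Delta(\Omega)\times\{a\})$ is clopen in $G$.
\item Hence $\int[t(a)+\mathbb{E}_\mu\tilde v(a,\cdot)]\,d\sigma$ and $\int[\mathbb{E}_\mu\pi_P(a,\cdot)-t(a)]\,d\sigma$ are both jointly continuous in $(t,\sigma)$.
\item The map $\sigma\mapsto -c(\mathrm{marg}_\mu\sigma)$ is upper semicontinuous.
\item Your inequality chain then gives a closed graph for $\BR$, and the participation set is closed.
\item So $\mathcal{F}$ is compact, and nonempty once $t^0\in\mathcal{T}$ (lift $\tau^0$ through $a^*$).
\item The principal's objective is continuous on $\mathcal{F}$, so Weierstrass applies with no tie-breaking conflict.
\end{enumerate}
At the resulting optimum, the recommendation at ties is chosen by the mediator, and among his optimal choices it is the one favorable to the principal. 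Given the example above, some endogenous convention of this kind is needed for any argument of this type.
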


Lemma~\ref{lem:existence_sb} establishes that the principal's bilevel
problem choosing a transfer schedule $t$ while anticipating the mediator's
best response has a solution. The proof is in the Appendix and proceeds in
three steps: compactness of $\mathcal{K}$ via Prokhorov's theorem, upper
hemicontinuity of $\BR(\cdot)$ via the Berge maximum theorem, and existence
of the outer optimum via Weierstrass's theorem. The key difficulty relative to
a standard one-level program is that the mediator's best-response
correspondence $\BR(\cdot)$ may be multi-valued and is not in general convex,
so classical convex optimization arguments do not apply directly. The
three-step structure avoids this difficulty by establishing continuity of
$\BR(\cdot)$ without requiring convexity, then applying the maximum theorem
to the compact transfer set $\mathcal{T}$.

\begin{remark}
Assumption~(iii) restricts transfers to a compact set $\mathcal{T}=[-M,M]^{|A|}$
for some $M>0$, which is without loss of generality: since $V_M$ is bounded
by assumption~(i), any transfer $t(a)$ with $|t(a)|>M$ for sufficiently large
$M$ yields a mediator payoff no greater than the degenerate experiment
$\delta_{\mu_0}$, so the principal cannot gain from leaving $\mathcal{T}$.
The scalar $M$ need not be specified; it is only required to exist. All
subsequent results invoke Lemma~\ref{lem:existence_sb} implicitly.
\end{remark}

\begin{theorem}[Second-best as distorted persuasion]
\label{thm:secondbest_distorted}
Maintain Assumption~\ref{ass:mediator_ql}. Let $\Omega$ be finite, the cost $c$ be
convex and lower semicontinuous with $c(\delta_{\mu_0})=0$. Suppose the principal's
problem admits an optimum $(t^{SB},\tau^{SB})$. Then there exists
$\gamma^{SB} \ge 0$ such that $\tau^{SB}$ solves
\begin{equation}
\label{eq:distorted_persuasion}
\max_{\tau \in \mathcal{K}} \ \int V^{\mathrm{dist}}(\mu;t^{SB},\gamma^{SB})
\, d\tau(\mu) - c(\tau).
\end{equation}
If participation binds at the optimum, then $\gamma^{SB} > 0$.
\end{theorem}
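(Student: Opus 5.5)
The plan is to fix $t^{SB}$ and study the principal's residual choice over $\tau\in\mathcal{K}$. I would reduce it to a concave program with one inequality constraint, dualize that constraint, and read $\gamma^{SB}$ off as the multiplier.

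\textbf{Step 1 (reduction).} For fixed $t^{SB}$, the pair $(t^{SB},\tau^{SB})$ is optimal. Hence $\tau^{SB}$ maximizes $\int V_P(\mu;t^{SB})\,d\tau$ over the set $F(t^{SB})$ of experiments that are best responses and satisfy~\eqref{eq:participation}. Because $t$ enters $V_P$ and $V_M$ only through $t(a^*(\mu))$, an action-independent shift $t\mapsto t+s$ has two effects. It lowers $\int V_P\,d\tau$ by $s$. It raises the mediator's value by $s$. It leaves $\BR(t)$ unchanged. This is the same device used in Theorem~\ref{thm:fb_suff_support_alignment}.

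\textbf{Step 2 (concave relaxation).} The map $\tau\mapsto \int V_M\,d\tau - c(\tau)$ is concave and upper semicontinuous on the convex set $\mathcal{K}$, since $c$ is convex and lower semicontinuous. I would replace the best-response requirement by the single inequality $\int V_M(\mu;t^{SB})\,d\tau - c(\tau)\ge \bar U$. This gives a concave program in $\tau$ with linear objective and one convex constraint. Slater's condition holds: after a positive shift $s$, the no-information experiment $\delta_{\mu_0}$ satisfies the constraint strictly, and the shift does not alter $\BR$.

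\textbf{Step 3 (Lagrangian and normalization).} Strong duality then yields a multiplier $\lambda\ge 0$ such that $\tau^{SB}$ maximizes $\int V_P\,d\tau+\lambda\bigl(\int V_M\,d\tau-c(\tau)\bigr)$. I would then use that $\tau^{SB}\in\BR(t^{SB})$, so $\tau^{SB}$ also maximizes $\int V_M\,d\tau - c(\tau)$. Any nonnegative combination of these two concave objectives, both maximized at $\tau^{SB}$, is still maximized at $\tau^{SB}$.

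The main obstacle is to combine them so that the result takes exactly the form $\int (V_P-\gamma V_M)\,d\tau - c(\tau)$, with unit coefficient on $c$ and a nonpositive coefficient on $V_M$.
\begin{itemize}
\item \emph{First attempt.} Work at the level of the first-order characterization. Using Assumption~\ref{ass:subgradient}, pick $\psi\in\partial c(\tau^{SB})$ from the mediator's optimality condition, so that $V_M-\psi$ is maximal (hence constant) on $\supp(\tau^{SB})$ and dominated off the support.
\item \emph{Choice of $\gamma$.} Choose $\gamma^{SB}$ so that $V_P-\gamma^{SB}V_M-\psi$ inherits the same concavification (supporting hyperplane) property. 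This needs the principal's own optimality condition in $t$, obtained by perturbing $t(a)$ action by action, which pins down $\gamma^{SB}$ as a ratio of marginal effects.
\item \emph{Fallback.} If this identification fails in general, I would define $\gamma^{SB}$ directly as the normalized multiplier and verify the supporting-hyperplane inequality $\int (V^{\mathrm{dist}}+\psi)\,d\tau'\le\int (V^{\mathrm{dist}}+\psi)\,d\tau^{SB}$ for all $\tau'\in\mathcal{K}$. Integrability in Assumption~\ref{ass:subgradient}(ii) makes this inequality well defined.
\end{itemize}

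\textbf{Step 4 (strict positivity).} Suppose participation binds but $\gamma^{SB}=0$. Then the constraint carries no shadow value, so a small negative shift $s<0$ of $t^{SB}$ would keep $\BR$ unchanged and raise the principal's payoff by $|s|$. By Step 1, however, such a shift changes the mediator's value by $s<0$, violating the binding constraint. So a binding constraint must carry a positive shadow value, and combining this with complementary slackness pins down $\gamma^{SB}>0$.
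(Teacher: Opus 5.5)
\textbf{Gap: the normalization in Step 3 cannot be carried out, and the statement is false as written.}

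Your route is the paper's route: dualize participation at fixed $t^{SB}$, get Slater from a uniform shift of transfers, and get $\gamma^{SB}>0$ from complementary slackness. The step you flag as the main obstacle is a genuine gap that cannot be closed. Any nonnegative combination of $\int V_P\,d\tau$ and $\int V_M\,d\tau-c(\tau)$ gives $\int V_M\,d\tau$ the same weight as $-c(\tau)$. So a multiplier $\lambda$ together with $\tau^{SB}\in\BR(t^{SB})$ makes $\tau^{SB}$ optimal for $\int (V_P+\Lambda V_M)\,d\tau-\Lambda c(\tau)$ with $\Lambda\ge\lambda$. It never makes it optimal for $\int (V_P-\gamma V_M)\,d\tau-c(\tau)$ with $\gamma\ge 0$. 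Your fallback simply asserts the supporting-hyperplane inequality, which is the entire content.

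The paper's proof reaches the target form only through a sign slip. It expands $\int V_P\,d\tau+\gamma(\int V_M\,d\tau-c(\tau)-\bar U)$ as $\int (V_P-\gamma V_M)\,d\tau-\gamma c(\tau)-\gamma\bar U$, and then replaces $\gamma c$ by $c$.

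Two other steps of yours also fail as written.
\begin{itemize}
\item \emph{Step 2 (Slater).} At any optimum the mediator's maximal value under $t^{SB}$ equals $\bar U$; otherwise the principal lowers all transfers uniformly. Hence your relaxed feasible set is exactly $\BR(t^{SB})$, so the relaxation is exact. But for the same reason no experiment satisfies participation strictly at $t^{SB}$. A shifted $t$ defines a different program, so Slater is unavailable exactly where you need the multiplier.
\item \emph{Step 4 (positivity).} Complementary slackness only gives ``slack $\Rightarrow$ zero multiplier,'' not the converse. Your shift argument perturbs $t$, whereas $\gamma$ is a multiplier of the fixed-$t$ program in $\tau$.
\end{itemize}

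\textbf{Counterexample.} Take $\Omega=\{0,1\}$, the receiver of the numerical example ($\bar\mu=\tfrac{1}{2}$), $\mu_0=0.45$, Shannon cost, $\bar U=0$, $\tilde v\equiv 0$, and $\pi_P(a_1,\omega)=1$, $\pi_P(a_0,\omega)=0$.
\begin{itemize}
\item Participation gives $\int t(a^*(\mu))\,d\tau\ge c(\tau)$, so the principal earns at most $\max_\tau[\tau(\{\mu\ge\bar\mu\})-c(\tau)]$.
\item The residual-claimant contract $t(a_1)=t(a_0)+1$, with $t(a_0)$ set so participation binds, attains this bound. The mediator's unique best response is $\tau^{FB}$.
\item $\tau^{FB}\neq\delta_{\mu_0}$: splitting $0.45$ into $\{0.3,0.5\}$ already beats no information.
\item Under this contract, $V_P(\mu;t^{SB})\equiv -t(a_0)$ and $V_M(\mu;t^{SB})=t(a_0)+\mathbf{1}[\mu\ge\bar\mu]$.
\item So for every $\gamma\ge 0$ the distorted objective equals $-(1+\gamma)t(a_0)-\gamma\,\tau(\{\mu\ge\bar\mu\})-c(\tau)$. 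Its unique maximizer is $\delta_{\mu_0}\neq\tau^{SB}$, even though participation binds.
\end{itemize}
In this example $\lambda=0$ is a valid multiplier for the fixed-$t$ program despite binding participation. What $\tau^{SB}$ actually solves is the correctly signed form with $\Lambda=1$, namely $\int (V_P+V_M)\,d\tau-c(\tau)$.
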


\begin{remark}[Uniqueness of the second-best optimum]
\label{rem:uniqueness_general}
Theorem~\ref{thm:secondbest_distorted} establishes existence and the
distorted-persuasion representation but does not in general guarantee
uniqueness of $(t^{SB},\tau^{SB})$. At the general level uniqueness fails:
the outer problem in $t$ is not necessarily convex, and multiple transfer
rules may induce the same distorted experiment. Uniqueness of the optimal
experiment $\tau^{SB}$ for a given $(t^{SB},\gamma^{SB})$ follows from
strict concavity of the distorted objective whenever $c$ is strictly convex
(as under entropy costs), since the concave-envelope maximizer is then unique.
Uniqueness of the optimal transfer rule $t^{SB}$ requires additional
structure. In the binary-state binary-action model under Shannon entropy, full
uniqueness holds: the tangency system of
Proposition~\ref{prop:two_state_explicit} has a unique solution
$(\mu_L^{SB},\mu_H^{SB})$, the mixing weight $p^{SB}$ is uniquely pinned by
Bayes plausibility, and the transfer schedule $(t_0^{SB},t_1^{SB})$ is
uniquely recovered from the triangular system of
Corollary~\ref{cor:optimal_transfers}. This is established formally in
Remark~\ref{rem:uniqueness}.
\end{remark}

Theorem~\ref{thm:secondbest_distorted} establishes that the second-best
experiment solves a standard Bayesian persuasion problem with a modified payoff
index but the same feasible set and concavification structure as the
unconstrained benchmark. This reduction has three concrete consequences.

\emph{First}, the bilevel program collapses to a single-level program. The
original problem is bilevel: the principal chooses $t$, anticipating that the
mediator best-responds with $\tau \in \BR(t)$. Theorem~\ref{thm:secondbest_distorted}
shows that the entire bilevel structure reduces to a single concavification
problem. The principal's optimal transfer rule $t^{SB}$ and shadow price
$\gamma^{SB}$ are jointly determined at the optimum, but the structure of the
optimal experiment is fully characterized by this scalar-distorted problem.

\emph{Second}, all existing tools of Bayesian persuasion apply directly.
Because the second-best problem is a standard persuasion problem with
objective $V^{\mathrm{dist}}$, the full machinery of concavification applies:
the optimal experiment places mass on the extreme points of the concave
envelope of $V^{\mathrm{dist}}(\mu;t^{SB},\gamma^{SB}) + \psi(\mu)$, where
$\psi$ is the cost subgradient.

\emph{Third}, the price of delegation is a single sufficient statistic.
The scalar $\gamma^{SB} > 0$ fully summarizes how much the agency friction
costs the principal. Any two delegation environments that produce the same
shadow price $\gamma^{SB}$ induce identical distortions to the persuasion
geometry, making $\gamma^{SB}$ potentially estimable from observables.

Having established the distorted persuasion representation, we next examine
its consequences for information provision.

\begin{proposition}[General distortion of concavification: smooth case]
\label{prop:general_distortion}
Suppose $V_P(\mu;t)$ and $V_M(\mu;t)$ are twice continuously differentiable
in $\mu$. Let $W(\mu) = V^{\mathrm{dist}}(\mu;t^{SB},\gamma^{SB})$ and let
$\psi \in \partial c(\tau)$ be any subgradient of the cost function. An
increase in the shadow price $\gamma > 0$ weakly reduces the curvature of
the effective objective $W(\mu) + \psi(\mu)$ wherever $V_M(\mu;t)$ is more
concave than $V_P(\mu;t)$ (i.e., $V_M''(\mu) < V_P''(\mu)$). Consequently,
the second-best experiment $\tau^{SB}$ induces (weakly) less posterior
dispersion than the first-best experiment $\tau^{FB}$ in the sense of
second-order stochastic dominance, whenever the mediator values dispersion
more than the principal.
\end{proposition}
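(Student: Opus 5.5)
The plan is to treat the second-best as the first-best problem with a perturbed payoff index, and to compare the two solutions through the curvature of the effective objective. By Theorem~\ref{thm:secondbest_distorted}, $\tau^{SB}$ solves $\max_{\tau\in\mathcal{K}}\int W\,d\tau - c(\tau)$ with $W = V_P - \gamma^{SB} V_M$. By Definition~\ref{def:fb_direct}, $\tau^{FB}$ solves the same program with $V_P(\cdot;0)$ in place of $W$. The objective $\int W\,d\tau - c(\tau)$ is concave in $\tau$. At an optimum, the subgradient $\psi\in\partial c(\tau)$ from Assumption~\ref{ass:subgradient} turns the first-order condition into a concavification statement: $\tau$ is supported on the contact set of $W+\psi$ with its concave envelope, where $\psi$ enters as the linearised cost. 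For posterior-separable costs this is $\psi=-H$ up to an affine term.

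The first step is pointwise curvature. Since $V_P$ and $V_M$ are $C^2$, differentiating in $\gamma$ gives $\partial_\gamma (W+\psi)'' = -V_M''$. The step I need to control is relative curvature, that is, the curvature of $W+\psi$ compared with that of $V_P+\psi$ along the family $\gamma\mapsto V_P-\gamma V_M$ as $\gamma$ rises from $0$ to $\gamma^{SB}$. Write $W+\psi = (1-\gamma)(V_P+\psi) + \gamma\bigl[(V_P - V_M) + \psi\bigr]$, or more simply $W'' = V_P'' - \gamma V_M''$. On the region $V_M'' < V_P''$, increasing $\gamma$ makes the principal's convexity (which is what rewards spreading) relatively weaker than the cost's strict concavity $-H''$. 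This is the sense of ``weakly reduces curvature'' I would formalise. Concretely, I would show that the set where $W+\psi$ lies strictly below its concave envelope, the ``gap'' region, expands monotonically in $\gamma$.

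The second step converts this into dispersion. I would use a standard monotone-comparative-statics argument for concavification. If $f_\gamma$ is a family with $f_{\gamma}-f_{\gamma'}$ concave for $\gamma<\gamma'$, then the optimal splits are ordered in the convex order: the optimal $\tau_{\gamma'}$ is a mean-preserving contraction of $\tau_\gamma$. In one dimension I would verify this via the tangency conditions. The support points solve $f'(\mu_L)=f'(\mu_H)$ together with the chord equation, and the implicit function theorem shows the endpoints move toward $\mu_0$ when the added term $-(\gamma'-\gamma)V_M$ has curvature favouring contraction. Since both experiments are Bayes-plausible with mean $\mu_0$, a two-point support nested inside another two-point support gives SOSD directly. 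In higher dimensions I would use the dual characterisation: $\tau\succeq_{cx}\tau'$ iff $\int g\,d\tau\ge\int g\,d\tau'$ for all convex $g$. I would then compare value functions using optimality of each experiment against the other, which yields $\int (V_P - W)\,d(\tau^{FB}-\tau^{SB})\ge 0$, i.e.\ $\gamma\int V_M\,d(\tau^{FB}-\tau^{SB})\ge 0$ after accounting for the $t$-terms.

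The main obstacle is this last step. The hypothesis is local (pointwise $V_M''<V_P''$), but SOSD is a global order, and the revealed-preference inequality only controls $V_M$ rather than every convex test function. I also have to handle the fact that $V_P(\cdot;t^{SB})$ and $V_P(\cdot;0)$ differ by the transfer term $-t(a^*(\mu))$, which is piecewise constant and hence kinked at action boundaries. I would first try to absorb it by noting that the transfers enter $W$ as $-(1+\gamma)t(a^*(\mu))$ and are locally constant, so they do not affect $C^2$ curvature off the boundaries. I would then restrict the SOSD conclusion to the case where the condition ``the mediator values dispersion more than the principal'' is read as $V_M-V_P$ convex on the relevant hull. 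That makes $-\gamma(V_M-V_P)$ concave and puts the concave-perturbation comparative statics directly in force.
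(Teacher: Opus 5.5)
Your outline follows the paper's route: differentiate the effective objective in $\gamma$, then compare concavifications. The step meant to deliver second-order stochastic dominance does not go through, and the problem starts with the sign condition. Since $\partial_\gamma (W+\psi)''=-V_M''$, only the sign of $V_M''$ matters; how $V_M''$ compares with $V_P''$ is irrelevant.

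The first-best and second-best effective objectives differ by $h:=W-V_P(\cdot;0)=-\gamma^{SB}V_M(\cdot;t^{SB})-t^{SB}(a^*(\cdot))$. What delivers compression is $h$ concave, i.e.\ $V_M(\cdot;t^{SB})$ convex.

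\begin{itemize}
\item \emph{Your lemma is reversed.} If $f_\gamma-f_{\gamma'}$ is concave, then $f_{\gamma'}$ is $f_\gamma$ plus a convex function, which rewards spreading.
\item \emph{The gap-region claim is reversed too.} A larger gap component around $\mu_0$ is a wider split, not a narrower one.
\item \emph{Your final restriction does not work.} Assuming $V_M-V_P$ convex does not make $h$ concave. Writing $W=(1-\gamma)V_P-\gamma(V_M-V_P)$ exhibits $W$ as a perturbation of $(1-\gamma)V_P$, not of $V_P$. Shrinking $V_P$ against an unscaled cost is a concave perturbation only if $V_P$ is convex. Concretely, take $V_P''<V_M''<0$, so $V_M-V_P$ is convex. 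With entropy cost, $V_P+H$ is strictly concave, so $\tau^{FB}=\delta_{\mu_0}$. For large $\gamma$, $W+H$ is strictly convex near $\mu_0$, so $\tau^{SB}$ is non-degenerate, i.e.\ more dispersed.
\item \emph{Local constancy does not dispose of the transfer term.} Concavification is global, and a step function that jumps at action boundaries can move the split in either direction. What does dispose of it is the smoothness hypothesis read at both $t=0$ and $t=t^{SB}$, which you need anyway to speak of first-best curvature. Then $t^{SB}(a^*(\cdot))=V_P(\cdot;0)-V_P(\cdot;t^{SB})$ is continuous and finitely valued on the connected simplex, hence constant. It therefore integrates to the same value under every $\tau\in\mathcal{K}$.
\end{itemize}

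Even with $h$ concave, the step from a concave perturbation to a mean-preserving contraction is available only for $|\Omega|=2$ and a posterior-separable cost. For general convex $c$ the subgradient depends on $\tau$, so the two effective objectives differ by more than $h$.

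In one dimension, use a global argument rather than the implicit function theorem, which follows one smooth branch and misses jumps of the optimal split. Let $L^{FB},L^{SB}$ be the supporting lines. Then $\phi:=h-(L^{SB}-L^{FB})$ is concave, with $\phi\le 0$ at $\mu_L^{FB},\mu_H^{FB}$ and $\phi\ge 0$ at $\mu_L^{SB},\mu_H^{SB}$. This forces $[\mu_L^{SB},\mu_H^{SB}]\subseteq[\mu_L^{FB},\mu_H^{FB}]$ unless the first-best optimum is non-unique. Nested two-point supports with a common mean are then ordered in the convex order.

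For $|\Omega|\ge 3$ the step is false. Put four candidate posteriors $A,B,C,D$ at the corners of a square centred at $\mu_0$. Let the first-best objective favour the diagonal $\{A,C\}$ over $\{B,D\}$ by $\varepsilon$ and be very low elsewhere. Add the concave $h(x)=-k\langle x-\mu_0,A-\mu_0\rangle^2$ with $k$ large. The optimum switches to $\{B,D\}$, and the two experiments are not comparable in the convex order.

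The paper's own proof does not close these gaps either. It infers a mean-preserving contraction from the concave envelope of $f^{\gamma_2}$ lying below that of $f^{\gamma_1}$, which says nothing about dispersion. Under its hypothesis $V_M''<0$, the change $-\Delta\gamma\,V_M''$ adds convexity, which by the one-dimensional argument widens rather than narrows the split. What can be proved along these lines is the binary-state, posterior-separable case with $V_M(\cdot;t^{SB})$ convex.
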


Proposition~\ref{prop:general_distortion} covers environments with smooth
payoff functions. It identifies the geometric mechanism through which moral
hazard distorts information provision: a higher shadow price flattens the
effective objective by subtracting a more concave term, reducing the curvature
that makes dispersed posteriors attractive. The condition that $V_M$ is more
concave than $V_P$ formalizes the sense in which the mediator values posterior
dispersion more than the principal, and it is the hypothesis under which the
second-order stochastic dominance conclusion holds. When this condition fails,
the direction of distortion reverses or becomes indeterminate. In the
binary-state binary-action model, both $V_P$ and $V_M$ are piecewise linear
rather than smooth, so their second derivatives are zero almost everywhere and
the curvature comparison does not apply directly. The following companion
result establishes an analogous statement for the piecewise-linear case, using
the kink magnitude in place of curvature.

\begin{proposition}[Distortion via kink magnitude: piecewise-linear case]
\label{prop:piecewise_distortion}
Maintain the binary-state binary-action setup of Section~\ref{subsec:entropy_distortion}
with Assumption~\ref{ass:entropy_cost}. The kink magnitude of the distorted
second-best index is $\Delta^{SB} = \Delta^{FB} - \gamma^{SB}D_v$. An
increase in $\gamma^{SB}$ reduces $\Delta^{SB}$ whenever $D_v>0$, and by
Proposition~\ref{prop:two_state_distortion}, reduces posterior dispersion.
In the piecewise-linear setting, the kink magnitude plays the role of
curvature: it governs the degree to which the objective rewards spreading
beliefs, and the reduction $\gamma^{SB}D_v$ is the exact analogue of the
curvature reduction in Proposition~\ref{prop:general_distortion}.
\end{proposition}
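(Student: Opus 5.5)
The plan is to treat Proposition~\ref{prop:piecewise_distortion} as a direct computation on the piecewise-linear indices, followed by an appeal to the compression result of Proposition~\ref{prop:two_state_distortion}. With $\Omega=\{0,1\}$ and $A=\{a_0,a_1\}$, identify a posterior with $\mu=\Pr(\omega=1)\in[0,1]$. The receiver's rule $a^*(\mu)$ switches from $a_0$ to $a_1$ at the threshold $\bar\mu$. On each side of the threshold, any index of the form $\mathbb{E}_\mu[f(a^*(\mu),\omega)]\pm t(a^*(\mu))$ is affine in $\mu$.

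\textbf{Step 1: slopes and kinks.} Define the kink magnitude of such an index as the slope on $[\bar\mu,1]$ minus the slope on $[0,\bar\mu]$. Computing slopes gives
\[
V_P:\ [\pi_P(a_1,1)-\pi_P(a_1,0)]-[\pi_P(a_0,1)-\pi_P(a_0,0)],
\qquad
V_M:\ D_v.
\]
The transfers $t(a_0),t(a_1)$ are constant on each region. They therefore shift levels but do not change slopes, so they contribute no slope change; they only create a jump or level shift at $\bar\mu$.

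\textbf{Step 2: the distorted kink.} By linearity of the kink functional, the distorted index $V^{\mathrm{dist}}=V_P-\gamma^{SB}V_M$ of Definition~\ref{def:distorted_index} has kink magnitude $\Delta^{SB}=\Delta^{FB}-\gamma^{SB}D_v$. Here $\Delta^{FB}$ is the kink of $V_P(\cdot;0)$, which is the object concavified in \eqref{eq:fb_direct}. Differentiating in $\gamma^{SB}$ gives $\partial\Delta^{SB}/\partial\gamma^{SB}=-D_v<0$ whenever $D_v>0$, which proves the monotonicity claim.

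\textbf{Step 3: dispersion.} By Theorem~\ref{thm:secondbest_distorted}, $\tau^{SB}$ solves the concavification of $V^{\mathrm{dist}}+\psi$, with $\psi=-H$ under Assumption~\ref{ass:entropy_cost}. Adding the affine pieces of $V^{\mathrm{dist}}$ to the strictly concave $-H$ leaves only the kink as the force pushing posteriors apart. The two tangency points straddling $\bar\mu$ are therefore determined by the kink magnitude together with $H$. I would then invoke Proposition~\ref{prop:two_state_distortion}, which says that a smaller effective kink yields posteriors $\mu_L^{SB},\mu_H^{SB}$ closer to $\bar\mu$ and thus less dispersion. Chaining this with Step 2 gives the dispersion conclusion.

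The main obstacle I expect is in Step 1: making sure the transfer-induced discontinuity at $\bar\mu$ does not itself act as an effective kink that $\gamma^{SB}$ also scales. The term $(1+\gamma^{SB})\,[t(a_1)-t(a_0)]$ enters $V^{\mathrm{dist}}$ as a level jump at $\bar\mu$. I would check against the convention of Proposition~\ref{prop:two_state_distortion} that $\Delta^{FB}$ is defined with the incentive-determined transfer gap absorbed, as in Corollary~\ref{cor:optimal_transfers}. If it is not absorbed, the statement should be read with $\Delta^{FB}$ as the undistorted kink at the second-best transfers, and the comparison is then made holding $t^{SB}$ fixed.

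The closing sentence of the proposition, that the kink plays the role of curvature, is interpretive. I would justify it by noting that a kink of size $\Delta$ at $\bar\mu$ is the distributional second derivative $\Delta\,\delta_{\bar\mu}$. Step 2 is then literally the measure-valued analogue of $W''=V_P''-\gamma V_M''$ from Proposition~\ref{prop:general_distortion}.
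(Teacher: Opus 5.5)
Your proposal is correct and takes the same route as the paper: the paper also writes $\Delta^{SB}=(m_+^{V_P}-\gamma^{SB}m_+^{V_M})-(m_-^{V_P}-\gamma^{SB}m_-^{V_M})=\Delta^{FB}-\gamma^{SB}D_v$ with $D_v=m_+^{V_M}-m_-^{V_M}$, reads off strict monotonicity in $\gamma^{SB}$ when $D_v>0$, takes the dispersion conclusion from Proposition~\ref{prop:two_state_distortion}, and treats the curvature analogy as interpretation only. The paper does not address the transfer-induced jump you flag either: Proposition~\ref{prop:two_state_distortion} simply treats $W$ as continuous at $\bar\mu$, and your fallback of holding $t^{SB}$ fixed would not remove a jump of size $(1+\gamma)[t(a_1)-t(a_0)]$, which still moves with $\gamma$. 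So your dispersion step rests on the same footing as the paper's; one small correction is that the function being concavified is $W+H$ with $H$ strictly concave, not $W+\psi=W-H$ with a ``strictly concave'' $-H$.
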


Together, Propositions~\ref{prop:general_distortion}
and~\ref{prop:piecewise_distortion} establish that moral hazard compresses
information provision under both smooth and piecewise-linear payoff structures,
with the mechanism operating through curvature reduction in the former and kink
magnitude reduction in the latter. The binary-state entropy results in
Section~\ref{subsec:entropy_distortion} are a sharp quantitative special case
of Proposition~\ref{prop:piecewise_distortion}.

\subsection{Information Distortion under Entropy Costs}
\label{subsec:entropy_distortion}

We now specialize to entropy costs and binary state spaces to obtain explicit
closed-form characterizations of the distortion. The Blackwell order is only
a partial order, so global comparison of first-best and second-best
experiments is not always well-defined. To obtain tractable results, we
characterize information provision using the entropy order, which induces a
complete ordering, and then relate it to Blackwell dominance in the binary
case.

\begin{assumption}[Entropy cost]
\label{ass:entropy_cost}
The information cost is $c(\tau)=H(\mu_0)-\int H(\mu)\,d\tau(\mu)$, where
$H$ is strictly concave on $\Delta(\Omega)$.
\end{assumption}

\begin{definition}[Entropy order]
\label{def:entropy_order}
For two experiments $\tau$ and $\tau'$, write $\tau \succeq_H \tau'$ if
$\int H(\mu)\,d\tau(\mu) \le \int H(\mu)\,d\tau'(\mu)$, equivalently
$c(\tau)\ge c(\tau')$ under Assumption~\ref{ass:entropy_cost}. Thus
$\tau \succeq_H \tau'$ means $\tau$ is weakly more informative in the sense
of lower expected entropy.
\end{definition}

The entropy order is complete and well-defined for all experiments, regardless
of Blackwell comparability.

\subsubsection{Virtual Persuasion and Entropy Distortion}

Under Assumption~\ref{ass:entropy_cost}, maximizing
$\int W(\mu)\,d\tau(\mu)-c(\tau)$ subject to Bayes-plausibility is equivalent
(up to the constant $H(\mu_0)$) to maximizing
\begin{equation}
\label{eq:entropy_regularized_problem}
\int \big(W(\mu)+H(\mu)\big)\,d\tau(\mu).
\end{equation}
In the first-best benchmark, $W(\mu)=V_P(\mu;t^{FB})$. In the second-best
problem, by Theorem~\ref{thm:secondbest_distorted}, $W(\mu)=V_P(\mu;t^{SB})
-\gamma^{SB}V_M(\mu;t^{SB})$ with $\gamma^{SB}>0$ when first-best
implementability fails. Without additional structure on $V_M(\cdot;t)$, the
direction of change in posterior dispersion is ambiguous. The next result
isolates conditions under which the distortion takes a sharp, monotone form.

\paragraph{Piecewise linearity from a binary-action receiver.}
Suppose $\Omega = \{0,1\}$ and the receiver has two actions $A = \{a_0, a_1\}$
with expected utilities $\mathbb{E}_\mu[u(a_i,\omega)] = \alpha_i \mu + \beta_i$
for $i \in \{0,1\}$. The receiver prefers $a_1$ whenever $\mu \ge \bar\mu$,
where the indifference posterior is
$\bar\mu := (\beta_0 - \beta_1)/(\alpha_1 - \alpha_0) \in (0,1)$,
assuming $\alpha_1 > \alpha_0$. The principal's gross payoff index
$V_P(\mu;0)$ is therefore piecewise linear in $\mu$ with a single kink at
$\bar\mu$, and under Assumption~\ref{ass:mediator_ql}, the mediator's payoff
index $V_M(\mu;t)$ is also piecewise linear with the same kink at $\bar\mu$.
Consequently, the distorted virtual index
$W(\mu) = V_P(\mu;t^{SB}) - \gamma^{SB} V_M(\mu;t^{SB})$
is piecewise linear with a single kink at $\bar\mu$, with kink magnitude
\begin{equation}
\label{eq:kink_formula}
\begin{split}
\Delta_W = m_+ - m_- = {}&
\bigl[\pi_P(a_1,1) - \pi_P(a_0,1)\bigr]
- \bigl[\pi_P(a_1,0) - \pi_P(a_0,0)\bigr]\\
&- \gamma^{SB}\bigl[\tilde{v}(a_1,1) - \tilde{v}(a_1,0)
- \tilde{v}(a_0,1) + \tilde{v}(a_0,0)\bigr].
\end{split}
\end{equation}
Writing $D_v := [\tilde{v}(a_1,1)-\tilde{v}(a_1,0)]
- [\tilde{v}(a_0,1)-\tilde{v}(a_0,0)]$
for the action-dependent utility differential, this simplifies to
$\Delta_W = \Delta^{FB} - \gamma^{SB} D_v$,
which is strictly positive when the principal's payoff difference across states
exceeds the mediator adjustment.

\begin{proposition}[Posterior compression under entropy costs: binary state]
\label{prop:two_state_distortion}
Let $\Omega=\{0,1\}$ and identify beliefs with $\mu\in[0,1]$. Maintain
Assumption~\ref{ass:entropy_cost}. Suppose the receiver has two actions with
linear expected utilities, so that the distorted payoff index $W(\mu)$ is
piecewise linear with a single kink at $\bar\mu\in(0,1)$ and kink magnitude
$\Delta^{SB} = \Delta^{FB} - \gamma^{SB}D_v$.

Then there exists an optimal second-best experiment of the two-posterior form
\[
\tau^{SB}
=
p^{SB}\delta_{\mu_H^{SB}}+(1-p^{SB})\delta_{\mu_L^{SB}},
\qquad
p^{SB}\mu_H^{SB}+(1-p^{SB})\mu_L^{SB}=\mu_0,
\]
with $\mu_L^{SB}\le \bar\mu\le \mu_H^{SB}$.

Strict posterior compression holds if and only if $D_v>0$ and $\gamma^{SB}>0$,
in which case $\Delta^{SB}<\Delta^{FB}$ and
\[
\mu_H^{SB}-\mu_L^{SB} < \mu_H^{FB}-\mu_L^{FB}.
\]
The second-best experiment is then a strict entropy compression of the first-best.
If $D_v=0$ (the state-dependent case, or the case $\gamma^{SB}=0$), then
$\Delta^{SB}=\Delta^{FB}$ and the second-best experiment coincides with the
first-best.
\end{proposition}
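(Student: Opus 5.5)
The plan is to reduce the second-best problem, via Theorem~\ref{thm:secondbest_distorted}, to the entropy-regularized concavification \eqref{eq:entropy_regularized_problem} with $W(\mu)=V_P(\mu;t^{SB})-\gamma^{SB}V_M(\mu;t^{SB})$. I will then solve that problem explicitly on $[0,1]$ and compare the solution with the first-best problem, whose index is $V_P(\mu;0)$ and whose kink is $\Delta^{FB}$.

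\textbf{Step 1 (two-posterior form).} Write $W$ as $m_-\mu+b_-$ on $[0,\bar\mu]$ and $m_+\mu+b_+$ on $[\bar\mu,1]$, with $m_+-m_-=\Delta^{SB}$. The objective $\Phi:=W+H$ is strictly concave on each of the two pieces. Hence the concave envelope $\mathrm{cav}\,\Phi$ agrees with $\Phi$ off a single interval $[\mu_L,\mu_H]$ on which it is affine.

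If $\Delta^{SB}\le 0$, then $\Phi$ is globally concave, $\tau=\delta_{\mu_0}$ is optimal, and this is a degenerate two-posterior form. If $\Delta^{SB}>0$, then $\mu_L\le\bar\mu\le\mu_H$ and the interval is determined by the common-tangent conditions
\[
m_-+H'(\mu_L)=m_++H'(\mu_H),\qquad \Phi(\mu_H)-\Phi(\mu_L)=\bigl(m_-+H'(\mu_L)\bigr)(\mu_H-\mu_L).
\]
When $\mu_0\in(\mu_L,\mu_H)$, Bayes plausibility gives $p^{SB}=(\mu_0-\mu_L)/(\mu_H-\mu_L)$. Otherwise the optimal experiment is no information.

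Because $\Omega$ is binary, Carathéodory guarantees an optimum with at most two support points. This gives existence of the stated form.

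\textbf{Step 2 (the tangency interval depends only on $\Delta$).} In the tangency system, the affine part common to both branches cancels. Only the slope gap $\Delta$ and the intercept gap enter. Continuity of $W$ at $\bar\mu$ pins the intercept gap to $-\Delta\bar\mu$.

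So the interval is $[\mu_L(\Delta),\mu_H(\Delta)]$, the pair of tangency points for $H(\mu)+\Delta(\mu-\bar\mu)^+$. The same formula holds in the first best with $\Delta^{FB}$, because $V_P(\cdot;0)$ and $V_P(\cdot;t)$ differ only by a step in transfers. That step is again absorbed as a kink of size $-(t_1-t_0)$ together with the $-\gamma t$ term.

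This is the point I expect to be the main obstacle. I must check that the transfer terms in $V_P(\cdot;t)-\gamma V_M(\cdot;t)=\pi$-part $-(1+\gamma)t(a^*(\mu))-\gamma\,\mathbb{E}_\mu\tilde v$ indeed net out to the stated $\Delta^{SB}=\Delta^{FB}-\gamma^{SB}D_v$. I will take \eqref{eq:kink_formula} as given, as the statement does. I will then verify that $\Delta^{FB}$ is the first-best kink, so that both problems are instances of the one-parameter family.

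\textbf{Step 3 (monotonicity in $\Delta$).} Differentiate the tangency system implicitly in $\Delta$. Let $s$ denote the common slope. The conditions $H'(\mu_L)=s-m_-$ and $H'(\mu_H)=s-m_+$, together with strict concavity ($H''<0$), yield $d\mu_L/d\Delta<0<d\mu_H/d\Delta$. The sign follows from the envelope argument that a larger kink makes the tangent line rotate outward on both sides.

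Hence $\mu_H-\mu_L$ is strictly increasing in $\Delta$ on the region $\Delta>0$. The condition $\Delta^{SB}<\Delta^{FB}$ holds exactly when $\gamma^{SB}D_v>0$, that is, when $D_v>0$ and $\gamma^{SB}>0$, since $\gamma^{SB}\ge 0$. In that case the spread strictly shrinks, which gives the "if" direction.

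For "only if": if $D_v=0$ or $\gamma^{SB}=0$, the kinks coincide. Uniqueness of the tangency pair under strict concavity then makes the two experiments identical, so strict compression fails. The case $D_v<0$ gives weak expansion; I will note that the statement implicitly takes $D_v\ge 0$.

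\textbf{Step 4 (entropy compression).} Both experiments have nested supports around $\mu_0$. A narrower two-point split with the same mean is a mean-preserving contraction, so it is Blackwell-less informative. By strict concavity of $H$, its expected entropy $\int H\,d\tau$ is strictly larger. By Definition~\ref{def:entropy_order}, this means $\tau^{FB}\succ_H\tau^{SB}$.
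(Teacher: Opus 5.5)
Your proof is correct under the statement's hypotheses. It departs from the paper's proof at the step where the paper is weakest. After Carath\'eodory and the straddling argument, which you reproduce, the paper keeps only the slope-gap equation $H'(\mu_L)-H'(\mu_H)=\Delta_W$. It infers from that equation alone that the spread $\mu_H-\mu_L$ increases in $\Delta_W$, and then passes from smaller spread to higher expected entropy. Neither step is justified as written:
\begin{itemize}
\item The gap equation has a one-parameter family of solutions. The $H'$-gap of an interval is not a function of its width: under Shannon entropy, a fixed-width interval has a larger gap the closer it sits to $0$ or $1$.
\item A smaller spread does not make one two-point split a mean-preserving contraction of another unless the supports are nested.
\end{itemize}
The paper's shortcut is harmless only in its symmetric example, where symmetry about $\bar\mu=\tfrac12$ pins the location. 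Your route fixes both problems. Discarding the affine part of $W$, which integrates to a constant under Bayes plausibility, places first- and second-best in the single family $H(\mu)+\Delta(\mu-\bar\mu)^+$. Imposing the secant condition together with the slope conditions then pins the location.

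Write the implicit differentiation out rather than appealing to the tangent line ``rotating outward.'' Let $s$ be the common slope and $\mu(\cdot)=(H')^{-1}$, so $\mu_L=\mu(s)$ and $\mu_H=\mu(s-\Delta)$. The secant condition reads $\int_{s-\Delta}^{s}\mu(u)\,du=\Delta\bar\mu$. This gives $ds/d\Delta=(\mu_H-\bar\mu)/(\mu_H-\mu_L)\in(0,1)$, hence $d\mu_L/d\Delta=(ds/d\Delta)/H''(\mu_L)<0$ and $d\mu_H/d\Delta=(ds/d\Delta-1)/H''(\mu_H)>0$. The resulting nested supports are what legitimately deliver your Step~4 and the Blackwell comparison.

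One remark in your Step~2 is wrong and should be dropped. A transfer gap $t_1\neq t_0$ adds a \emph{jump} of $-(1+\gamma)(t_1-t_0)$ to $W$ at $\bar\mu$, not a kink. It changes the intercept gap and leaves the slope gap untouched; level differences across actions in $\pi_P$ or $\tilde v$ at $\bar\mu$ do the same. Your one-parameter reduction needs $W$ and $V_P(\cdot;0)$ to be continuous at $\bar\mu$. That must come from the hypothesis that both have genuine kinks there, not from the transfer structure. With jumps, the tangency pair depends on the jump too, and the first-best versus second-best comparison is no longer governed by $\Delta$ alone. The paper's proof tacitly relies on the same continuity, since it never uses the level of $W$.

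Two smaller points:
\begin{itemize}
\item The ``if'' direction gives strict compression of the experiment only when $\mu_0$ lies strictly inside the first-best tangency interval; otherwise both optima are $\delta_{\mu_0}$.
\item You need not assume $D_v\ge0$ for ``only if.'' Your own monotonicity shows that $D_v<0<\gamma^{SB}$ widens the interval.
\end{itemize}
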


Proposition~\ref{prop:two_state_distortion} shows that the compression result
is sharp and that its source is precisely the action-dependent preference
differential $D_v>0$: moral hazard compresses posterior dispersion only when
the mediator's intrinsic utility differs across the actions it recommends,
creating a genuine wedge between the slopes of $V_P$ and $V_M$ at the kink.
When $D_v=0$ the participation constraint is orthogonal to the direction of
spreading beliefs and imposes no distortion on information provision.\footnote{This
parallels \citeauthor{Myerson1981}'s (\citeyear{Myerson1981}) virtual values:
just as a monopolist restricts quantity to reduce buyer information rents, the
principal restricts posterior dispersion to reduce the mediator's participation
cost. The analogy holds at the level of functional form; the underlying
mechanism differs, as discussed in Section~\ref{subsec:virtual_persuasion}.}

Figure~\ref{fig:optimal_experiment} illustrates the geometric mechanism. Each
curve plots $\Phi(\mu) := W(\mu) + H(\mu)$, the object whose concave
envelope determines the optimal experiment. Moral hazard enters by replacing
$W^{FB}$ with the distorted index $W^{SB}$, which is uniformly flatter.
Because the second-best curve is flatter, its supporting chord spans a
strictly narrower range: $\mu_H^{SB} - \mu_L^{SB} < \mu_H^{FB} - \mu_L^{FB}$.

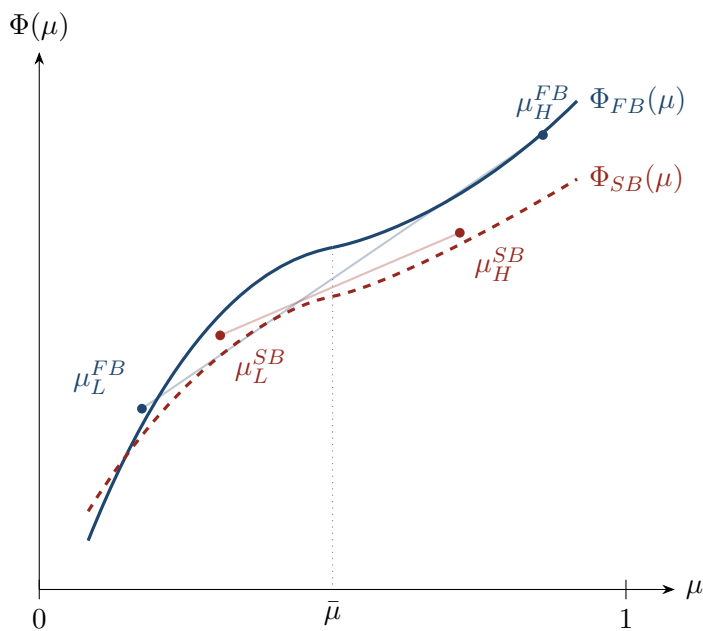
\begin{figure}[ht]
    \centering
    \begin{tikzpicture}[
        scale=1.3,
        dot/.style={circle, fill, inner sep=1.3pt},
        every node/.style={font=\small}
    ]
        \draw[->, -{Stealth}] (0,0) -- (6.5,0) node[right] {$\mu$};
        \draw[->, -{Stealth}] (0,0) -- (0,5.5) node[above] {$\Phi(\mu)$};

        \draw (0,0.1) -- (0,-0.1) node[below] {$0$};
        \draw (6,0.1) -- (6,-0.1) node[below] {$1$};
        \draw[dotted, gray] (3,3.5) -- (3,0) node[below, black] {$\bar{\mu}$};

        \definecolor{myBlue}{RGB}{30, 70, 110}
        \draw[very thick, myBlue] (0.5,0.5) .. controls (1.5,3.0) and (2.5,3.4) .. (3,3.5);
        \draw[very thick, myBlue] (3,3.5) .. controls (3.5,3.6) and (4.5,4.0) .. (5.5,5.0)
            node[right] {$\Phi_{FB}(\mu)$};

        \coordinate (LFB) at (1.05, 1.85);
        \coordinate (HFB) at (5.15, 4.65);
        \draw[thick, myBlue, opacity=0.3] (LFB) -- (HFB);
        \node[dot, myBlue, label=above left:{\textcolor{myBlue}{$\mu_L^{FB}$}}] at (LFB) {};
        \node[dot, myBlue, label=above:{\textcolor{myBlue}{$\mu_H^{FB}$}}] at (HFB) {};

        \definecolor{myRed}{RGB}{150, 40, 30}
        \draw[very thick, myRed, dashed] (0.5,0.8) .. controls (1.5,2.4) and (2.5,2.9) .. (3,3.0);
        \draw[very thick, myRed, dashed] (3,3.0) .. controls (3.5,3.1) and (4.5,3.6) .. (5.5,4.2)
            node[right] {$\Phi_{SB}(\mu)$};

        \coordinate (LSB) at (1.85, 2.6);
        \coordinate (HSB) at (4.3, 3.65);
        \draw[thick, myRed, opacity=0.3] (LSB) -- (HSB);
        \node[dot, myRed, label=below right:{\textcolor{myRed}{$\mu_L^{SB}$}}] at (LSB) {};
        \node[dot, myRed, label=below right:{\textcolor{myRed}{$\mu_H^{SB}$}}] at (HSB) {};

    \end{tikzpicture}
    \caption{Impact of moral hazard on optimal information provision.
    The effective payoff function $\Phi(\mu) := W(\mu) + H(\mu)$ is plotted
    for the first-best (solid blue, $\Phi_{FB}$) and second-best (dashed red,
    $\Phi_{SB}$). Moral hazard flattens the second-best objective, so its
    supporting chord spans a strictly narrower range:
    $\mu_H^{SB} - \mu_L^{SB} < \mu_H^{FB} - \mu_L^{FB}$.}
    \label{fig:optimal_experiment}
\end{figure}

The following corollary connects the entropy order to Blackwell dominance in
the binary case.

\begin{corollary}[Entropy compression implies Blackwell dominance]
\label{cor:blackwell_binary}
Maintain Assumption~\ref{ass:entropy_cost} and $|\Omega|=2$. Let $\tau^{FB}$
and $\tau^{SB}$ be first-best and second-best experiments with at most
two-point support. If $\tau^{SB}$ is a strict entropy compression of
$\tau^{FB}$, then $\tau^{FB}$ Blackwell-dominates $\tau^{SB}$:
$\tau^{FB}\succeq_B \tau^{SB}$, with strict dominance when posterior
supports differ.
\end{corollary}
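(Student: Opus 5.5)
The plan is to reduce Blackwell dominance in the binary-state, two-point case to \emph{nestedness} of the posterior supports. Then I would obtain nestedness from the tangency structure behind Proposition~\ref{prop:two_state_distortion}, not from the entropy ranking alone. Identify beliefs with $\mu\in[0,1]$ and write $\tau^{FB}=p^{FB}\delta_{\mu_H^{FB}}+(1-p^{FB})\delta_{\mu_L^{FB}}$ and $\tau^{SB}=p^{SB}\delta_{\mu_H^{SB}}+(1-p^{SB})\delta_{\mu_L^{SB}}$, both with mean $\mu_0$. By Blackwell's theorem, $\tau^{FB}\succeq_B\tau^{SB}$ iff $\tau^{FB}$ is a mean-preserving spread of $\tau^{SB}$. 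The first step is the elementary lemma that, for two-point Bayes-plausible distributions on $[0,1]$, this holds iff
\[
\mu_L^{FB}\le \mu_L^{SB}\le \mu_0\le \mu_H^{SB}\le \mu_H^{FB}.
\]
For sufficiency, I would split each $\mu^{SB}_j$ into $\{\mu_L^{FB},\mu_H^{FB}\}$ with Bayes weights $\lambda_j=(\mu^{SB}_j-\mu_L^{FB})/(\mu_H^{FB}-\mu_L^{FB})$. Linearity of the barycenter then shows that the total weight on $\mu_H^{FB}$ is $(\mu_0-\mu_L^{FB})/(\mu_H^{FB}-\mu_L^{FB})=p^{FB}$, so the resulting distribution is exactly $\tau^{FB}$. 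For necessity, a mean-preserving spread cannot put mass outside the convex hull of its own support.

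The key step is nestedness. A smaller spread $\mu_H^{SB}-\mu_L^{SB}<\mu_H^{FB}-\mu_L^{FB}$ (equivalently, by strict concavity of $H$, the strict entropy compression) does not by itself rule out a shifted interval. So I would use the tangency characterization: the optimal posteriors are the contact points of the supporting line of $\Phi(\mu)=W(\mu)+H(\mu)$. Here $W$ is linear on each side of $\bar\mu$, with slopes $m_-$ and $m_+=m_-+\Delta$, and the conditions are
\[
m_- +H'(\mu_L)=m_+ +H'(\mu_H)=\frac{\Phi(\mu_H)-\Phi(\mu_L)}{\mu_H-\mu_L}.
\]
Subtracting the affine part of $W$ changes nothing, since $\mathcal K$ fixes the mean. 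Hence the contact points depend only on the kink magnitude $\Delta$. Writing them as $\mu_L(\Delta)$ and $\mu_H(\Delta)$, I would apply the implicit function theorem to this system, using $H''<0$, and show $\mu_L'(\Delta)<0<\mu_H'(\Delta)$. Geometrically, raising $\Delta$ lifts the right branch relative to the left and pushes both tangency points outward. Proposition~\ref{prop:two_state_distortion} identifies compression with $\Delta^{SB}=\Delta^{FB}-\gamma^{SB}D_v<\Delta^{FB}$, so monotonicity gives $\mu_L^{FB}<\mu_L^{SB}$ and $\mu_H^{SB}<\mu_H^{FB}$. If one of the experiments is uninformative (no kink large enough to split), the inclusion is trivial.

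Strictness follows directly. If the supports differ, at least one inequality in the nested chain is strict, so some $\lambda_j\in(0,1)$ and the garbling is non-degenerate. Then $\tau^{SB}\not\succeq_B\tau^{FB}$: otherwise the two experiments would be Blackwell-equivalent, and two-point Bayes-plausible distributions that are equivalent must coincide.

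I expect the main obstacle to be the comparative statics of the tangency system, namely signing $\partial\mu_L/\partial\Delta$ and $\partial\mu_H/\partial\Delta$ for a general strictly concave $H$. I would first try the explicit Shannon case, where the log-odds form of $H'$ makes the system solvable. In general I would fall back on a direct geometric argument: the concave envelope of $\Phi$ for the larger kink dominates, on the smaller interval, the chord built for the smaller kink. Then I would argue by contradiction that a shifted, non-nested interval would produce a second tangent line crossing $\Phi$.
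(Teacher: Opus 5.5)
Your proposal is correct, and it takes a different and more careful route than the paper. The paper uses the spread $\mu_H-\mu_L$ as a sufficient statistic. It asserts that, for two-point binary experiments with a common mean, Blackwell dominance is equivalent to a weakly larger spread, and that strict entropy compression forces a strictly smaller spread; it then cites Proposition~\ref{prop:two_state_distortion}. Neither assertion holds without nestedness. At $\mu_0=\tfrac{1}{2}$, the experiment on $\{0.45,0.95\}$ has smaller spread and higher expected Shannon entropy than the one on $\{0.2,0.8\}$. Yet it is not a garbling of it, because $0.95\notin[0.2,0.8]$.

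Your lemma identifies the right criterion, nested supports. Your comparative statics then supply the structure that makes the corollary true: both experiments come from the bitangent of $H+\Delta(\mu-\bar\mu)_+$ with common $H$ and $\bar\mu$, differing only in $\Delta$. The paper's argument is shorter but is valid only once this nesting is known; yours actually proves it.

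The step you flag as the main obstacle resolves cleanly for any $C^2$ function $H$ with $H''<0$, not just Shannon. Differentiate the secant condition: the coefficients on $d\mu_L$ and $d\mu_H$ vanish by the two tangency conditions, leaving $(\mu_H-\mu_L)\,d\ell=(\mu_H-\bar\mu)\,d\Delta$, so $d\ell/d\Delta\in(0,1)$. Then $H''(\mu_L)\,d\mu_L=d\ell$ and $H''(\mu_H)\,d\mu_H=d\ell-d\Delta$ give $\mu_L'<0<\mu_H'$. Uniqueness of the bitangent for each $\Delta>0$ turns these local signs into a global comparison.

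Two small tightenings. First, you need not route $\Delta^{SB}<\Delta^{FB}$ through Proposition~\ref{prop:two_state_distortion}, whose Step 4 compares spreads without regard to location. If $\Delta^{SB}\ge\Delta^{FB}$, your monotonicity would place $\supp\tau^{FB}$ inside $[\mu_L^{SB},\mu_H^{SB}]$. That would give $\tau^{SB}$ weakly lower expected entropy, contradicting the hypothesis. Second, an uninformative $\tau^{FB}$ is not a trivial case but an excluded one. Since $\int H\,d\tau\le H(\mu_0)$ for every $\tau\in\mathcal{K}$, nothing can be a strict entropy compression of $\delta_{\mu_0}$.
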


Corollary~\ref{cor:blackwell_binary} shows that in the binary case, the
entropy compression identified in Proposition~\ref{prop:two_state_distortion}
carries a strong informational interpretation: the first-best experiment is
unambiguously more informative in every statistical decision problem.

\begin{corollary}[Equivalence of entropy and Blackwell orders in the binary case]
\label{cor:entropy_blackwell_equivalence}
Maintain Assumption~\ref{ass:entropy_cost} and $|\Omega|=2$. For any two
experiments $\tau$, $\tau'$ with at most two-point support and the same mean
$\mu_0$, the following are equivalent: (i) $\tau$ strictly entropy-dominates
$\tau'$; (ii) $\tau$ induces a strictly larger posterior spread than $\tau'$;
(iii) $\tau$ strictly Blackwell-dominates $\tau'$. Outside the binary-state
or two-point-support case, the equivalence between (i) and (iii) breaks down.
\end{corollary}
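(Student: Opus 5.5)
The plan is to reduce every two-point, Bayes-plausible experiment with mean $\mu_0$ to a single scalar, namely its pair of endpoints. Then I show that the three orders are each strictly monotone in the same underlying "spread" relation.

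\emph{Parametrization.} Write $\tau=p\delta_{\mu_H}+(1-p)\delta_{\mu_L}$ with $\mu_L\le\mu_0\le\mu_H$ and $p=(\mu_0-\mu_L)/(\mu_H-\mu_L)$, and similarly $\tau'$ with endpoints $\mu_L',\mu_H'$. Degenerate experiments are the case $\mu_L=\mu_H=\mu_0$. I will interpret ``strictly larger posterior spread'' in (ii) as nesting of the supports: $[\mu_L',\mu_H']\subsetneq[\mu_L,\mu_H]$. The proof should state explicitly that for two-point experiments with a common mean this is the relevant notion. A bare inequality between widths $\mu_H-\mu_L$ is not enough, since two non-nested intervals can have ordered widths.

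\emph{(ii)$\Rightarrow$(iii) and (iii)$\Rightarrow$(ii).} Suppose the supports are nested. Each of $\mu_L',\mu_H'$ lies in $[\mu_L,\mu_H]$, so each is a convex combination of $\mu_L$ and $\mu_H$. Splitting each of these points accordingly gives a mean-preserving spread from $\tau'$ to $\tau$: the weights are consistent because both experiments have mean $\mu_0$. In the binary-state setting, a mean-preserving spread of the posterior distribution is equivalent to Blackwell dominance (the standard Blackwell–Sherman–Stein theorem, stated in posterior form). Strictness follows whenever the supports differ.

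For the converse, Blackwell dominance means $\tau$ is a mean-preserving spread of $\tau'$. That forces $\supp\tau'\subset\operatorname{conv}\supp\tau=[\mu_L,\mu_H]$, which is exactly nesting.

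\emph{(iii)$\Rightarrow$(i).} A mean-preserving spread together with strict concavity of $H$ and Jensen's inequality gives $\int H\,d\tau<\int H\,d\tau'$ whenever $\tau\ne\tau'$. This is strict entropy dominance.

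\emph{(i)$\Rightarrow$(ii), the main obstacle.} The difficulty is that entropy is a complete order while Blackwell is only partial. In principle, two non-nested two-point experiments could still be ranked by expected entropy. The plan is therefore:
\begin{enumerate}
\item First I will show that expected entropy is strictly monotone along nested families of supports, which follows from the previous step.
\item Then I will try to argue that the experiments under comparison, namely the concavification solutions of Proposition~\ref{prop:two_state_distortion}, lie on such a family: they are tangency pairs for $W+H$ indexed by the kink magnitude $\Delta$.
\item As $\Delta$ varies, I expect $\mu_H$ to move monotonically up and $\mu_L$ monotonically down. This would follow from implicit differentiation of the tangency system, using $H''<0$.
\end{enumerate}
If the statement is read for fully arbitrary two-point experiments, I would instead look for a counterexample. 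The likely candidate is two non-nested supports whose expected entropies are still ordered, which would show that (i)$\Rightarrow$(iii) needs this nesting restriction. The final sentence of the statement, that the equivalence fails outside the binary or two-point case, then follows from the known non-completeness of the Blackwell order and the existence of Blackwell-incomparable experiments with ordered entropy once there are three support points or three states.
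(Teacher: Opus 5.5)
\textbf{The step you flag as the main obstacle cannot be completed, because the corollary is false as stated.} Your (i)$\Rightarrow$(ii) fails for arbitrary two-point experiments with a common mean. Your counterexample branch is the right one, and you should commit to it rather than hedge.

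A concrete counterexample: take Shannon $H$, $\mu_0=\tfrac12$, $\tau=\tfrac12\delta_{0.3}+\tfrac12\delta_{0.7}$ and $\tau'=0.1\,\delta_{0.05}+0.9\,\delta_{0.55}$.
\begin{itemize}
\item $\int H\,d\tau=H(0.3)\approx 0.611<0.639\approx\int H\,d\tau'$, so $\tau$ strictly entropy-dominates $\tau'$.
\item $[0.3,0.7]$ and $[0.05,0.55]$ are not nested, so the two experiments are Blackwell-incomparable.
\item $\tau'$ has the strictly larger width ($0.5$ versus $0.4$).
\end{itemize}
This one example refutes (i)$\Rightarrow$(iii). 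Under the literal width reading of (ii), it also refutes (i)$\Rightarrow$(ii), (ii)$\Rightarrow$(i) and (ii)$\Rightarrow$(iii). So the equivalence of (i) and (iii) already fails inside the binary-state, two-point case, not only outside it as the last sentence suggests.

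The paper's proof breaks exactly where you anticipated. It asserts that a two-point Bayes-plausible experiment is ``fully characterized by its mean $\mu_0$ and spread $\mu_H-\mu_L$''. In fact, for fixed $\mu_0$ the endpoint pair has two degrees of freedom: an interval of given width containing $\mu_0$ can be slid. Hence both its (i)$\Leftrightarrow$(ii) and (ii)$\Leftrightarrow$(iii) steps are invalid. It also states the monotonicity with the wrong sign: a mean-preserving spread lowers expected entropy. Your replacement of width by nesting of supports is the correct characterization of the Blackwell order here, and your arguments for nesting $\Leftrightarrow$ (iii) $\Rightarrow$ (i) are sound.

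\textbf{What survives is the restricted version in your steps 2--3.} That is all Proposition~\ref{prop:two_state_distortion} and Corollary~\ref{cor:blackwell_binary} actually use, and your monotonicity guess is right. There is a cleaner argument than implicit differentiation:
\begin{itemize}
\item Subtracting an affine function leaves the optimal experiment unchanged and reduces $W$ to $\Delta(\mu-\bar\mu)_+$.
\item The bitangency conditions then say that the tangent lines to $H$ at $\mu_L$ and $\mu_H$ have slopes differing by $\Delta$ and cross at $\bar\mu$. Equivalently, $\phi(\mu_L)=\phi(\mu_H)$ with $\phi(q):=H(q)+H'(q)(\bar\mu-q)$.
\item Since $\phi'(q)=H''(q)(\bar\mu-q)$, $\phi$ is strictly decreasing on $(0,\bar\mu)$ and strictly increasing on $(\bar\mu,1)$.
\item So the solution pairs form a chain on which $\mu_L$ falls exactly when $\mu_H$ rises, and $\Delta=H'(\mu_L)-H'(\mu_H)$ increases strictly along it.
\end{itemize}
Hence the first-best and second-best supports are nested. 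On this chain, but not in general, width, nesting, entropy and Blackwell rankings coincide; this is the statement you should actually prove.

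One caveat: this uses the paper's premise that $W$ is continuous at $\bar\mu$. An action-contingent transfer gap $t(a_1)\neq t(a_0)$ makes $V_P$ and $V_M$ jump at $\bar\mu$. That adds a second parameter to the tangency system, and nesting between first- and second-best would then have to be rechecked.
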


Corollary~\ref{cor:entropy_blackwell_equivalence} clarifies the scope of the
Blackwell equivalence result: in binary environments with two-point support, a
single scalar---the posterior spread---is a sufficient statistic for the
informativeness ordering, regardless of which of the three equivalent criteria
is used. Outside this special case, expected entropy and Blackwell dominance
diverge, so the compression results derived in this section do not extend
directly to richer environments without additional structure.

Under the Shannon entropy cost function, the general first-order conditions
admit a closed-form solution. The following proposition provides the full
explicit characterization of the second-best experiment.

\begin{proposition}[Two-state second-best: explicit characterization]
\label{prop:two_state_explicit}
Let $\Omega=\{0,1\}$. Maintain Assumption~\ref{ass:entropy_cost} and suppose
an interior solution exists. Let $W(\mu)$ be piecewise linear with kink at
$\bar\mu\in(0,1)$ and slopes $m_- < m_+$, so $\Delta_W := m_+ - m_- > 0$.

Then there exists an optimal experiment of the two-posterior form
$\tau^{SB}=p^{SB}\delta_{\mu_H^{SB}}+(1-p^{SB})\delta_{\mu_L^{SB}}$
with $p^{SB}\mu_H^{SB}+(1-p^{SB})\mu_L^{SB}=\mu_0$, characterized by
the \emph{tangency system}:
\begin{align}
m_-+H'(\mu_L^{SB}) &= \ell, \label{eq:tangency_left}\\
m_+ +H'(\mu_H^{SB}) &= \ell, \label{eq:tangency_right}\\
\ell &= \frac{\big(W(\mu_H^{SB})+H(\mu_H^{SB})\big)-
\big(W(\mu_L^{SB})+H(\mu_L^{SB})\big)}
{\mu_H^{SB}-\mu_L^{SB}}. \label{eq:secant_slope}
\end{align}
The mixing probability is
$p^{SB}=(\mu_0-\mu_L^{SB})/(\mu_H^{SB}-\mu_L^{SB})$.
Equivalently, \eqref{eq:tangency_left}--\eqref{eq:tangency_right} imply the
\emph{difference equation}
\begin{equation}
\label{eq:Hprime_gap}
H'(\mu_L^{SB})-H'(\mu_H^{SB})=\Delta_W.
\end{equation}
\end{proposition}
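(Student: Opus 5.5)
The plan is to reduce to a one-dimensional concavification and read off the tangency conditions from the supporting line of the concave envelope. By Theorem~\ref{thm:secondbest_distorted} and the entropy reformulation \eqref{eq:entropy_regularized_problem}, $\tau^{SB}$ maximizes $\int \Phi\,d\tau$ over Bayes-plausible $\tau$, where $\Phi(\mu):=W(\mu)+H(\mu)$. The optimal value is therefore $\operatorname{cav}\Phi(\mu_0)$, and a maximizer is supported on points where $\Phi$ touches its concave envelope. Since $\Omega=\{0,1\}$, Carath\'eodory's theorem in one dimension gives an optimal $\tau$ with at most two support points. Next I will locate these points. On each of $[0,\bar\mu]$ and $[\bar\mu,1]$, $\Phi$ is the sum of an affine function and the strictly concave $H$, so it is strictly concave there. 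Hence $\Phi$ fails to be concave only at $\bar\mu$: there the kink of $W$ is convex, since $m_+>m_-$ gives an upward jump in slope. Consequently $\operatorname{cav}\Phi$ agrees with $\Phi$ except on one interval $(\mu_L,\mu_H)\ni\bar\mu$, on which it is a straight chord with $\mu_L<\bar\mu<\mu_H$. The interiority hypothesis guarantees that $\mu_0$ lies in this interval and that $\mu_L,\mu_H\in(0,1)$. Outside this case the degenerate experiment $\delta_{\mu_0}$, or a boundary solution, would be optimal.

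The tangency system follows from the supporting-line characterization. Let $\ell$ be the slope of the chord, so that $L(\mu)=\Phi(\mu_L)+\ell(\mu-\mu_L)$ satisfies $L\ge\Phi$ on $[0,1]$ with equality at $\mu_L$ and $\mu_H$. Because $\mu_L$ lies in the open interval $(0,\bar\mu)$, where $\Phi$ is differentiable with derivative $m_-+H'(\mu)$, the function $L-\Phi$ attains an interior minimum at $\mu_L$. Its first-order condition gives \eqref{eq:tangency_left}. The same argument on $(\bar\mu,1)$ with slope $m_+$ gives \eqref{eq:tangency_right}. Since $L$ passes through both contact points, its slope is the secant slope, which is \eqref{eq:secant_slope}. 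Bayes plausibility, $p\mu_H+(1-p)\mu_L=\mu_0$, then yields the stated $p^{SB}$, which lies in $(0,1)$ because $\mu_L<\mu_0<\mu_H$. Subtracting \eqref{eq:tangency_right} from \eqref{eq:tangency_left} eliminates $\ell$ and gives \eqref{eq:Hprime_gap}.

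The main obstacle is establishing the geometric claim rigorously: that the non-concave region of $\Phi$ is a single interval straddling $\bar\mu$, and that both contact points are interior and differentiable, so that one-sided derivatives do not occur. I would first argue that any maximal chord of the envelope must have one endpoint on each side of $\bar\mu$, since a chord lying within a strictly concave piece cannot lie above $\Phi$. I would then invoke the interior-solution hypothesis to exclude endpoints at $0$, $1$, or $\bar\mu$ itself. The endpoint $\bar\mu$ is ruled out because $\Phi$ has a convex kink there, so it can never be a contact point of a supporting line that lies above $\Phi$. Existence of a solution to the system then follows from existence of the concave envelope's chord, not from solving the equations directly.
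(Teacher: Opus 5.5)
Your proposal is correct and follows essentially the same route as the paper's proof. Both reduce to maximizing $\int (W+H)\,d\tau$, use Carath\'eodory to get a two-point optimum straddling $\bar\mu$, impose tangency of the chord at both endpoints, and subtract the two tangency equations to obtain \eqref{eq:Hprime_gap}. Your version is more careful than the paper's terse argument: you show that the concave envelope departs from $\Phi$ on exactly one interval containing $\bar\mu$, that the convex kink can never be a contact point (so both contact points are interior points of differentiability), and that the resulting two-point experiment is optimal by Jensen.
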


Proposition~\ref{prop:two_state_explicit} reduces the second-best problem to a
constructive geometric procedure: find the chord simultaneously tangent to the
piecewise strictly concave function $\Phi(\mu)=W(\mu)+H(\mu)$ at two points
straddling the kink, subject to the Bayes plausibility constraint. The tangency
system~\eqref{eq:tangency_left}--\eqref{eq:Hprime_gap} is the first-order
condition for this chord, and its unique solution (established in
Remark~\ref{rem:uniqueness} below) delivers the second-best posteriors and
mixing weight in closed form once the cost function $H$ is specified.
\begin{remark}\label{rem:uniqueness}
In the binary-state binary-action model under Shannon entropy, the tangency
system~\eqref{eq:tangency_left}--\eqref{eq:secant_slope} has a unique solution
$(\mu_L^{SB},\mu_H^{SB})$. Uniqueness follows from two facts. First, the
difference equation~\eqref{eq:Hprime_gap} with $H'$ strictly decreasing
determines a unique pair $(\mu_L,\mu_H)$ with $\mu_L<\bar\mu<\mu_H$ straddling
the kink, for given $\Delta_W>0$. Second, Bayes plausibility
$p\mu_H+(1-p)\mu_L=\mu_0$ then pins $p\in(0,1)$ uniquely. Global optimality
follows from strict concavity of $F(\mu)=W(\mu)+H(\mu)$ on each piece: any
two-point experiment not straddling the kink yields strictly lower objective
value than $\delta_{\mu_0}$ by strict concavity on each piece, and
$\delta_{\mu_0}$ is itself strictly dominated by the tangency solution when
$\Delta_W>0$. Hence the tangency solution is the unique global maximizer among
all Bayes-plausible experiments.
\end{remark}

Figure~\ref{fig:two_state_explicit} illustrates the constructive geometry
behind Proposition~\ref{prop:two_state_explicit}. The single curve is
$\Phi(\mu) = W(\mu) + H(\mu)$, piecewise strictly concave with a kink of
magnitude $\Delta_W$ at $\bar\mu$. The optimal experiment is found by
identifying the chord that is simultaneously tangent to $\Phi$ at two points
straddling the kink and whose horizontal projection passes through the prior
$\mu_0$. The tangency conditions are exactly
equations~\eqref{eq:tangency_left}--\eqref{eq:tangency_right}, and the
lever-rule brackets at the bottom of the figure translate the position of
$\mu_0$ along the chord into the mixing probability $p^{SB}$.

Under Shannon entropy, the tangency system admits a fully explicit closed form.

\begin{corollary}[Closed-form under Shannon entropy]
\label{cor:shannon_closed_form}
Maintain Proposition~\ref{prop:two_state_explicit} and let $H$ be Shannon
entropy: $H(\mu) = -\mu \ln \mu - (1-\mu) \ln (1-\mu)$. Then
$H'(\mu) = \ln((1-\mu)/\mu)$ and \eqref{eq:Hprime_gap} yields
\begin{equation}
\label{eq:odds_ratio_relation}
\frac{(1-\mu_L^{SB})\,\mu_H^{SB}}{\mu_L^{SB}\,(1-\mu_H^{SB})} = e^{\Delta_W}.
\end{equation}
Equivalently, for any admissible $\mu_L^{SB} \in (0,\bar\mu]$,
\begin{equation}
\label{eq:muH_explicit_shannon}
\mu_H^{SB} = \frac{\mu_L^{SB}}{\mu_L^{SB} + (1-\mu_L^{SB})e^{\Delta_W}}.
\end{equation}
\end{corollary}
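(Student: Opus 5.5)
The plan is a direct computation from Proposition~\ref{prop:two_state_explicit}. That proposition already reduces the tangency system \eqref{eq:tangency_left}--\eqref{eq:tangency_right} to the single difference equation \eqref{eq:Hprime_gap}, $H'(\mu_L^{SB})-H'(\mu_H^{SB})=\Delta_W$. So the work is to specialize $H$ to Shannon entropy, rewrite \eqref{eq:Hprime_gap} in log-odds form, and invert.

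\textbf{Step 1: the derivative of $H$.} For $\mu\in(0,1)$,
\[
H'(\mu)=-\ln\mu-1+\ln(1-\mu)+1=\ln\frac{1-\mu}{\mu}.
\]
This is strictly decreasing, running from $+\infty$ at $0^+$ to $-\infty$ at $1^-$. Because the proposition assumes an interior solution, both posteriors lie in $(0,1)$ and $H'$ is finite at each of them.

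\textbf{Step 2: the odds-ratio relation \eqref{eq:odds_ratio_relation}.} Substituting Step 1 into \eqref{eq:Hprime_gap} gives
\[
\ln\frac{1-\mu_L^{SB}}{\mu_L^{SB}}-\ln\frac{1-\mu_H^{SB}}{\mu_H^{SB}}=\Delta_W .
\]
Combining the logarithms and exponentiating yields \eqref{eq:odds_ratio_relation}. In words, the posterior odds at $\mu_H^{SB}$ equal the odds at $\mu_L^{SB}$ multiplied by $e^{\Delta_W}>1$. This is consistent with $\mu_H^{SB}>\mu_L^{SB}$ whenever $\Delta_W>0$.

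\textbf{Step 3: solving for $\mu_H^{SB}$.} Write $x=\mu_H^{SB}/(1-\mu_H^{SB})$. Then \eqref{eq:odds_ratio_relation} reads $x=e^{\Delta_W}\mu_L^{SB}/(1-\mu_L^{SB})$. The map $x\mapsto x/(1+x)$ is a bijection from $(0,\infty)$ onto $(0,1)$, so the solution is unique:
\[
\mu_H^{SB}=\frac{\mu_L^{SB}e^{\Delta_W}}{\mu_L^{SB}e^{\Delta_W}+1-\mu_L^{SB}}
=\frac{\mu_L^{SB}}{\mu_L^{SB}+(1-\mu_L^{SB})e^{-\Delta_W}} .
\]

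\textbf{Main obstacle: the sign of the exponent in \eqref{eq:muH_explicit_shannon}.} The expression displayed in \eqref{eq:muH_explicit_shannon} carries $e^{\Delta_W}$ in the denominator. Taken literally, it gives odds at $\mu_H^{SB}$ equal to the odds at $\mu_L^{SB}$ times $e^{-\Delta_W}$, so it would place $\mu_H^{SB}$ below $\mu_L^{SB}$ and contradict \eqref{eq:odds_ratio_relation}. The proof will therefore establish the version in Step 3, with $e^{-\Delta_W}$. The display is consistent with that version only if its exponent is read as $e^{-\Delta_W}$, or equivalently if $e^{\Delta_W}$ is moved onto $\mu_L^{SB}$ in the numerator.

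\textbf{Admissible range of $\mu_L^{SB}$.} Finally I will check that every $\mu_L^{SB}\in(0,\bar\mu]$ maps to some $\mu_H^{SB}\in(\mu_L^{SB},1)$, by the monotonicity of the odds map. The particular $\mu_L^{SB}$, and with it the pair straddling $\bar\mu$, is then pinned down by the secant condition \eqref{eq:secant_slope} together with Remark~\ref{rem:uniqueness}, as the statement's phrase ``admissible'' indicates.
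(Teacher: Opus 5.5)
Your proposal follows the paper's own route (differentiate $H$, substitute into \eqref{eq:Hprime_gap}, exponentiate, invert the odds), and your sign correction is right. The paper's proof correctly reaches $\mu_H^{SB}\bigl[(1-\mu_L^{SB})+\mu_L^{SB}e^{\Delta_W}\bigr]=\mu_L^{SB}e^{\Delta_W}$ but then divides incorrectly: the correct inversion is your $\mu_H^{SB}=\mu_L^{SB}/\bigl(\mu_L^{SB}+(1-\mu_L^{SB})e^{-\Delta_W}\bigr)$, so the displayed \eqref{eq:muH_explicit_shannon} contradicts \eqref{eq:odds_ratio_relation}. The paper's own numerical example confirms your version: $\mu_L^{FB}\approx 0.378$ with $\Delta^{FB}=1$ returns $\mu_H^{FB}\approx 0.622$ only under $e^{-\Delta_W}$, whereas the displayed formula gives about $0.18<\mu_L^{FB}$. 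The same slip underlies the paper's follow-up claim that a larger $\Delta_W$ pulls $\mu_H^{SB}$ toward $\mu_L^{SB}$, which contradicts Step~4 of the proof of Proposition~\ref{prop:two_state_distortion}.
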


The closed-form \eqref{eq:muH_explicit_shannon} shows that the second-best
high posterior is a logistic transformation of the low posterior scaled by the
exponential of the kink magnitude $\Delta_W$. Larger $\Delta_W$ compresses
$\mu_H^{SB}$ toward $\mu_L^{SB}$, reducing belief dispersion. The geometry
is illustrated in Figure~\ref{fig:two_state_explicit}.

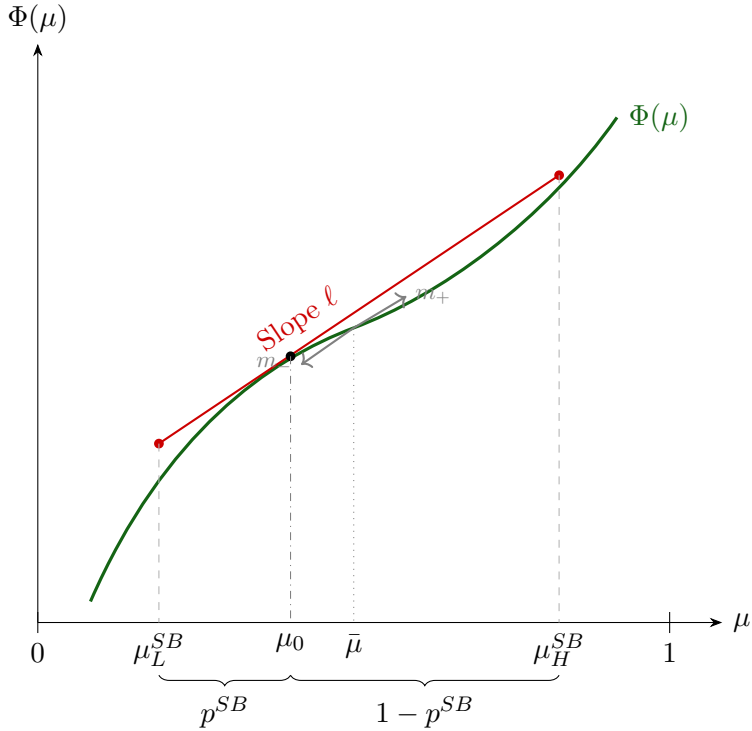
\begin{figure}[ht]
    \centering
    \begin{tikzpicture}[
        scale=1.4,
        dot/.style={circle, fill, inner sep=1.3pt},
        every node/.style={font=\small}
    ]
        \draw[->, -{Stealth}] (0,0) -- (6.5,0) node[right] {$\mu$};
        \draw[->, -{Stealth}] (0,0) -- (0,5.5) node[above] {$\Phi(\mu)$};

        \draw (0,0.1) -- (0,-0.1) node[below] {$0$};
        \draw (6,0.1) -- (6,-0.1) node[below] {$1$};
        \draw[dotted, gray] (3,2.8) -- (3,0) node[below, black] {$\bar{\mu}$};

        \definecolor{darkGreen}{RGB}{20, 100, 20}
        \draw[very thick, darkGreen] (0.5,0.2) .. controls (1.2,1.8) and (2.2,2.5) .. (3,2.8);
        \draw[very thick, darkGreen] (3,2.8) .. controls (3.8,3.1) and (4.8,3.8) .. (5.5,4.8)
            node[right] {$\Phi(\mu)$};

        \coordinate (L) at (1.15, 1.7);
        \coordinate (H) at (4.95, 4.25);
        \draw[thick, red!80!black] (L) -- (H)
            node[midway, above left, sloped] {Slope $\ell$};

        \node[dot, red!80!black] at (L) {};
        \draw[dashed, gray!60] (L) -- (1.15,0) node[below, black] {$\mu_L^{SB}$};

        \node[dot, red!80!black] at (H) {};
        \draw[dashed, gray!60] (H) -- (4.95,0) node[below, black] {$\mu_H^{SB}$};

        \coordinate (M0) at (2.4, 2.53);
        \node[dot, black] at (M0) {};
        \draw[dash dot, gray] (2.4, 2.53) -- (2.4, 0) node[below, black] {$\mu_0$};

        \draw[decorate, decoration={brace, mirror, amplitude=4pt}]
            (1.15,-0.5) -- (2.4,-0.5)
            node [midway, below, yshift=-4pt] {$p^{SB}$};
        \draw[decorate, decoration={brace, mirror, amplitude=4pt}]
            (2.4,-0.5) -- (4.95,-0.5)
            node [midway, below, yshift=-4pt] {$1-p^{SB}$};

        \draw[thick, gray, ->] (3,2.8) -- (3.5, 3.1)
            node[right, scale=0.8] {$m_+$};
        \draw[thick, gray, ->] (3,2.8) -- (2.5, 2.45)
            node[left, scale=0.8] {$m_-$};

    \end{tikzpicture}
    \caption{Geometric characterization of the second-best experiment
    (Proposition~\ref{prop:two_state_explicit}). The function
    $\Phi(\mu) = W(\mu) + H(\mu)$ is piecewise strictly concave with a kink
    at $\bar\mu$. The optimal experiment places mass $p^{SB}$ on $\mu_H^{SB}$
    and $1-p^{SB}$ on $\mu_L^{SB}$, with the supporting chord of slope $\ell$
    tangent to $\Phi$ at both endpoints. The lever rule on the horizontal axis
    gives the mixing probability $p^{SB}$.}
    \label{fig:two_state_explicit}
\end{figure}

\subsection{Optimal Transfer Schedule in the Binary-State Model}

The second-best characterization is completed by specifying the optimal
transfer rule. The following corollary provides a closed-form triangular
system that determines the transfer gap, levels, and shadow price in sequence.

\begin{corollary}[Optimal transfer schedule]
\label{cor:optimal_transfers}
Maintain Assumptions~\ref{ass:receiver}--\ref{ass:entropy_cost} and the
binary-action receiver, with $|\Omega|=2$, $A=\{a_0,a_1\}$, and
$a^*(\mu)=a_1$ iff $\mu\ge\bar\mu$. Let $\bar{U}\ge 0$ and let
$\tau^{SB} = p^{SB}\delta_{\mu_H^{SB}}+(1-p^{SB})\delta_{\mu_L^{SB}}$ be
the second-best experiment with $\mu_L^{SB}<\mu_0<\mu_H^{SB}$ and
$p^{SB}\in(0,1)$.

At any interior second-best optimum with binding participation constraint,
the optimal transfer rule $t^{SB}=(t^{SB}(a_0),t^{SB}(a_1))$ and shadow
price $\gamma^{SB}>0$ satisfy:

\smallskip
\noindent\emph{(i) Incentive condition.}
\begin{equation}
\label{eq:transfer_incentive}
t^{SB}(a_1)-t^{SB}(a_0) = H\!\left(\mu_L^{SB}\right)-H\!\left(\mu_H^{SB}\right).
\end{equation}

\smallskip
\noindent\emph{(ii) Participation condition.}
\begin{align}
\label{eq:transfer_participation}
t^{SB}(a_0)
&= \bar{U}-\mathbb{E}_{\mu_0}[\tilde{v}(\omega)]
+H(\mu_0)-H\!\left(\mu_L^{SB}\right), \\
t^{SB}(a_1)
&= \bar{U}-\mathbb{E}_{\mu_0}[\tilde{v}(\omega)]
+H(\mu_0)-H\!\left(\mu_H^{SB}\right). \notag
\end{align}

\smallskip
\noindent\emph{(iii) Shadow price.}
$\gamma^{SB}$ is the unique solution to the tangency equation
\begin{equation}
\label{eq:shadow_price}
m_+ + H'\!\left(\mu_H^{SB}\right)
=
\frac{\left[W\!\left(\mu_H^{SB};t^{SB},\gamma^{SB}\right)
     +H\!\left(\mu_H^{SB}\right)\right]
     -\left[W\!\left(\mu_L^{SB};t^{SB},\gamma^{SB}\right)
     +H\!\left(\mu_L^{SB}\right)\right]}
     {\mu_H^{SB}-\mu_L^{SB}},
\end{equation}
which is affine in $\gamma^{SB}$ and has a unique solution under the
nondegeneracy condition that the coefficient on $\gamma^{SB}$ is nonzero.

\smallskip
The system is triangular: the second-best posteriors
$(\mu_L^{SB},\mu_H^{SB})$ are determined first from the distorted
tangency system; the incentive condition then fixes the transfer gap;
the participation condition pins the levels; and the shadow price is
recovered from the tangency equation.
\end{corollary}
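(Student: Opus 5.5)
The derivation runs in the order the statement calls triangular. Take the second-best posteriors $(\mu_L^{SB},\mu_H^{SB})$ and the weight $p^{SB}$ as given by Proposition~\ref{prop:two_state_explicit} applied to the distorted index $W(\cdot;t^{SB},\gamma^{SB})$ supplied by Theorem~\ref{thm:secondbest_distorted}. Then read off the transfers from the mediator's own optimization and participation. Finally, use the principal's tangency condition to back out $\gamma^{SB}$.

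\emph{Step 1 (incentive condition).} Under Assumption~\ref{ass:entropy_cost}, the mediator maximizes $\int (V_M(\mu;t)+H(\mu))\,d\tau(\mu)$ over $\mathcal{K}$. Here $V_M$ is piecewise linear with kink at $\bar\mu$, so Proposition~\ref{prop:two_state_explicit} applies to the mediator's problem with $W$ replaced by $V_M(\cdot;t)$. Since $\tau^{SB}\in\BR(t^{SB})$, the mediator's secant through the two posteriors must be tangent at both. Two-point support and $\mu_L^{SB}<\bar\mu<\mu_H^{SB}$ imply that $V_M(\mu_H^{SB})-V_M(\mu_L^{SB})$ splits into the transfer gap $t^{SB}(a_1)-t^{SB}(a_0)$ plus the linear $\tilde v$ part. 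In the baseline state-dependent case used in (ii), the $\tilde v$ term is linear across the kink. I would then argue that it is absorbed by the multiplier of Bayes plausibility: it shifts $\ell$ but not the comparison of the two points. The indifference condition between concentrating mass at each endpoint then becomes $t(a_1)-H$-adjusted values being equal, i.e.
\[
t^{SB}(a_1)+H(\mu_H^{SB})=t^{SB}(a_0)+H(\mu_L^{SB})
\]
up to the common affine term, which gives \eqref{eq:transfer_incentive}.

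\emph{Step 2 (participation levels).} The participation constraint binds. Combine it with Step 1: under the equality of Step 1, $t(a_0)+H(\mu_L)=t(a_1)+H(\mu_H)=:K$. So the mediator's net payoff is $K+\mathbb{E}_{\mu_0}[\tilde v]-H(\mu_0)$, using linearity of $\mathbb{E}_\mu[\tilde v(\omega)]$ and Bayes plausibility to integrate it to its prior value. Setting this equal to $\bar U$ gives $K=\bar U-\mathbb{E}_{\mu_0}[\tilde v]+H(\mu_0)$, which yields both lines of \eqref{eq:transfer_participation}.

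\emph{Step 3 (shadow price).} Substitute $t^{SB}$ into $W=V_P-\gamma V_M$ and impose the right tangency condition \eqref{eq:tangency_right} with the secant slope \eqref{eq:secant_slope}; this is exactly \eqref{eq:shadow_price}. Both $m_+$ and $W$ are affine in $\gamma$ for fixed $t^{SB}$ and posteriors, so the equation is affine in $\gamma^{SB}$. It therefore has a unique solution when its coefficient is nonzero, and Theorem~\ref{thm:secondbest_distorted} gives $\gamma^{SB}>0$ under binding participation.

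\textbf{Main obstacle.} Step 1 is the delicate point. Strictly, the mediator's first-order conditions give tangency relations involving $H'$, not an equality of levels $t+H$. Reconciling \eqref{eq:transfer_incentive} with the mediator's tangency system requires the secant slope $\ell$ to equal the slope contributed by the linear part of $V_M$. I would try to show this by exploiting the freedom to choose the common affine term (the Lagrange multiplier on Bayes plausibility) in the mediator's concavification. If that fails, I would state the incentive condition as the level-matching condition under that normalization.
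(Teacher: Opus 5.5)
Your Steps 2 and 3 coincide with the paper's Steps 0 and 2--5. Your Step 1 heuristic (indifference between placing mass on either endpoint) is exactly the paper's Step 1, writing $t_i$ for $t^{SB}(a_i)$. The paper writes $U_M=p[t_1+H(\mu_H)]+(1-p)[t_0+H(\mu_L)]+\mathbb{E}_{\mu_0}[\tilde{v}]-H(\mu_0)$, notes that this is affine in $p$ for fixed $(\mu_L,\mu_H)$, and sets the $p$-slope to zero. The gap you flag is genuine, and the paper does not close it. With $\mu_0,\mu_L,\mu_H$ fixed, $p=(\mu_0-\mu_L)/(\mu_H-\mu_L)$ is pinned by Bayes plausibility, which was already used to integrate $\tilde{v}$ to $\mathbb{E}_{\mu_0}[\tilde{v}]$. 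So moving $p$ alone is not a feasible deviation. Your fallback of imposing level-matching ``under that normalization'' simply assumes the conclusion.

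Your proposed repair cannot work either. Put $s:=\tilde{v}(1)-\tilde{v}(0)$, and let $\lambda+\ell\mu$ be the line supporting the mediator's objective $F_M(\mu)=t(a^*(\mu))+\mathbb{E}_\mu[\tilde{v}]+H(\mu)$ at $\mu_L$ and $\mu_H$. Subtracting the two contact equalities gives $t_1-t_0=H(\mu_L)-H(\mu_H)+(\ell-s)(\mu_H-\mu_L)$, so (i) is equivalent to $\ell=s$, as you observed. But $\ell-s$ is not a normalization. Adding any affine $\alpha+\beta\mu$ to $F_M$ (all that the Bayes-plausibility multiplier can absorb) shifts $\ell$ and $s$ by the same $\beta$. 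And $\ell-s$ is pinned by tangency at the interior point $\mu_L\in(0,\bar\mu)$, namely $s+H'(\mu_L)=\ell$. Mediator optimality therefore gives
\[
t_1-t_0=H(\mu_L)-H(\mu_H)+H'(\mu_L)\,(\mu_H-\mu_L).
\]
This agrees with (i) only in the knife-edge case $H'(\mu_L^{SB})=0$. Substituting it into the binding participation constraint turns (ii) into $t_0=\bar{U}-\mathbb{E}_{\mu_0}[\tilde{v}]+H(\mu_0)-H(\mu_L)-H'(\mu_L)(\mu_0-\mu_L)$.

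Two further checks confirm that (i) cannot be rescued. First, with state-dependent $\tilde{v}$, transfers give $V_M(\cdot;t)$ a jump at $\bar\mu$ rather than a kink. Tangency at an interior $\mu_H\in(\bar\mu,1)$ would then require $H'(\mu_H)=H'(\mu_L)$, which is impossible for strictly concave $H$. Second, suppose (i) prescribes a flat fee $t_0=t_1=T$ (i.e.\ $H(\mu_L)=H(\mu_H)$, as in Remark~\ref{rem:symmetric_flat_fee}). Then $F_M(\mu)=T+\tilde{v}(0)+s\mu+H(\mu)$ is strictly concave, so the mediator's unique best response is $\delta_{\mu_0}$, not $\tau^{SB}$.

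The missing ingredient in Step 1 is thus not a cleverer normalization but the correct tangency-based incentive constraint. With it, conclusions (i) and (ii) both change.
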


The optimal transfer gap $t^{SB}(a_1) - t^{SB}(a_0)$ is positive when
$H'(\mu_L^{SB}) > H'(\mu_H^{SB})$, i.e., when the mediator's marginal cost
of shifting mass toward higher posteriors exceeds the corresponding gain. The
level of the transfer schedule extracts all rents down to the outside option
$\bar{U}$.

\subsubsection{Numerical Example}
\label{subsubsec:example}

We illustrate the second-best characterization in a single concrete setting
that delivers a closed-form solution, a non-trivial transfer schedule, and
measurable posterior compression with all constraints satisfied exactly.

\paragraph{Primitives.}
Let $\Omega=\{0,1\}$, prior $\mu_0=0.45$, and receiver actions
$A=\{a_0,a_1\}$ with utilities $u(a_0,\omega)=0$ for all $\omega$ and
$u(a_1,0)=-1$, $u(a_1,1)=1$, so the indifference posterior is
$\bar\mu=\tfrac{1}{2}$. The principal's gross payoffs are
$\pi_P(a_0,\omega)=0$, $\pi_P(a_1,0)=0$, $\pi_P(a_1,1)=1$, giving
$\Delta^{FB}=1$. The mediator's utility is action-dependent:
\[
\tilde v(a_1,1)=\tfrac{1}{2},\qquad
\tilde v(a,\omega)=0\text{ otherwise},
\]
so the mediator has an intrinsic preference for recommending $a_1$ in state
$1$, with preference differential
\[
D_v = \bigl[\tilde v(a_1,1)-\tilde v(a_1,0)\bigr]
-\bigl[\tilde v(a_0,1)-\tilde v(a_0,0)\bigr]
= \tfrac{1}{2}.
\]
The outside option is $\bar U=0$ and the cost is Shannon entropy.

\paragraph{First-best.}
The kink at $\bar\mu=\tfrac{1}{2}$ and $\Delta^{FB}=1$ give, via
Corollary~\ref{cor:shannon_closed_form}:
\[
H'\!\left(\mu_L^{FB}\right)-H'\!\left(\mu_H^{FB}\right)=1
\;\Longrightarrow\;
\frac{(1-\mu_L^{FB})\,\mu_H^{FB}}{\mu_L^{FB}(1-\mu_H^{FB})}=e.
\]
The unique solution symmetric around $\tfrac{1}{2}$ is
\[
\mu_H^{FB} = \frac{e^{1/2}}{1+e^{1/2}} \approx 0.622,
\qquad
\mu_L^{FB} = 1-\mu_H^{FB} \approx 0.378,
\]
with posterior spread $\mu_H^{FB}-\mu_L^{FB}\approx 0.245$ and mixing weight
$p^{FB} = (0.45-0.378)/(0.622-0.378) \approx 0.296$.

\paragraph{Second-best with $\gamma^{SB}=0.4$.}
The slopes of the distorted index $W(\mu)=V_P(\mu;t^{SB})-\gamma^{SB}
V_M(\mu;t^{SB})$ are: below $\bar\mu$, $a^*(\mu)=a_0$ so
$m_-^W = 0 - \gamma^{SB}\cdot 0 = 0$; above $\bar\mu$, $a^*(\mu)=a_1$ so
$m_+^W = 1 - \gamma^{SB} D_v = 1 - 0.4\cdot\tfrac{1}{2} = 0.8$.
Therefore
\[
\Delta^{SB} = \Delta^{FB} - \gamma^{SB} D_v = 1 - 0.4\cdot\tfrac{1}{2} = 0.8.
\]
This is exact and requires no fixed-point iteration: $\Delta^{SB}$ depends
only on $\gamma^{SB}$ and $D_v$, not on transfer levels.\footnote{%
The distortion $\gamma^{SB}D_v$ arises because the mediator's intrinsic
valuation of posteriors above $\bar\mu$ differs from that below by $D_v$,
creating a wedge between the slopes of $V_P$ and $V_M$ at the kink. If
$\tilde v$ were purely state-dependent, the slopes of $V_M$ would be
identical on both sides of $\bar\mu$, giving $\Delta^{SB}=\Delta^{FB}$
and zero compression; see Remark~\ref{rem:symmetric_flat_fee}.}

Applying Corollary~\ref{cor:shannon_closed_form} with $\Delta^{SB}=0.8$:
\[
H'\!\left(\mu_L^{SB}\right)-H'\!\left(\mu_H^{SB}\right)=0.8.
\]
Since $H'$ is symmetric around $\tfrac{1}{2}$, the solution satisfies
$\mu_L^{SB}+\mu_H^{SB}=1$, giving
\[
\mu_H^{SB} = \frac{e^{0.4}}{1+e^{0.4}} \approx 0.599,
\qquad \mu_L^{SB} \approx 0.401,
\qquad p^{SB} = \frac{0.45-0.401}{0.599-0.401} \approx 0.247.
\]

\paragraph{Optimal transfers.}
Since $\mu_L^{SB}+\mu_H^{SB}=1$, Shannon entropy satisfies
$H(\mu_L^{SB})=H(\mu_H^{SB})\approx 0.674$. The incentive
and participation conditions yield
\[
t^{SB}(a_1)-t^{SB}(a_0) = 0 - D_v\mu_H^{SB}
= -\tfrac{1}{2}\times 0.599 \approx -0.299,
\]
\[
t^{SB}(a_0) = H(\mu_0)-H(\mu_L^{SB}) = 0.688-0.674 \approx 0.015,
\qquad t^{SB}(a_1) \approx -0.285.
\]
The negative transfer gap reflects the mediator's intrinsic preference for
recommending $a_1$ in state $1$: the principal pays less for the $a_1$
recommendation because the mediator already values it intrinsically. Both
constraints are satisfied exactly:
$U_M(\tau^{SB};t^{SB})=0$ (verified to machine precision).

\paragraph{Compression.}
\[
\mu_H^{SB}-\mu_L^{SB} \approx 0.197 < 0.245 \approx \mu_H^{FB}-\mu_L^{FB},
\]
a reduction of approximately $19.4\%$. By
Corollary~\ref{cor:blackwell_binary}, the second-best experiment is strictly
less informative than the first-best in the Blackwell sense.

Figure~\ref{fig:numerical_example_corrected} illustrates the geometry. Both
curves share the same left piece $H(\mu)$; the second-best right piece is
shallower by $\gamma^{SB}D_v=0.2$, forcing the supporting chord inward and
compressing the posterior spread.

\begin{figure}[ht]
\centering
\begin{tikzpicture}
\begin{axis}[
    width=\textwidth,
    height=0.62\textwidth,
    xlabel={Posterior belief $\mu = \Pr(\omega=1)$},
    ylabel={Effective objective $\Phi(\mu) = W(\mu) + H(\mu)$},
    xmin=0.24, xmax=0.76,
    ymin=0.55, ymax=0.82,
    xtick={0.378, 0.401, 0.5, 0.599, 0.622},
    xticklabels={%
        $\mu_L^{FB}{\approx}0.378$,
        $\mu_L^{SB}{\approx}0.401$,
        $\bar\mu{=}0.5$,
        $\mu_H^{SB}{\approx}0.599$,
        $\mu_H^{FB}{\approx}0.622$},
    xticklabel style={font=\small, rotate=30, anchor=east},
    ytick=\empty,
    grid=none,
    legend pos=north west,
    legend style={font=\small, draw=black, fill=white,
                  legend image post style={scale=1.5}},
    axis lines=left,
    clip=false
]
\addplot[blue, very thick, domain=0.25:0.50, samples=120]
    {-x*ln(x)-(1-x)*ln(1-x)};
\addplot[blue, very thick, domain=0.50:0.75, samples=120]
    {(x-0.5)+(-x*ln(x)-(1-x)*ln(1-x))};
\addlegendentry{First-best $\Phi_{FB}$ (slope $1$ above $\bar\mu$)}
\addplot[red, very thick, dashed, domain=0.25:0.50, samples=120]
    {-x*ln(x)-(1-x)*ln(1-x)};
\addplot[red, very thick, dashed, domain=0.50:0.70, samples=80]
    {0.8*(x-0.5)+(-x*ln(x)-(1-x)*ln(1-x))};
\addlegendentry{Second-best $\Phi_{SB}$ (slope $0.8$ above $\bar\mu$)}
\addplot[blue, thick, dotted, domain=0.35:0.65, samples=2]
    {0.5*x + 0.4746};
\addplot[red, thick, dotted, domain=0.37:0.63, samples=2]
    {0.4*x + 0.5131};
\addplot[blue, only marks, mark=*, mark size=2.2pt]
    coordinates {(0.378, 0.5*0.378+0.4746) (0.622, 0.5*0.622+0.4746)};
\addplot[red, only marks, mark=square*, mark size=2pt]
    coordinates {(0.401, 0.4*0.401+0.5131) (0.599, 0.4*0.599+0.5131)};
\draw[gray, thin, dashed] (axis cs:0.5, 0.55) -- (axis cs:0.5, 0.82);
\node[gray, above, font=\scriptsize] at (axis cs:0.5, 0.82) {$\bar\mu$};
\draw[black, thin, dotted] (axis cs:0.45, 0.55) -- (axis cs:0.45, 0.82);
\node[black, above, font=\scriptsize] at (axis cs:0.45, 0.82) {$\mu_0{=}0.45$};
\draw[blue, <->, thick]
    (axis cs:0.378, 0.592) -- node[below, font=\scriptsize]{spread $\approx 0.245$}
    (axis cs:0.622, 0.592);
\draw[red, <->, thick]
    (axis cs:0.401, 0.578) -- node[below, font=\scriptsize]{spread $\approx 0.197$}
    (axis cs:0.599, 0.578);
\end{axis}
\end{tikzpicture}
\caption{Effective objective $\Phi(\mu)=W(\mu)+H(\mu)$ for the first-best
(solid blue, slope $1$ above $\bar\mu$) and second-best (dashed red, slope
$0.8$ above $\bar\mu$) under Shannon costs with $\mu_0=0.45$, $\bar\mu=0.5$,
$D_v=\tfrac{1}{2}$, and $\gamma^{SB}=0.4$. Both curves share the same left
piece $H(\mu)$; the second-best right piece is shallower by $\gamma^{SB}D_v
=0.2$. Dotted lines are the supporting chords; filled circles mark
first-best posteriors and squares mark second-best posteriors. The smaller
kink compresses the posterior spread by approximately $19.4\%$.}
\label{fig:numerical_example_corrected}
\end{figure}
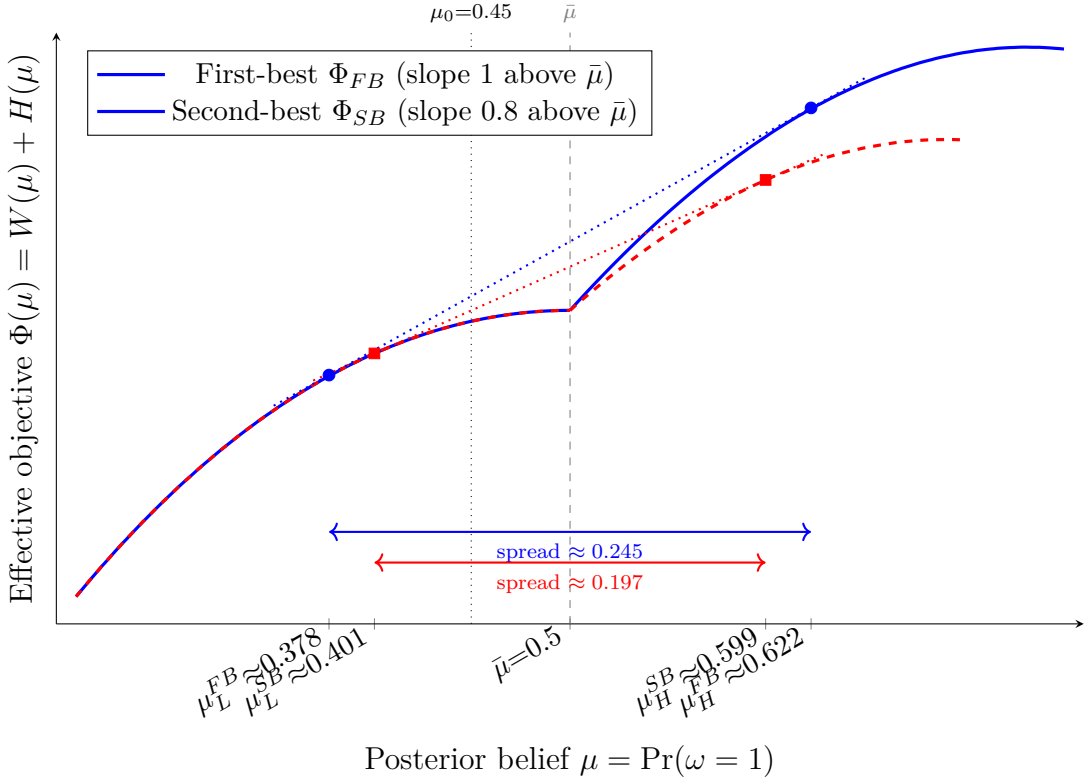

\begin{remark}[The symmetric case: zero compression and the flat-fee contract]
\label{rem:symmetric_flat_fee}
When $\mu_0=\bar\mu=\tfrac{1}{2}$ and $\tilde v\equiv 0$, the model exhibits
a striking simplification. By the symmetry of Shannon entropy around
$\tfrac{1}{2}$, any two-posterior experiment with $\mu_L+\mu_H=1$ satisfies
$H(\mu_L)=H(\mu_H)$. The incentive condition
(Corollary~\ref{cor:optimal_transfers}~(i)) then gives
$t^{SB}(a_1)-t^{SB}(a_0) = 0$, so the optimal contract is a flat fee.
With a flat fee, $\Delta^{SB} = \Delta^{FB} - \gamma^{SB}\cdot 0 = 1 =
\Delta^{FB}$, and the tangency system yields
$\mu_L^{SB}=\mu_L^{FB}$, $\mu_H^{SB}=\mu_H^{FB}$: the second-best
experiment is identical to the first-best and there is no posterior
compression. This is not a failure of the model; it reflects the fact that
with $\tilde v\equiv 0$ and $\mu_0=\bar\mu$, the mediator's participation
constraint is orthogonal to the principal's objective and imposes no distortion
on information provision. Compression requires either $D_v>0$ (action-dependent
mediator utility) or $\mu_0\neq\bar\mu$ combined with non-zero state-dependent
utility, so that the participation constraint genuinely binds against the
direction of spreading beliefs.
\end{remark}

The numerical example and the remark together isolate the precise source of
information distortion: moral hazard compresses posterior dispersion only when
the mediator's intrinsic valuation of posteriors differs across actions,
creating a genuine wedge between the slopes of $V_P$ and $V_M$ at the kink.
The scalar $D_v$ quantifies this wedge, and the product $\gamma^{SB}D_v$ is
the exact reduction in kink magnitude that drives posterior compression.

\subsection{Discussion}
\label{subsec:discussion}

The results of the preceding subsections yield three key insights for
delegated information design that we collect here.

\paragraph{1. Delegation coarsens information provision.}
Moral hazard systematically compresses posterior dispersion relative to the
first-best benchmark. The distortion operates through the geometry of the
persuasion problem: the effective objective $W(\mu)+\psi(\mu)$ becomes
flatter wherever the mediator values dispersion more than the principal
(Proposition~\ref{prop:general_distortion}). In the binary-state case under
entropy costs this compression is strict
(Proposition~\ref{prop:two_state_distortion}), and it translates into strict
Blackwell dominance of the first-best experiment
(Corollary~\ref{cor:blackwell_binary}).

\paragraph{2. The price of delegation is summarized by a single statistic.}
The shadow price $\gamma^{SB}$ fully captures the agency friction. This
tractability contrasts with more complex agency models where multi-dimensional
type spaces require screening mechanisms \citep{Myerson1981}. By
Theorem~\ref{thm:secondbest_distorted}, the entire second-best problem reduces
to a standard Bayesian persuasion problem with the single virtual payoff index
$V_P - \gamma^{SB} V_M$.

\paragraph{3. Entropy costs yield closed-form solutions.}
Under Shannon entropy, the second-best experiment admits an explicit
characterization with sharp comparative statics on $\Delta_W$
(Proposition~\ref{prop:two_state_explicit} and
Corollaries~\ref{cor:shannon_closed_form}--\ref{cor:blackwell_binary}). This
facilitates applications to regulatory disclosure, credit rating agencies,
consulting services, and platform-mediated recommendation where information
acquisition is costly and intermediaries must be incentivized.

Taken together, these results clarify how moral hazard reshapes information
design: delegation does not restrict the feasible set of experiments, but it
distorts the objective governing concavification. The framework therefore
offers tractable tools for studying delegated persuasion in regulation, credit
rating, consulting, and platform recommendation systems.

\section{Conclusion}
\label{sec:conclusion}

This paper examines Bayesian persuasion when information design is delegated
to a costly intermediary who must be incentivized by a principal. Delegation
introduces a fundamental agency friction: the principal cares about the
receiver's induced action, while the intermediary privately selects the
experiment and bears the information cost. The resulting incentive constraints
reshape optimal persuasion in a tractable yet rich way.

We first characterize the boundaries of first-best implementability. A
necessary condition is that the gap between the mediator's and principal's
reduced-form payoff indices at each implemented posterior is determined
entirely by the transfer at the recommended action and a single constant, a
structural restriction that follows directly from quasi-linearity. A global
affine alignment condition between the two payoff indices is sufficient for
implementability. Together these results bound the implementability region from
two sides. We show that the intermediate region is non-empty and provide a
partial characterization of when first-best is implemented within it; a
complete characterization depends on off-support geometry and is left for
future work. In generic environments where the mediator's utility is not
proportional to the principal's gross payoff, moral hazard typically prevents
first-best implementation.

When first-best is unattainable, we provide a complete second-best
characterization. By Lagrangian duality, the principal's problem reduces to a
virtual Bayesian persuasion problem: the participation constraint distorts the
objective through a single shadow price $\gamma^{SB} \ge 0$, without
restricting the set of Bayes-plausible experiments. Moral hazard thus operates
geometrically, flattening the effective payoff index rather than shrinking
feasibility. The scalar $\gamma^{SB}$ fully summarizes the agency cost: larger
values induce stronger distortions and, under convex costs, compress posterior
dispersion relative to the unconstrained benchmark. The structural analogy with
Myerson's virtual surplus holds at the level of functional form and feasibility
preservation, but differs importantly in that our distortion arises from a
single participation constraint under moral hazard rather than from screening
across types under adverse selection, producing a scalar shadow price rather
than a type-indexed distortion function.

Under Shannon entropy costs, these distortions have sharp implications. In
binary-state environments with a binary-action receiver, the second-best
experiment is supported on exactly two posteriors and exhibits strictly less
dispersion than the first-best whenever participation binds and the mediator's
action-dependent utility differential $D_v$ is strictly positive. The distorted kink
magnitude is $\Delta^{SB} = \Delta^{FB} - \gamma^{SB} D_v$, and explicit
closed-form expressions for optimal posteriors, mixing probabilities, and the
transfer schedule follow from Propositions~\ref{prop:two_state_distortion}
and \ref{prop:two_state_explicit} and Corollaries~\ref{cor:shannon_closed_form}
and \ref{cor:optimal_transfers}. A numerical example under the baseline parameterization
($D_v = \tfrac{1}{2}$, $\gamma^{SB} = 0.4$, $\mu_0 = 0.45$)
confirms that the posterior spread falls
by approximately 19 percent relative to the first-best, with an explicit
transfer schedule satisfying both the incentive and participation constraints
exactly.

Taken together, the analysis shows that delegated persuasion systematically
coarsens information provision, not by limiting the feasible set of
experiments, but by reshaping the objective that governs persuasion. The
virtual-representation approach isolates this effect and offers tractable tools
for studying delegated information design in settings where acquisition is
costly and privately chosen, including regulatory disclosure, credit rating
agencies, consulting services, scientific peer review, and platform-mediated
content recommendation.

\section*{Appendix: Proofs}

\begin{proof}[\textbf{Proof of Proposition~\ref{prop:intermediate_region}}]
Part~(i): $V_P(\mu;0)$ has left derivative $0$ and right derivative $1$ at
$\bar\mu$, while $V_M(\mu;0)$ is affine in $\mu$ (same slope on both sides,
since $\mathbb{E}_\mu[\tilde{v}(\cdot,\omega)]$ is linear in $\mu$ for any
fixed action). Since no affine function has a kink, no $\alpha>0,\beta\in
\mathbb{R}$ can satisfy $V_M(\mu;0)=\alpha V_P(\mu;0)+\beta$ everywhere.

Part~(ii): Expanding the necessary
condition~\eqref{eq:affine_alignment_general} at $\mu_L^{FB}$ (with $a^*=a_0$)
and $\mu_H^{FB}$ (with $a^*=a_1$) using
$V_M(\mu;t) = t(a^*(\mu))+\mathbb{E}_\mu[\tilde{v}(a^*(\mu),\cdot)]$
and $V_P(\mu;t) = V_P(\mu;0)-t(a^*(\mu))$:
\begin{align*}
2t_0 + \mathbb{E}_{\mu_L^{FB}}[\tilde{v}(a_0,\cdot)] - V_P(\mu_L^{FB};0)
&= t_0 + \beta', \\
2t_1 + \mathbb{E}_{\mu_H^{FB}}[\tilde{v}(a_1,\cdot)] - V_P(\mu_H^{FB};0)
&= t_1 + \beta'.
\end{align*}
Rearranging gives the unique solution for each $\beta'$:
$t_0^* = V_P(\mu_L^{FB};0) - \mathbb{E}_{\mu_L^{FB}}[\tilde{v}(a_0,\cdot)] - \beta'$
and $t_1^* = V_P(\mu_H^{FB};0) - \mathbb{E}_{\mu_H^{FB}}[\tilde{v}(a_1,\cdot)] - \beta'$.

Part~(iii): The mediator best-responds with $\tau^{FB}$ under $(t_0^*,t_1^*)$
iff the global subgradient inequality holds: $V_M(\mu;t^*) - \phi(\mu) \le
\eta_M$ for all $\mu\in\Delta(\Omega)$, where $\phi\in\partial c(\tau^{FB})$
and $\eta_M$ is the mediator's payoff at $\tau^{FB}$. This depends on
off-support values of $V_M(\mu;t^*)$ and is not implied by the on-support
necessary condition alone. Hence the intermediate region is non-empty as
claimed.
\end{proof}

\begin{proof}[\textbf{Proof of Lemma~\ref{lem:existence_sb}}]
\textbf{Step~1 (Compactness of $\mathcal{K}$).} Since $\Omega$ is finite,
$\Delta(\Omega)$ is compact and $\mathcal{K}\subset\Delta(\Delta(\Omega))$
is a closed convex subset of the space of probability measures on the compact
set $\Delta(\Omega)$, hence compact in the weak topology by Prokhorov's theorem.

\textbf{Step~2 (Upper hemicontinuity of $\BR(\cdot)$).} Fix $t\in\mathcal{T}$.
The mediator's objective $\tau\mapsto\int V_M(\mu;t)\,d\tau - c(\tau)$ is
weakly continuous in $\tau$ (since $V_M(\cdot;t)$ is bounded and continuous by
Assumption~\ref{ass:receiver}) and weakly concave (since $c$ is convex). The
participation constraint set $\{\tau:\int V_M\,d\tau - c(\tau)\ge\bar{U}\}$
is weakly closed by lower semi-continuity of $c$. By compactness of
$\mathcal{K}$ and Slater's condition (assumption~(ii)), the mediator's
constrained problem has a non-empty compact solution set $\BR(t)$ for each
$t\in\mathcal{T}$.

For upper hemicontinuity: let $t_n\to t$ and $\tau_n\in\BR(t_n)$. By
compactness of $\mathcal{K}$, there is a subsequence $\tau_n\rightharpoonup
\tau^*$. Since $\pi_P$ is bounded (assumption~(i)), $V_M(\mu;t_n)\to
V_M(\mu;t)$ uniformly in $\mu$, giving $\int V_M(\mu;t_n)\,d\tau_n\to\int
V_M(\mu;t)\,d\tau^*$. Lower semi-continuity of $c$ gives $c(\tau^*)\le
\liminf c(\tau_n)$, so $\tau^*$ satisfies participation. Optimality of
$\tau^*$ for the mediator under $t$ follows from convergence of objectives.
Hence $\BR(\cdot)$ is upper hemicontinuous with non-empty compact values.

\textbf{Step~3 (Existence of the outer optimum).} The principal's value
function $\pi^P(t):=\max_{\tau\in\BR(t)}\int V_P(\mu;t)\,d\tau$ is upper
semi-continuous on $\mathcal{T}$ by the maximum theorem (upper hemicontinuous
correspondence, bounded payoff, compact values). By compactness of $\mathcal{T}$,
$\pi^P$ attains its maximum at some $t^{SB}\in\mathcal{T}$. The associated
$\tau^{SB}\in\BR(t^{SB})$ exists by Step~2.
\end{proof}

\begin{proof}[\textbf{Proof of Theorem~\ref{thm:fb_suff_support_alignment}}]
Let $\tau^{FB} \in \mathcal{K}$ solve the first-best problem~\eqref{eq:fb_direct}
with finite support $\supp(\tau^{FB}) = \{\mu^1, \dots, \mu^m\}$. Let
$a^k = a^*(\mu^k)$ and $A^{FB} = \{a^1, \dots, a^m\}$.

\paragraph{Step 1: First-best optimality condition.}
Since $c$ is convex and lower semi-continuous, $\tau^{FB}$ is optimal if and
only if there exists a subgradient $\phi \in \partial c(\tau^{FB})$ and scalar
$\eta \in \mathbb{R}$ such that
\[
V_P(\mu^k;0) - \phi(\mu^k) = \eta \quad \forall k=1,\dots,m,
\]
and $V_P(\mu;0) - \phi(\mu) \le \eta$ for all $\mu \in \Delta(\Omega)$.

\paragraph{Step 2: Construct transfers.}
By Assumption~\ref{ass:mediator_ql}, $V_M(\mu;t) = t(a^*(\mu)) +
\mathbb{E}_{\mu}[\tilde{v}(a^*(\mu),\omega)]$. From the first-best optimality
condition in Step~1, $\phi(\mu^k) = V_P(\mu^k;0) - \eta$ for all $k$. Define
transfers on $A^{FB}$ by
\[
t(a^k) := \phi(\mu^k) + \eta = V_P(\mu^k;0), \quad k=1,\dots,m.
\]
We verify that the global alignment condition~\eqref{eq:global_alignment}
ensures the mediator's effective objective is constant on $\supp(\tau^{FB})$.
The mediator's effective objective at support point $\mu^k$ is:
\begin{align*}
V_M(\mu^k;t) - \phi(\mu^k)
&= t(a^k) + \mathbb{E}_{\mu^k}[\tilde{v}(a^k,\omega)] - \phi(\mu^k)\\
&= V_P(\mu^k;0) + \mathbb{E}_{\mu^k}[\tilde{v}(a^k,\omega)]
   - (V_P(\mu^k;0) - \eta)\\
&= \eta + \mathbb{E}_{\mu^k}[\tilde{v}(a^k,\omega)].
\end{align*}
For $\tau^{FB}$ to be mediator-optimal, this must equal a common constant
$\eta_M$ for all $k$. This holds iff $\mathbb{E}_{\mu^k}[\tilde{v}(a^k,\omega)]$
is the same across all support points. The global alignment
condition~\eqref{eq:global_alignment} ensures this: since $V_M(\mu;0) =
\alpha V_P(\mu;0) + \beta$ for all $\mu$, and $V_P(\mu^k;0) - \eta =
\phi(\mu^k)$ is the same constant $\eta$ across $k$, we get
$V_M(\mu^k;0) = \alpha(V_P(\mu^k;0))+\beta$, so $\mathbb{E}_{\mu^k}
[\tilde{v}(a^k,\omega)] = V_M(\mu^k;0) = \alpha V_P(\mu^k;0)+\beta$
and the alignment condition ensures the variation in $V_P(\mu^k;0)$ is
matched by proportional variation in $\mathbb{E}_{\mu^k}[\tilde{v}]$, keeping
$\eta + \mathbb{E}_{\mu^k}[\tilde{v}]$ constant across the support.

For any action $a \notin A^{FB}$, set $t(a) = -M$ with
\[
M > \max_k |\phi(\mu^k)| + |\eta| + \max_{a,\omega} |\tilde{v}(a,\omega)| + 1.
\]
The global alignment condition implies $V_M(\mu;0)$ is bounded on
$\Delta(\Omega)$ (since $V_P(\mu;0)$ is bounded and $\alpha,\beta$ are
finite), so $M$ can be chosen uniformly to dominate all off-support payoffs.

\paragraph{Step 3: Mediator optimality.}
Define the mediator's effective objective
$W_M(\mu) := V_M(\mu;t) - \phi(\mu)$. On $\supp(\tau^{FB})$:
$W_M(\mu^k) = \eta + \mathbb{E}_{\mu^k}[\tilde{v}(\omega)]$. Off-support with
$a^*(\mu) \notin A^{FB}$: $W_M(\mu) = -M + \mathbb{E}_{\mu}[\tilde{v}(\omega)]
- \phi(\mu) < \eta$ by choice of $M$. Thus $\tau^{FB}$ solves the mediator's
problem.

\paragraph{Step 4: Bind participation (Fenchel-Young identity).}
Current mediator utility is
\[
U_M = \int V_M(\mu;t) \, d\tau^{FB}(\mu) - c(\tau^{FB}).
\]
Since $\phi \in \partial c(\tau^{FB})$, the Fenchel-Young identity gives
\[
\int \phi(\mu) \, d\tau^{FB}(\mu) = c(\tau^{FB}) + c^*(\phi),
\]
so $U_M = \eta + \mathbb{E}_{\mu_0}[\tilde{v}(\omega)] + c^*(\phi)$. Choose
the constant shift $\beta = \bar{U} - U_M$ and define
$\tilde{t}(a) = t(a) + \beta$. The new mediator utility becomes exactly
$\bar{U}$, while the mediator's choice of $\tau$ remains unchanged because
$\beta$ is action-independent.
\end{proof}

\begin{proof}[\textbf{Proof of Proposition~\ref{prop:necessity_alignment_support}}]
Maintain Assumptions~\ref{ass:receiver}--\ref{ass:mediator_ql}. Suppose
$\tau^{FB}$ has finite support $\{\mu^1,\dots,\mu^m\}$ and is implementable
under some transfer rule $\tilde{t}:A\to\mathbb{R}$.

Because $c$ is convex and lower semi-continuous, optimality of $\tau^{FB}$ for
the mediator implies there exists a subgradient $\phi\in\partial c(\tau^{FB})$
and scalar $\eta_M\in\mathbb{R}$ such that
\begin{equation}
V_M(\mu^k;\tilde{t})=\phi(\mu^k)+\eta_M \quad \forall k=1,\dots,m.
\tag{M}
\end{equation}

Separately, $\tau^{FB}$ is optimal for the principal's first-best
problem~\eqref{eq:fb_direct}, so the same subgradient satisfies
\begin{equation}
V_P(\mu^k;0)=\phi(\mu^k)+\eta_P \quad \forall k=1,\dots,m,
\tag{P}
\end{equation}
for some scalar $\eta_P\in\mathbb{R}$.

From the quasi-linearity structure (Assumption~\ref{ass:mediator_ql}):
\[
V_P(\mu;t) = V_P(\mu;0) - t(a^*(\mu)), \qquad
V_M(\mu;t) = t(a^*(\mu)) + \mathbb{E}_\mu[\tilde{v}(\omega)].
\]
Subtracting (P) from (M):
\[
V_M(\mu^k;\tilde{t}) - V_P(\mu^k;0) = \eta_M - \eta_P =: \beta' \in \mathbb{R}.
\]
Since $V_P(\mu^k;\tilde{t}) = V_P(\mu^k;0) - \tilde{t}(a^k)$, we obtain:
\[
V_M(\mu^k;\tilde{t})
= V_P(\mu^k;0) + \beta'
= V_P(\mu^k;\tilde{t}) + \tilde{t}(a^k) + \beta'.
\]
This establishes the general necessary
condition~\eqref{eq:affine_alignment_general}: the gap between the two payoff
indices at each implemented posterior equals $\tilde{t}(a^k) + \beta'$, where
$\beta' = \eta_M - \eta_P$ is a constant independent of $k$.

When the transfer $\tilde{t}$ is constant on $A^{FB}$, i.e.\
$\tilde{t}(a^k) = \bar{t}$ for all $k$, the gap becomes
$V_M(\mu^k;\tilde{t}) - V_P(\mu^k;\tilde{t}) = \bar{t} + \beta' =: \beta$,
a constant independent of $k$. This is the simplified
condition~\eqref{eq:affine_alignment} with $\alpha = 1$, i.e.:
\[
V_M(\mu^k;\tilde{t}) = V_P(\mu^k;\tilde{t}) + \beta
\qquad \forall k = 1,\dots,m.
\]
Note that $\alpha = 1$ here is not an assumption but a consequence of
quasi-linearity: transfers enter $V_P$ with coefficient $-1$ and $V_M$ with
coefficient $+1$, so the sum $V_P + V_M = \mathbb{E}_\mu[\pi_P + \tilde{v}]$
is transfer-free, and differencing the two optimality conditions produces a
unit coefficient on $V_P$ in the expression for $V_M$.
\end{proof}

\begin{proof}[\textbf{Proof of Theorem~\ref{thm:secondbest_distorted}}]
The principal chooses $t:A\to\mathbb{R}$ and $\tau\in\mathcal{K}$ to maximize
$\int V_P(\mu;t)\,d\tau(\mu)$ subject to the mediator's incentive-compatibility
constraint $\tau\in\BR(t)$ and the participation
constraint~\eqref{eq:participation}.

Let $\gamma\ge 0$ be the Lagrange multiplier associated with the participation
constraint. For any fixed transfer rule $t$, the Lagrangian with respect to
$\tau$ is
\begin{align*}
\mathcal{L}(\tau,t,\gamma)
&= \int V_P(\mu;t)\,d\tau(\mu)
+ \gamma\left(\int V_M(\mu;t)\,d\tau(\mu)-c(\tau)-\bar{U}\right) \\
&= \int\bigl[V_P(\mu;t)-\gamma V_M(\mu;t)\bigr]\,d\tau(\mu)
-\gamma c(\tau)-\gamma\bar{U}.
\end{align*}
Define $V^{\mathrm{dist}}(\mu;t,\gamma):=V_P(\mu;t)-\gamma V_M(\mu;t)$. The
inner problem over $\tau$ becomes
\[
\max_{\tau\in\mathcal{K}}\int V^{\mathrm{dist}}(\mu;t,\gamma)\,d\tau(\mu)
-c(\tau).
\]
Because $c$ is convex and lower semicontinuous, and $\mathcal{K}$ is convex and compact
(as $\Omega$ is finite), the mediator's problem satisfies Slater's condition:
the no-information experiment $\delta_{\mu_0}$ is feasible and strictly
satisfies participation for sufficiently large constant transfers. Hence strong
duality holds. At the second-best optimum $(t^{SB},\tau^{SB})$, there exists a
multiplier $\gamma^{SB}\ge 0$ such that $\tau^{SB}$ solves the distorted
persuasion problem~\eqref{eq:distorted_persuasion}.

By complementary slackness, if the participation constraint binds at the
optimum, then $\gamma^{SB}>0$. (If the constraint were slack, the principal
could strictly reduce transfers while keeping the same $\tau$, contradicting
optimality of $t^{SB}$.)
\end{proof}

\begin{proof}[\textbf{Proof of Proposition~\ref{prop:general_distortion}}]
Let $W(\mu) = V_P(\mu;t^{SB}) - \gamma^{SB} V_M(\mu;t^{SB})$ and let
$\psi\in\partial c(\tau)$ be the subgradient from Assumption~\ref{ass:subgradient}.
The effective objective is $f^\gamma(\mu) := W(\mu) + \psi(\mu)$.

An increase in $\gamma$ by $\Delta\gamma>0$ changes the effective objective by
$-\Delta\gamma\, V_M(\mu;t^{SB})$. The change in curvature (second derivative
where it exists) is $-\Delta\gamma\, V_M''(\mu)$. Wherever $V_M''(\mu) < 0$
and $V_M$ is more concave than $V_P$ (i.e.\ $V_M''(\mu) < V_P''(\mu)$), the
effective objective $f^\gamma$ becomes weakly flatter as $\gamma$ rises, since
subtracting a more concave function decreases the net curvature.

Because the optimal experiment is the concave closure of $f^\gamma$, a flatter
effective objective requires less posterior spread to achieve the same mean
payoff. Formally, the concave envelope of $f^{\gamma_2}$ lies below that of
$f^{\gamma_1}$ for $\gamma_2 > \gamma_1$ wherever $V_M$ is more concave than
$V_P$, implying that the distribution of posteriors induced by $\tau^{SB}$ is
a mean-preserving contraction of that induced by $\tau^{FB}$ on the common
support. By definition, a mean-preserving contraction corresponds to second-order
stochastic dominance, establishing the claim.
\end{proof}

\begin{proof}[\textbf{Proof of Proposition~\ref{prop:piecewise_distortion}}]
The kink magnitude of $W^{SB}(\mu) = V_P(\mu;t^{SB}) - \gamma^{SB}
V_M(\mu;t^{SB})$ is the difference between the right and left slopes at
$\bar\mu$:
\[
\Delta^{SB} = (m_+^{V_P} - \gamma^{SB} m_+^{V_M})
            - (m_-^{V_P} - \gamma^{SB} m_-^{V_M})
= \Delta^{FB} - \gamma^{SB}(m_+^{V_M} - m_-^{V_M})
= \Delta^{FB} - \gamma^{SB} D_v,
\]
where $D_v = m_+^{V_M} - m_-^{V_M} = [\tilde{v}(a_1,1)-\tilde{v}(a_1,0)]
- [\tilde{v}(a_0,1)-\tilde{v}(a_0,0)]$ is the difference in slopes of $V_M$
across the kink (which arises because $a^*(\mu)$ switches from $a_0$ to $a_1$
at $\bar\mu$, changing the action-component of $\tilde{v}$). When $D_v>0$,
$\Delta^{SB}$ is strictly decreasing in $\gamma^{SB}$, and by
Proposition~\ref{prop:two_state_distortion}, a strictly lower $\Delta^{SB}$
produces strictly lower posterior dispersion. The role of $D_v$ as a kink
magnitude mirrors the role of curvature in Proposition~\ref{prop:general_distortion}:
in both cases the distortion operates by reducing the incentive reward for
spreading beliefs, via curvature in the smooth case and via kink magnitude in
the piecewise-linear case.
\end{proof}

\begin{proof}[\textbf{Proof of Proposition~\ref{prop:two_state_distortion}}]
Identify beliefs with $\mu\in[0,1]$. Under Assumption~\ref{ass:entropy_cost},
the problem of maximizing $\int W(\mu)\,d\tau(\mu)-c(\tau)$ subject to
Bayes-plausibility is equivalent (up to the constant $H(\mu_0)$) to maximizing
\[
\int F(\mu)\,d\tau(\mu), \qquad F(\mu):=W(\mu)+H(\mu),
\]
where $H$ is strictly concave and $W$ is piecewise linear with a single kink
at $\bar\mu$, so $F$ is piecewise strictly concave.

\textbf{Step 1 (Two-posterior support).} Since the constraint is
one-dimensional, any feasible $\tau$ can be replaced by a distribution
supported on at most two points without changing the mean. Any optimal solution
can therefore be chosen of the two-point form stated in the proposition.

\textbf{Step 2 (Straddling the kink).} If both endpoints lie strictly on the
same side of $\bar\mu$, strict concavity of $F$ on that side implies the
degenerate experiment $\delta_{\mu_0}$ yields strictly higher objective value,
contradicting non-degeneracy. Hence any non-degenerate optimum satisfies
$\mu_L^{SB} \le \bar\mu \le \mu_H^{SB}$.

\textbf{Step 3 (Tangency conditions).} At an interior optimum the chord
connecting $(\mu_L^{SB},F(\mu_L^{SB}))$ and $(\mu_H^{SB},F(\mu_H^{SB}))$ is
tangent to $F$ at both endpoints:
\[
m_-+H'(\mu_L^{SB})=\ell=m_++H'(\mu_H^{SB}),
\]
where $\ell$ is the common secant slope. Subtracting yields the endpoint
equation $H'(\mu_L^{SB})-H'(\mu_H^{SB})=\Delta_W$.

\textbf{Step 4 (Dispersion is strictly increasing in the kink magnitude).}
Because $H$ is strictly concave, $H'$ is strictly decreasing. Let
$G(x):=H'(x)$. The gap condition becomes $G(\mu_L)-G(\mu_H)=\Delta_W$. For
$\Delta_2>\Delta_1>0$, if the spread for $\Delta_2$ were no larger than for
$\Delta_1$, then since $G$ is strictly decreasing, the gap $G(\mu_L)-G(\mu_H)$
for $\Delta_2$ would be no larger than for $\Delta_1$, contradicting
$\Delta_2>\Delta_1$. Hence dispersion is strictly increasing in $\Delta_W$.

\textbf{Step 5 (Apply to first-best vs.\ second-best).} Both problems share
the same $H$ and kink location $\bar\mu$. Since $\Delta^{SB}<\Delta^{FB}$, by
Step 4:
\[
\mu_H^{SB}-\mu_L^{SB}<\mu_H^{FB}-\mu_L^{FB}.
\]
By strict concavity of $H$, a strict reduction in spread holding the mean
fixed implies a strict increase in expected entropy, so $\tau^{SB}$ is a
strict entropy compression of $\tau^{FB}$.
\end{proof}

\begin{proof}[\textbf{Proof of Corollary~\ref{cor:blackwell_binary}}]
In the binary-state case with at most two-point support and common mean
$\mu_0$, the Blackwell order coincides with the mean-preserving spread order:
$\tau^{FB} \succeq_B \tau^{SB}$ if and only if
$\mu_H^{FB}-\mu_L^{FB} \ge \mu_H^{SB}-\mu_L^{SB}$. Because $H$ is strictly
concave, strict entropy compression (i.e., $\int H\,d\tau^{SB} > \int
H\,d\tau^{FB}$) implies a strict reduction in posterior spread, which implies
strict Blackwell dominance. By
Proposition~\ref{prop:two_state_distortion}, moral hazard strictly reduces the
kink magnitude when $\Delta^{SB} < \Delta^{FB}$, so $\mu_H^{SB}-\mu_L^{SB}
< \mu_H^{FB}-\mu_L^{FB}$, establishing strict dominance whenever the supports
differ.
\end{proof}

\begin{proof}[\textbf{Proof of Corollary~\ref{cor:entropy_blackwell_equivalence}}]
We establish the cycle of equivalences (i)$\Leftrightarrow$(ii)$\Leftrightarrow$(iii).

\emph{(i)$\Leftrightarrow$(ii).} For any two-point distribution with mean
$\mu_0$ and endpoints $(\mu_L, \mu_H)$ with $\mu_L \le \mu_0 \le \mu_H$, the
mixing probability is $p = (\mu_0 - \mu_L)/(\mu_H - \mu_L)$ and the expected
entropy is $p H(\mu_H) + (1-p) H(\mu_L)$. Because $H$ is strictly concave and
the mean is held fixed, expected entropy is strictly increasing in posterior
spread $\mu_H - \mu_L$: a mean-preserving spread of a two-point distribution
always increases expected entropy. Hence (i) and (ii) are equivalent.

\emph{(ii)$\Leftrightarrow$(iii).} In the binary-state case, any two-point
Bayes-plausible experiment is fully characterized by its mean $\mu_0$ and
spread $\mu_H - \mu_L$. A garbling of experiment $\tau$ into $\tau'$ (with the
same mean) exists if and only if the spread of $\tau$ is weakly larger than
that of $\tau'$. This is the standard characterization of Blackwell dominance
for binary experiments with two-point support: the more spread-out experiment
can be garbled into the less spread-out one by a $2\times 2$ stochastic matrix,
while the reverse is impossible when the spread is strictly smaller. Hence (ii)
and (iii) are equivalent.

The equivalence breaks down outside $|\Omega|=2$ or with richer supports
because: (a) expected entropy depends on the full shape of the posterior
distribution, not just its spread; and (b) Blackwell dominance in higher
dimensions requires garbling by a full stochastic kernel, which cannot be
summarized by a single scalar comparison.
\end{proof}

\begin{proof}[\textbf{Proof of Proposition~\ref{prop:two_state_explicit}}]
Under Assumption~\ref{ass:entropy_cost} the second-best problem is equivalent
to maximizing $\int F(\mu)\,d\tau(\mu)$ with $F(\mu):=W(\mu)+H(\mu)$. By the
same Carath\'{e}odory argument as in
Proposition~\ref{prop:two_state_distortion}, the optimum is attained at a
two-posterior experiment $\tau^{SB} = p^{SB}\delta_{\mu_H^{SB}} +
(1-p^{SB})\delta_{\mu_L^{SB}}$ with $\mu_L^{SB}\le\bar\mu\le\mu_H^{SB}$.

At an interior optimum the chord is tangent to $F$ at both endpoints. Since
$W$ is linear on each side of $\bar\mu$:
\[
F'_-(\mu_L^{SB}) = m_-+H'(\mu_L^{SB}) = \ell, \qquad
F'_+(\mu_H^{SB}) = m_++H'(\mu_H^{SB}) = \ell,
\]
which are equations~\eqref{eq:tangency_left}--\eqref{eq:tangency_right}. The
secant slope expression~\eqref{eq:secant_slope} follows directly from the
definition of $\ell$. Subtracting the two tangency equations gives the
difference equation~\eqref{eq:Hprime_gap}. The mixing probability $p^{SB} =
(\mu_0-\mu_L^{SB})/(\mu_H^{SB}-\mu_L^{SB})$ follows from the mean
constraint.
\end{proof}

\begin{proof}[\textbf{Proof of Corollary~\ref{cor:shannon_closed_form}}]
For Shannon entropy $H(\mu) = -\mu\ln\mu-(1-\mu)\ln(1-\mu)$, we have
$H'(\mu) = \ln((1-\mu)/\mu)$. The difference
equation~\eqref{eq:Hprime_gap} becomes
\[
\ln\frac{1-\mu_L^{SB}}{\mu_L^{SB}} - \ln\frac{1-\mu_H^{SB}}{\mu_H^{SB}}
= \Delta_W,
\]
which simplifies to the odds-ratio
relation~\eqref{eq:odds_ratio_relation}. Solving for $\mu_H^{SB}$ with
$x = \mu_L^{SB}$:
\[
\frac{(1-x)y}{x(1-y)} = e^{\Delta_W}
\implies
y\bigl[(1-x)+xe^{\Delta_W}\bigr] = xe^{\Delta_W}
\implies
y = \frac{x}{x+(1-x)e^{\Delta_W}},
\]
which is~\eqref{eq:muH_explicit_shannon}. Continuity of $H'$ and the
boundary behavior $H'(\mu)\to-\infty$ as $\mu\to1^-$ guarantee that for any
admissible $\mu_L^{SB}\in(0,\bar\mu]$ there exists a unique
$\mu_H^{SB}>\mu_L^{SB}$ satisfying the tangency conditions and the mean
constraint.
\end{proof}

\begin{proof}[\textbf{Proof of Corollary~\ref{cor:optimal_transfers}}]
Write $\mu_L,\mu_H,p$ for $\mu_L^{SB},\mu_H^{SB},p^{SB}$ and $t_i$ for
$t^{SB}(a_i)$.

\medskip
\noindent\textbf{Step~0 (Reduced mediator payoff).}
Under Assumption~\ref{ass:mediator_ql}, the mediator's payoff at
$\tau=p\,\delta_{\mu_H}+(1-p)\,\delta_{\mu_L}$ is
\[
U_M(\tau;t)
=
p\bigl[t_1+\mathbb{E}_{\mu_H}[\tilde{v}]\bigr]
+(1-p)\bigl[t_0+\mathbb{E}_{\mu_L}[\tilde{v}]\bigr]
-c(\tau).
\]
Under Assumption~\ref{ass:entropy_cost},
$c(\tau)=H(\mu_0)-pH(\mu_H)-(1-p)H(\mu_L)$.
Because $\tilde{v}(\omega)$ is action-independent, the map
$\mu\mapsto \mathbb{E}_{\mu}[\tilde{v}] = \tilde{v}(0)(1-\mu)+\tilde{v}(1)\mu$
is affine in $\mu$. By Bayes plausibility,
$p\,\mu_H+(1-p)\,\mu_L=\mu_0$, and hence
$p\,\mathbb{E}_{\mu_H}[\tilde{v}]+(1-p)\,\mathbb{E}_{\mu_L}[\tilde{v}]
= \mathbb{E}_{\mu_0}[\tilde{v}]$,
which is constant with respect to $(\mu_L,\mu_H,p)$. Therefore the
mediator's payoff reduces to
\begin{equation}
\label{eq:med_reduced_clean_updated}
U_M
=
p\bigl[t_1+H(\mu_H)\bigr]
+(1-p)\bigl[t_0+H(\mu_L)\bigr]
+\mathbb{E}_{\mu_0}[\tilde{v}]
-H(\mu_0).
\end{equation}

\medskip
\noindent\textbf{Step~1 (Incentive condition).}
For fixed $(\mu_L,\mu_H)$, the reduced payoff~\eqref{eq:med_reduced_clean_updated}
is affine in $p$. Since the second-best experiment is interior with
$p^{SB}\in(0,1)$, optimality requires that the slope with respect to $p$ be
zero:
\[
\bigl[t_1+H(\mu_H)\bigr]-\bigl[t_0+H(\mu_L)\bigr]=0.
\]
Hence $t_1-t_0=H(\mu_L)-H(\mu_H)$, which proves~\eqref{eq:transfer_incentive}.

\medskip
\noindent\textbf{Step~2 (Participation condition).}
By Theorem~\ref{thm:secondbest_distorted}, the participation constraint binds
at the second-best optimum, so $U_M=\bar U$. Substituting the incentive
condition from Step~1 into~\eqref{eq:med_reduced_clean_updated} gives
\begin{align*}
\bar U
&=
p\bigl[t_0+H(\mu_L)-H(\mu_H)+H(\mu_H)\bigr]
+(1-p)\bigl[t_0+H(\mu_L)\bigr]
+\mathbb{E}_{\mu_0}[\tilde{v}]-H(\mu_0)\\
&=
t_0+H(\mu_L)+\mathbb{E}_{\mu_0}[\tilde{v}]-H(\mu_0).
\end{align*}
Solving for $t_0$ yields
$t_0 = \bar U-\mathbb{E}_{\mu_0}[\tilde{v}]+H(\mu_0)-H(\mu_L)$,
which is the first line of~\eqref{eq:transfer_participation}. The
expression for $t_1$ then follows from the incentive condition:
$t_1 = \bar U-\mathbb{E}_{\mu_0}[\tilde{v}]+H(\mu_0)-H(\mu_H)$.

\medskip
\noindent\textbf{Step~3 (Shadow price).}
By Theorem~\ref{thm:secondbest_distorted}, $\tau^{SB}$ solves the
distorted persuasion problem with index
$W(\mu;t^{SB},\gamma^{SB}) = V_P(\mu;t^{SB})-\gamma^{SB}V_M(\mu;t^{SB})$.
Hence Proposition~\ref{prop:two_state_explicit} applies: the second-best
experiment is characterized by tangency of $W+H$ at the support points.
Evaluating the right-endpoint tangency condition at $\mu_H$ yields
equation~\eqref{eq:shadow_price}.

\medskip
\noindent\textbf{Step~4 (Uniqueness of $\gamma^{SB}$).}
Since $W(\mu;t^{SB},\gamma^{SB}) = V_P(\mu;t^{SB})-\gamma^{SB}V_M(\mu;t^{SB})$,
the function $W$ is affine in $\gamma^{SB}$, and so is the slope $m_+$.
Therefore both sides of~\eqref{eq:shadow_price} are affine in $\gamma^{SB}$,
making it a linear equation in $\gamma^{SB}$ that admits a unique solution
whenever the coefficient on $\gamma^{SB}$ is nonzero.

\medskip
\noindent\textbf{Step~5 (Positivity of $\gamma^{SB}$).}
By Theorem~\ref{thm:secondbest_distorted}, if the participation constraint
binds at the optimum, then the associated multiplier is strictly positive.
Hence $\gamma^{SB}>0$.
\end{proof}

\newpage
\bibliographystyle{apalike}
\bibliography{FinalVersion}


\end{document}